\newtheorem{theorem}{Theorem}[section]
\newtheorem{lemma}{Lemma}[section]
\newtheorem{definition}{Definition}[section]
\newtheorem{remark}{Remark}[section]
\newcommand{\specialcell}[2][c]{\begin{tabular}[#1]{@{}c@{}}#2\end{tabular}}
\def \X{{\bm{X}}}
\def \x{{\bm{x}}}
\def \Y{{\bm{Y}}}
\def \y{{\bm{y}}}
\def \Z{{\bm{Z}}}
\def \Sone{{S^1}}
\def \Stwo{{S^2}}
\def \bS{{\beta_S}}
\def \bone{b_1}
\def \btwo{b_2}
\def \bmin{{b_{\min}}}
\def \bmax{{b_{\max}}}
\def \SNR{{\text{SNR}}}
\def \So{{S_\omega}}
\def \St{{\tilde{S}}}
\def \SSt{{S \setminus \tilde{S}}}
\def \dx{\, \mathrm{d}}
\def \om{\omega}
\def \omh{\hat{\omega}}
\begin{document}

\title{Sparse Signal Processing with Linear and Nonlinear Observations: A Unified Shannon-Theoretic Approach}

\author{Cem Aksoylar, George Atia and Venkatesh Saligrama%
  \thanks{This work was partially supported by NSF Grants CCF-1320547 and CNS-1330008, the U.S.\ Department of Homeland Security, Science and Technology Directorate, Office of University Programs, under Grant Award 2013-ST-061-ED0001, by ONR Grant 50202168 and US AF contract FA8650-14-C-1728. The material in this paper was presented in part at the 2013 IEEE Information Theory Workshop, Seville, Spain, September 2013.}}

\IEEEpubid{Copyright~\copyright~2016 IEEE}

\maketitle

\begin{abstract}
  We derive fundamental sample complexity bounds for recovering sparse and structured signals for linear and nonlinear observation models including sparse regression, group testing, multivariate regression and problems with missing features. In general, sparse signal processing problems can be characterized in terms of the following Markovian property. We are given a set of $N$ variables $X_1,X_2,\ldots,X_N$, and there is an unknown subset of variables $S \subset \{1,\ldots,N\}$ that are \emph{relevant} for predicting outcomes $Y$. More specifically, when $Y$ is conditioned on $\{X_n\}_{n\in S}$ it is conditionally independent of the other variables, $\{X_n\}_{n \not \in S}$. 
  Our goal is to identify the set $S$ from samples of the variables $X$ and the associated outcomes $Y$. We characterize this problem as a version of the noisy channel coding problem. Using asymptotic information theoretic analyses, we establish mutual information formulas that provide sufficient and necessary conditions on the number of samples required to successfully recover the salient variables. These mutual information expressions unify conditions for both linear and nonlinear observations. We then compute sample complexity bounds for the aforementioned models, based on the mutual information expressions in order to demonstrate the applicability and flexibility of our results in general sparse signal processing models.
\end{abstract}

\section{Introduction}

In this paper, we are concerned with the asymptotic analysis of the sample complexity in problems where we aim to identify a set of \emph{salient} variables responsible for producing an outcome. In particular, we assume that among a set of $N$ variables/features $X = (X_1,\ldots,X_N)$, only $K$ variables (indexed by set $S$) are directly relevant to the outcome $Y$. We formulate this concept in terms of Markovianity, namely, given $X_S = \{X_n\}_{n \in S}$, the outcome $Y$ is independent of the other variables $\{X_n\}_{n \not\in S}$, i.e.,
\begin{equation} \label{eq:markov}
P(Y|X,S) = P(Y|X_S,S).
\end{equation}
We also explicitly consider the existence of an independent latent random quantity affecting the observation model, which we denote with $\bS$. Similar to \eqref{eq:markov}, with this latent factor we have the observation model
\begin{equation}\label{eq:markov_b}
  P(Y | X, \bS, S) = P(Y | X_S, \bS, S).
\end{equation}
Note that the existence of such a latent factor does not violate \eqref{eq:markov}.
Abstractly, the set of salient variables $S$ is generated from a distribution over a collection of sets, ${\cal S}$, of size $K$. For unstructured sparse problems, ${\cal S}$ is the collection of all $K$-sets in $\{1,\ldots,N\}$ while for structured problems ${\cal S}$ is a subset of this unstructured $K$-set collection. The latent factor $\bS$ (if it exists) is generated from a distribution $p(\bS) \triangleq P(\bS \mid S)$. Then independent and identically distributed (IID) samples of $X$ are generated from distribution $Q(X)$, and for each sample an observation $Y$ is generated using the conditional distribution $P(Y | X_S, \bS, S)$ conditioned on $X$, $\bS$ and $S$, as in \eqref{eq:markov_b}. The set $S$ and $\bS$ are fixed across different samples of variables $X$ and corresponding observations $Y$.

We assume we are given $T$ sample pairs $(X, Y)$ denoted as $(\X, \Y)$ and the problem is to identify the set of salient variables, $S$, from these $T$ samples given the knowledge of the observation model $P(Y|X_S,\bS,S)$ and $p(\bS)$. Our analysis aims to establish necessary and sufficient conditions on $T$ in order to recover the set $S$ with an arbitrarily small error probability in terms of $K$, $N$, the observation model and other model parameters such as the signal-to-noise ratio. We consider the average error probability, where the average is over the random $S$, $\bS$, $\X$ and $\Y$. In this paper, we limit our analysis to the setting with IID variables $X$ for simplicity. It turns out that our methods can be extended to the dependent case at the cost of additional terms in our derived formulas that compensate for dependencies between the variables. Some results derived for the former setting were presented in \cite{ssp} and more recently in \cite{aistats}.

\begin{figure}[h]
  \centering
  \includegraphics[width=0.4\textwidth]{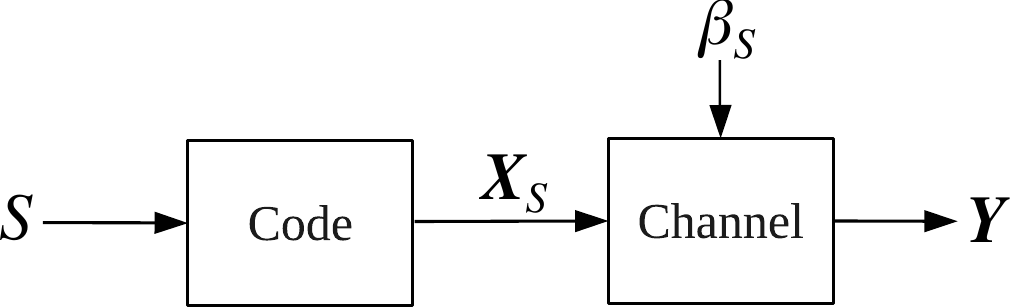}
  \caption{Channel model for structured \& sparse recovery. The support $S$ of the sparse signal is a $K$-set. We map it to a message transmitted through a channel. The encoder encodes $S$ as a matrix, $\X_S$, of size $T \times K$. The coded message $\X_S$ is transmitted through a channel $P(Y|X_S,\bS,S)$ resulting in output $\Y$. As in channel coding, our aim is to identify which message $S$ was transmitted given the channel output $\Y$ and the codebook $\X$. The set $S$ belongs to a family of $K$-sets ${\cal S}$ and can account for combinatorial structural information.}
  \label{fig:channel_model}
\end{figure}

The analysis of the sample complexity is performed by posing this identification problem as an equivalent channel coding problem, as illustrated in Figure \ref{fig:channel_model}.
The sufficiency and necessity results we present in this paper are analogous to the channel coding theorem for memoryless channels \cite{coverbook}. Before we present exact statements of our results, it is useful to mention that these results are of the form
\begin{equation} \label{eq:highlevel}
  T > \max_{\St \subset S} \frac{\log\binom{N-|\St|}{|S \setminus \St|}}{I_\St},
\end{equation}
where $I_\St = \mathrm{ess} \inf_b I(X_\SSt ; Y | X_\St, \bS = b, S)$ is the \emph{worst-case} (w.r.t.\ $\bS$) mutual information between the observation $Y$ and the variables $X_\SSt$ that are in $S$ but not in $\St$, conditioned on variables in $\St$. 
For each subset $\St$ of $S$, this bound can be interpreted as follows: The numerator is the number of bits required to represent all sets $S$ of size $K$ given that its subset $\St$ is already known. In the denominator, the mutual information term represents the uncertainty reduction in the output $Y$ given the remaining input $X_\SSt$ conditioned on a known part of the input $X_\St$, in bits per sample. This term essentially quantifies the ``capacity'' of the observation model $P(Y|X_S,\bS,S)$. Then, the number of samples $T$ should exceed this ratio of total uncertainty to uncertainty reduction per sample for each subset $\St$ to be able to recover $S$ exactly. 

In addition to adapting the channel coding error analysis to the general sparse recovery problem, we also show that the sample complexity is characterized by the \emph{worst-case} rather than average mutual information w.r.t.\ $\bS$, as in \eqref{eq:highlevel} even in our Bayesian setting. We prove that satisying the worst-case bound is both necessary and sufficient for recovery in Sections \ref{sec:necessary} and \ref{sec:sufficient}.

Sparse signal processing models analyzed in this paper have wide applicability. We can account for linear, nonlinear observation models as well as structured settings in a unified manner. Upper and lower bounds for sample complexity for these problems are reduced to computing tight bounds for mutual information (see \eqref{eq:highlevel}).
Below we list some examples of problems which can be formulated in the described framework. Further details concerning the analysis of specific models are provided in Sections \ref{sec:apps_linear} and \ref{sec:apps_nonlinear}. 

\noindent\textbf{Linear channels:}
The prototypical example of a linear channel arises in sparse linear regression \cite{donoho}. Here, the output vector $\Y$ is a linearly transformed $K$-sparse vector $\beta$ with additive noise $\bm{W}$, 
\begin{equation}\label{eq:linCS}
  \Y = \X\beta+\bm{W}.
\end{equation}
We can map this problem to the noisy channel coding (see Fig.~\ref{fig:channel_model}) setup by identifying the support of the sparse vector with the set $S \subset \{1,\ldots,N\}$, the non-zero elements of the support vector with the latent factor ($\bS$), the columns of the matrix $\X$ with codewords that map each ``message'' $S$ into a coded vector $\X_S$. 
It is easy to see that the Markovianity property \eqref{eq:markov_b} holds. The goal is to decode the message $S$ given the codeword matrix $\X$ and channel output $\Y$. 
The channel is linear due to the additive nature of noise. Several variants of this problem can be framed in our setting including sparse or correlated sensing matrices, correlated latent factors, and time-varying latent factors. Correlated sensing matrices arise in many applications due to correlated features and time-varying latent factors can arise in temporal scenarios where latent (nuisance) factors $\bS$ that can vary across different measurements. Note that these variations can be directly mapped to a channel coding formulation by identifying the appropriate probability kernels as in \eqref{eq:markov_b} thus leading to a framework with wide applicability. 

\noindent\textbf{Nonlinear channels:}
In nonlinear channels, the relationship between the observed output $\Y$ and the inputs is not linear. A typical example is the quantized regression problem. Here, the channel output is quantized, namely, $\Y = q(\X \beta + \bm{W})$, where $q(\cdot)$ is a measurement quantizer. Again, we can map this setting to the noisy channel coding framework in an identical manner. 

While the nonlinearity here arises due to quantization, our setup can account for other types of nonlinearities such as Boolean operations.
Group testing \cite{group_testing} is a form of sensing with Boolean arithmetic, where the goal is to identify a set of defective items among a larger set of items.
In an ideal setting, the result of a test (channel output) is positive if and only if the subset contains a positive sample.
The group testing model can also be mapped to the noisy channel coding framework. Specifically, we take a Boolean AND operation over the codewords (rather than a linear transformation) corresponding to the sparse subset $S$.
Note that this is an example of a model where a latent factor $\bS$ does not exist in the observation model or can be considered trivial.
Variants of the nonlinear channel including different noise processes, correlations and time-varying latent factors can also be mapped to our setup.

\noindent\textbf{Noisy features \& missing information:}
Often in many scenarios \cite{loh} the features corresponding to components of the sensing matrix $\X$ are missing. This could occur in medical records where information about some of the tests may not been recorded or conducted. 
We can map this setting as well to our framework and establish corresponding sample complexity bounds.
Specifically, we observe a $T \times N$ matrix $\Z$ instead of $\X$, with the relation
\begin{equation}
Z_i^{(t)} = 
\left\{
\begin{array}{ll}
        X_i^{(t)}, & \mbox{w.p.\  } 1 - \rho\\
        m, & \mbox{w.p.\  } \rho
\end{array}\right. \quad \forall i, t
\notag\\
\end{equation}
i.e., we observe a version of the feature matrix which may have missing entries (denoted by $m$) with probability $\rho$, independently for each entry. Note that $m$ can take any value as long as there is no ambiguity whether the realization is missing or not, e.g., $m=0$ would be valid for continuous variables where the variable taking value $0$ has zero probability. We also remark that if a problem satisfies assumption \eqref{eq:markov} with IID variables $X$, the same problem with missing features also satisfies the assumption with variables $Z$. Interestingly, our analysis shows that the sample complexity, $T_\text{miss}$ for problems with missing features is related to the sample complexity, $T$, of the fully observed case with no missing features by the simple inequality
\[ T_\text{miss} \geq \frac{T}{1-\rho}. \]

\noindent\textbf{Structured sparse settings:}
In addition to linear and nonlinear settings and their variants we can also deal with structural information on the support set $S$ such as group sparsity and subgraph connectivity. 

For instance, consider the multiple linear regression problem with sparse vectors sharing the same support \cite{negahban,dcs}. The message set can be viewed as belonging to a subset of the family of binary matrices, namely, $S \in {\cal S} \subset \{0,1\}^{N \times R}$ for $R$ problems where the collection ${\cal S}$ is a structured family of binary matrices. In the simple case the structural information arises from the fact that the collection ${\cal S}$ is a finite collection of binary matrices such that all its columns have identical support. This problem can be viewed as $R$ linear regression problems and can be expressed for linear channels as $\Y_{\{r\}} = \X_{\{r\}} \beta_{\{r\}} + \bm{W}_{\{r\}},~r=1,\ldots,R$.
For each $r$, $\beta_{\{r\}} \in \mathbb{R}^N$ is a $K$-sparse vector, $\X_{\{r\}} \in \mathbb{R}^{T \times N}$, $\Y_{\{r\}} \in \mathbb{R}^T$ and the relation between tasks is that $\beta_{\{r\}}, r = 1, \ldots, R$ have the same support. One could also extend this framework to other group-sparse problems by imposing various constraints on the collection ${\cal S}$.

An interesting case is where the structural information can be encoded with respect to an $N$-node graph $G=(V,E)$. Here, we can consider the collection ${\cal S}$ as the family of all \emph{connected} subgraphs of size $K$. Thus $S$ is a $K$-set of $K$ nodes whose induced subgraph is connected. These are problems that can arise in many interesting scenarios \cite{nips14,aistats14} such as disease outbreak detection, medical imaging and inverse problems, where the underlying signal must satisfy connectivity constraints. In essence our methods apply to these cases as well. Indeed, our sample complexity expressions in these structured cases reduce to
\begin{equation} \label{eq:highlevel_struc}
  T > \max_{\St \subset S} \frac{\log |{\cal S}_{\St}|}{I_\St},
\end{equation}
where ${\cal S}_{\St} = \{ S' \in {\cal S} : S' \supset \St \}$. Intuitively, ${\cal S}_{\St}$ is the set of all structures that are consistent with the partially recovered set $\St$. 
This generalization reveals a \emph{key aspect} of our formulation. Specifically, we can think of inference problems where the goal is to decode elements that belong to some combinatorial structure. We know how the combinatorial structure manifests as encoded message $\X_S$ for feasible $S \in {\cal S}$. Our bound shows that only the numerator changes (see \eqref{eq:highlevel} and \eqref{eq:highlevel_struc}). Intuitively this modification accounts for the number of feasible $K$-sets.

\section{Related Work and Summary of Contributions}
The dominant stream of research on sparse recovery focuses on the \emph{linear compressive sensing (CS) model}, often with mean-squared \emph{estimation} of the sparse vector $\beta$ in \eqref{eq:linCS} with sub-Gaussian assumptions on the variables $X$.

While the linear model is well-studied, research on information-theoretic limits of general nonlinear models is in its early stages and primarily limited to specific models such as Boolean group testing \cite{group_testing} and quantized compressive sensing \cite{1bit,planvershynin}. In this paper we seek to understand the fundamental information-theoretic limits for \emph{generalized models} of (both linear and nonlinear) sparse signal processing using a unifying framework that draws an analogy between channel coding and the problem of sparse support recovery. Our main results on mutual information characterizations of generalized models, stated in Theorems \ref{thm:eps_lb} and \ref{thm:latent_ub} are then used to derive both necessary and sufficient conditions for specific models, including the linear model (Section \ref{subsec:lcs}), 1-bit CS (Section \ref{subsec:1bit}), models with missing features (Section \ref{subsec:missing}), group testing (Section \ref{subsec:group_testing}). The derived bounds are often shown to match existing sufficiency and necessity conditions, or provide tighter bounds for some of these models. As such, our abstract bounds provide a tight characterization of sample complexity and the gap between our upper and lower bounds in some applications results from the difficulty in finding tight upper and lower bounds for mutual information. Below we provide a brief discussion of related prior work for both linear and nonlinear sparsity-based models, then provide a summary of contributions and contrast to prior work.

\subsection{Linear Model} \label{sec:linmod}

This literature can be broadly classified into two main categories. The first category is primarily focused on the analysis of \emph{computationally tractable} algorithms for support recovery and deriving both sufficient and necessary conditions for these algorithms (see \cite{wainwright-lasso,goyal,candesplan,zhaosrvit}).
The second category, which is more relevant to our work, is focused on the complementary task of characterizing the fundamental limits for sparse recovery regardless of the computational complexity of the used algorithms. The importance of this line of work lies in assessing the behavior of tractable algorithms and uncovering the performance gaps from the information-theoretic limits in various regimes. 

\noindent
\textbf{Necessary condition \& Fano's inequality:}
In this line of work \cite{akcakaya,wainwright,wang,shuchin,reeves2} lower bounds for sample complexity are derived by invoking various forms of Fano's inequality for different scenarios including Gaussian ensembles, sparse sensing matrices, high/low SNRs, and linear or sublinear sparsity regimes.
Our lower bound follows the proof of Fano's inequality. However, the main difference between these existing bounds and ours is in how we account for the latent factor $\bS$. It turns out that if one were to apply the standard forms of the Fano's inequality as in the existing literature it results in ``averaging'' out the effect of latent factors leading to standard mutual information expressions between message and output alphabets. Nevertheless, these resulting bounds are too loose. Intuitively, $\bS$ can be thought of as the unknown state of a compound channel and if bad realizations have non-zero probability, this must factor into the lower bound. Using this intuition, we derive a novel lower bound for the sample complexity. This lower bound surprisingly is simple and explicitly shows that the worst-case conditional mutual information over $\bS$ realizations quantifies sample complexity.

\noindent
\textbf{Sufficiency bounds with ML decoder---bypassing $\bS$ estimation:}
\cite{shuchin,Rad-11,yhkim} derive sufficient conditions for support recovery for structured and unstructured Gaussian ensembles using an exhaustive search decoder, which searches for the best fit among different choices of $S$ and $\bS$.  
Intuitively, this setup amounts to explicit estimation of the latent variable $\bS$ in the process of identifying $S$.

In contrast, we employ an ML decoder where the latent variable $\bS$ is part of the channel and plays the role of a latent variable. 
We bypass the $\bS$ estimation step and focus our attention on the recovery of the discrete combinatorial component $S$ (the channel input), while the effects of the support coefficients $\bS$ with prior $P(\bS | S)$ are incorporated to the channel model $P(Y | X, S)$ in a Bayesian framework such that
\begin{equation*}
  P(Y | X, S) \!=\! P(Y | X_S, S) \!=\! \int P(Y|X_S,\bS,S) p(\bS) \dx \bS.
\end{equation*}
Our resulting bounds explicitly demonstrate that identifying the support dominates the sample complexity for sparse recovery. This is intuitive because one can reliably estimate the underlying latent variable $\bS$ using least-square estimates or other variants once the support is known.

In this context our approach is closely related to that of \cite{tang}, where they also bypass $\bS$ estimation step. They formulate support recovery as a multiple hypothesis testing problem with $\binom{N}{K}$ hypotheses, each hypothesis corresponding to one possible support set of size $K$. They derive lower and upper bounds on the decoding error probability for a multiple measurement model (assuming the availability of multiple temporal samples) using Fano's inequality and Chernoff bound. The performance analysis is derived conditioned on a specific realization of the measurement matrix $\X$ and $\beta$ is Gaussian. In contrast to our work, this paper is focused on the scaling of the number of temporal samples, but not on the scaling of $T,N,K$. Furthermore, the paper exploits the additive Gaussian noise and linearity of the channel structure for deriving the bounds. In contrast, our method is general and extends to nonlinear channels as well.

\subsection{Nonlinear Models, MAC Channels \& Unifying Framework}

In the sparse recovery literature there have been a few works that have focused on nonlinear models such as 1-bit quantization \cite{1bit,jacques,planvershynin} and group testing \cite{jaggi}. Nevertheless, the focus of these works has been on computationally tractable algorithms.
Our approach is more closely related to the channel coding framework of \cite{group_testing,malyutov,malyutov1,malyutov2,malyutov3,dyachkov_lectures} in the context of group testing.
More generally, our approach bears some similarities to multi-user multi-access communication (MAC) systems literature \cite{coverbook}. 

While our approach is inspired by this literature, there are major differences in terms of scope, results, and proof techniques. As such, our setup does not directly fall into the MAC setting because our sparse recovery problem is in essence an on-off MAC channel where $K$ out of $N$ users can be active. Unlike the multi-user setting where all the users are known and the goal is to decode each user's codeword, we do not know which of the $K$ users are active. Alternatively, as observed in \cite{yhkim,group_testing} unlike MAC where each user chooses from a different codebook, here all users choose from the same codebook. Furthermore, unlike the MAC setting where the channel gains are assumed to be known at the decoder, we do not know the latent factors in our setting. 

On the other hand the group-testing approaches are not directly applicable as well. One issue is that they are not tractable for general discrete and continuous alphabets considered in this paper. Second, group-testing does not involve a compound channel. Consequently, the lower bounds in \cite{group_testing} are too loose and latent factors play a fundamental role as seen from our lower bounds (see Section \ref{sec:linmod}). Third, from a technical perspective, unlike \cite{group_testing} our new analysis is not based on bounds of the second derivative of the error exponent, which turns out to be intractable in many cases. Instead, we develop novel analysis techniques that exploit the equicontinuity of the error exponent function. The new results do not require problem specific computation other than satisfying generic regularity conditions (see Definition \ref{def:F_N}), which are shown to be easily satisfied for a wide range of models as derived in the applications section (c.f.\ Sections \ref{sec:apps_linear} and \ref{sec:apps_nonlinear}). To the best of our knowledge, this paper provides the first unifying information-theoretic characterization for support recovery for sparse and structured signal processing models with linear and nonlinear observations. Our approach unifies the different sparse models based on the conditional independence assumption in \eqref{eq:markov}, and simple mutual information expressions (Theorems \ref{thm:eps_lb} and \ref{thm:latent_ub}) are shown to provide an exact characterization of the sample complexity for such models.

\subsection{Key Insights \& New Bounds}
We now present three key insights and describe new bounds we obtain with our approach.

\noindent
\textbf{Necessity of $\beta_{\min}$ assumption:}
Support pattern recovery guarantees in \cite{akcakaya,wainwright,wang,shuchin,reeves2} for linear channels require that the minimum value of $\beta_S$ is bounded from below by $\beta_{\min}$ but do not provide an explicit justification. While this is intuitive because these works rely on estimating $\beta_S$ for support recovery, it is unclear whether this is fundamental. Our sample complexity bounds provide an information-theoretic explanation for the necessity of this assumption for recovery with small probability of error. Notably our analysis considers the average error for Bayesian $\beta_S$, not worst-case for fixed and unknown $\beta_S$. 

\noindent
\textbf{Role of structure:}
A key contribution of our formulation is that it reveals the role of structure in inference problems in an explicit manner. In particular, we see that if our object of inference is to decode elements from some combinatorial structure, its role is limited to the cardinality of this structure (numerator in \eqref{eq:highlevel} and \eqref{eq:highlevel_struc}) with the mutual information expression remaining unchanged.

\noindent
\textbf{Role of sensing matrices, missing features, \& latent variables:}
Because of the simplicity of our expressions we can \emph{explicitly} study the impact of correlations in feature components of the sensing matrix (higher correlations leading to poorer sample complexity), correlations in latent variables (higher correlations lead to better sample complexity) and missing features.

\noindent
\textbf{New bounds and improvements over existing bounds:}
Our approach enables us to obtain new necessary and sufficient conditions for support recovery not considered before in the literature for various sparse models. In addition, we also improve upon existing bounds in many cases. For instance, 
for group testing, we are able to remove the additional polylog factors in $K$ arising in \cite{group_testing} (see Theorem \ref{thm:group_testing}) leading to tight upper and lower bounds for the problem.
We get sharper bounds for the missing features model with linear observations that improve over the bounds in \cite{caramanis} as shown in Theorem \ref{thm:missing_cs} and Remark \ref{remark:missing_cs}.
We also obtain tighter bounds for multilinear regression in Section \ref{subsec:multi}.
Some of these bounds are summarized in Table \ref{table:apps} and other cases are described in Sections \ref{sec:apps_linear} and \ref{sec:apps_nonlinear}. 
The generality of this framework and the simple characterization in terms of mutual information expressions enable applicability to other models not considered in this paper that could potentially lead to new bounds on sample complexity.

\begin{table}[tbp]
  \renewcommand{\arraystretch}{1.5}
  \caption{Sample complexity bounds derived through unifying results for the general model and specific applications for exact support recovery. Results for applications are presented and proved in the corresponding subsections in Sections \ref{sec:apps_linear} and \ref{sec:apps_nonlinear}.}
  \centering
  \setlength{\tabcolsep}{2pt}
  \begin{tabular}{| c | c | c |}
    \hline
    \textbf{Model} & \textbf{Sufficient conditions for $\Pr[\mathrm{error}] \to 0$} & \textbf{Necessary conditions for $\Pr[\mathrm{error}] \to 0$} \\ \hline
    General model  & $T > C \max\limits_{\St \subset S} \frac{\log\binom{N-K}{K-|\St|} + H_\frac{1}{2}(\bS)}{I_\St}$ & $T \geq \max\limits_{\St \subset S} \frac{\log\binom{N-K}{K-|\St|}}{I_\St}$ \bigstrut \\ \hline
    \specialcell{Sparse linear regression\footnotemark[1] with IID \\ or correlated $\bS$, $\bmin = \Theta(1/K)$}
    & $T = \Omega(K \log N)$ for $K = O(N)$\footnotemark[2]
    & $T = \Omega(K \log N)$ for $K = O(N)$ \\ \hline
    \multirow{2}{*}{\specialcell{Sparse linear regression\footnotemark[1] with IID \\ or correlated $\bS$, $\bmin = \Theta(\log K/K)$}}
    & $T \!=\! \Omega\left(\max\left\{\frac{K \log N}{\log K}, \frac{K \log(N/K)}{\log\log K} \right\}\right)$, $K \!=\! o(N)$\footnotemark[2]
    & $T \!=\! \Omega\left(\max\left\{\frac{K \log N}{\log K}, \frac{K \log(N/K)}{\log\log K} \right\}\right)$, $K \!=\! o(N)$ \\ \cline{2-3}
    & $T = \Omega(N)$ for $K = \Theta(N)$\footnotemark[3] & $T = \Omega(N)$ for $K = \Theta(N)$ \\ \hline
    Multivariate regression with $R$ problems & $T \geq \frac{T_\mathrm{single}}{R}$ & $T \geq \frac{T_\mathrm{single}}{R}$ \\ \hline
    Binary regression & $T = \Omega(K \log N)$ for $K = \Theta(1)$ & $T = \Omega(K \log(N/K))$ for $K = O(N)$ \\ \hline
    Group testing & $T = \Omega(K \log N)$ for $K = O(N)$ & $T = \Omega(K \log(N/K))$ for $K = O(N)$ \\ \hline
    General models with missing data w.p.\ $\rho$ & -- & $T \geq \frac{T_\mathrm{full}}{1-\rho}$ \\ \hline
    \specialcell{Sparse linear regression\footnotemark[1] with missing \\ data w.p.\ $\rho$ and $\bmin = \Theta(1/K)$}
      & \specialcell{$T = \Omega\left(\frac{K \log N}{\log\left(1 + \frac{1-\rho}{1+\rho}\right)}\right)$  for $K = O(N)$\footnotemark[2]}
      & \specialcell{$T = \Omega\left( \frac{K \log N}{1-\rho} \right)$ for $K = O(N)$} \\ \hline
  \end{tabular}
  \label{table:apps}
\end{table}

\footnotetext[1]{Using the setup of \cite{wang} and \cite{Rad-11} as described in Section \ref{subsec:lcs}.}
\footnotetext[2]{Holds as written for highly correlated $\bS$ and exact recovery, holds for $K = O(N/\log N)$ and for recovery with a vanishing fraction of support errors for IID $\bS$.}
\footnotetext[3]{Holds for highly correlated $\bS$.}

Preliminary results for the setting considered herein and extensions to other settings have previously been presented in \cite{ssp,icassp,itw,aistats,isit}. \cite{ssp} and more recently \cite{aistats} considered the setting with dependent variables, while \cite{isit} extended the lower bound analysis to sparse recovery with adaptive measurements.

We describe the organization of the paper. In Section \ref{sec:setup} we introduce our notation and provide a formal description of the problem. In Section \ref{sec:necessary} we state necessary conditions on the number of samples required for recovery, while in Section \ref{sec:sufficient} we state sufficient conditions for recovery. Discussions about the conditions derived in Sections \ref{sec:necessary} and \ref{sec:sufficient} are presented in Section \ref{sec:discussion}. Applications are considered in Sections \ref{sec:apps_linear} and \ref{sec:apps_nonlinear}, including bounds for sparse linear regression, group testing models, and models with missing data. We summarize our results in Section \ref{sec:conclusions}.

\section{Problem Setup}
\label{sec:setup}

\textbf{Notation.}
We use upper case letters to denote random variables, vectors and matrices, and we use lower case letters to denote realizations of scalars, vectors and matrices. Calligraphic letters are used to denote sets or collection of sets, usually sample spaces for random quantities. Subscripts are used for column indexing and superscripts with parentheses are used for row indexing in vectors and matrices. Bold characters denote multiple samples jointly for both random variables and realizations and specifically denote $T$ samples unless otherwise specified. Subscripting with a set $S$ implies the selection of columns with indices in $S$. Table \ref{table:notation} provides a reference and further details on the used notation. The transpose of a vector or matrix $a$ is denoted by $a^\top$. $\log$ is used to denote the natural logarithm and entropic expressions are defined using the natural logarithm, however results can be converted to other logarithmic bases w.l.o.g., such as base 2 used in \cite{group_testing}. The symbol $\subseteq$ is used to denote subsets, while $\subset$ is used to denote proper subsets.

Without loss of generality, we use notation for discrete variables and observations throughout the paper, i.e.\ sums over the possible realizations of random variables, entropy and mutual information definitions for discrete random variables etc.\ The notation is easily generalized to the continuous case by simply replacing the related sums with appropriate integrals and (conditional) entropy expressions with (conditional) differential entropy, excepting sections that deal specifically with the extension from discrete to continuous variables, such as the proof of Lemma \ref{lemma:P(Ei)} in Section \ref{subsec:continuous}.

\begin{table}[b!]
  \renewcommand{\arraystretch}{1.2}
  \caption{Reference for notation used}
  \centering
  \begin{tabular}{| l | c | c |}
    \hline
    & \textbf{Random quantities} & \textbf{Realizations} \\ \hline
    Variables & $X_1, \ldots, X_N$ & $x_1, \ldots, x_N$ \\
    $1\!\times\! N$ random vector & $X = (X_1, \ldots, X_N)$ & $x = (x_1, \ldots, x_N)$ \\
    $1 \!\times\! |S|$ random vector & $X_S$ & $x_S$ \\
    $T \!\times\! N$ random matrix & $\X$  & $\x$ \\
    $t$-th row of $\X$ & $X^{(t)}$ & $x^{(t)}$ \\
    $n$-th column of $\X$ & $\X_n$ & $\x_n$ \\
    $n$-th elt.\ of $t$-th row & $X_n^{(t)}$ & $x_n^{(t)}$ \\
    $T \!\times\! |S|$ sub-matrix & $\X_S$ & $\x_S$ \\
    Observation & $Y$ & $y$ \\
    $T \!\times\! 1$ observation vector & $\Y$ & $\y$ \\
    $t$-th element of $\Y$ & $Y^{(t)}$ & $y^{(t)}$ \\
    \hline
  \end{tabular}
  \label{table:notation}
\end{table}

\textbf{Variables.} 
We let $X = (X_1, X_2, \ldots, X_N) \in {\cal X}^N$ denote a set of IID random variables with a joint probability distribution $Q(X)$. We specifically consider discrete spaces ${\cal X}$ or finite-dimensional real coordinate spaces $\mathbb{R}^d$ in our results. To avoid cumbersome notation and simplify the expressions, we do not use subscript indexing on $Q(\cdot)$ to denote the random variables since the distribution is determined solely by the number of variables indexed.

\textbf{Candidate sets.} 
We index the different sets of size $K$ as $S_{\omega}$ with index $\omega$, so that $S_{\omega}$ is a set of $K$ indices corresponding to the $\omega$-th set of variables. Since there are $N$ variables in total, there are $\binom{N}{K}$ such sets, therefore $\omega\in{\cal I} \triangleq \left\{1,2,\ldots\binom{N}{K}\right\}$. This index set is isomorphic to ${\cal S}$ for unstructured problems. We assume the true set $S = \So$ for some $\omega \in {\cal I}$.

\textbf{Latent observation parameters.} 
We consider an observation model that is not fully deterministic and known, but depends on a latent variable $\bS \in {\cal B}^K$. 
We assume $\bS$ is independent of variables $X$ and has a prior distribution $P(\bS | S)$, which is independent of $S$ and symmetric (permutation invariant). We further assume that $\beta_k$ for $k \in S$ has finite R\'enyi entropy of order $1/2$, i.e.\ $H_{\frac{1}{2}}(\beta_k) < \infty$ and also that $H_\frac{1}{2}(\bS) = O(K)$.

\textbf{Observations.} 
We let $Y \in {\cal Y}$ denote an observation or outcome, which depends only on a small subset of variables $S\subset \{ 1,\ldots, N\}$ of known cardinality $|S| = K$ where $K \ll N$. In particular, $Y$ is conditionally independent of the variables given the subset of variables indexed by the index set $S$, as in \eqref{eq:markov}, i.e., $P(Y | X, S) = P(Y | X_S, S)$, where $X_S = \{X_k\}_{k\in S}$ is the subset of variables indexed by the set $S$. 
The outcomes depend on $X_S$ (and $\bS$ if exists) and are generated according to the model $P(Y | X_S, S)$ (or $P(Y | X_S, \bS, S)$). 

We further assume that the observation model is independent of the ordering of variables in $S$ such that
\[ P(Y | X_S = x_s, S) = P(Y | X_S = x_{\pi(S)}, S) \]
for any permutation mapping $\pi$, which allows us to work with sets (that are unordered) rather than sequences of indices. Also, the observation model does not depend on $S$ except through $X_S$, i.e.,
\[ P(Y | X_\So = x, \So) = P(Y | X_{S_{\hat{\omega}}} = x, S_{\hat{\omega}}) \]
for any $x \in {\cal X}^K$, $\omega, \hat{\omega} \in {\cal I}$.

We use the lower-case $p(\,\cdot\, | \,\cdot\,) = P(\,\cdot\, | \,\cdot\, , S)$ notation as a shorthand for the conditional distribution given the true subset of variables $S$. For instance, with this notation we have $p(Y | X_S) = P(Y | X_S, S)$, $p(Y | X_S, \bS) = P(Y | X_S, \bS, S)$, $p(\bS) = P(\bS | S)$ etc. Whenever we need to distinguish between the outcome distribution conditioned on different sets of variables, we use $p_\omega(\,\cdot\, | \,\cdot\,) = P(\,\cdot\, | \,\cdot\, , \So)$ notation, to emphasize that the conditional distribution is conditioned on the given variables, assuming the true set $S$ is $\So$.

We observe the realizations $(\x, \y)$ of $T$ variable-outcome pairs $(\X,\Y)$ with each sample realization $(x^{(t)}, y^{(t)})$ of $(X^{(t)},Y^{(t)})$, $t=1,2,\ldots,T$. The variables $X^{(t)}$ are distributed IID across $t=1,\ldots,T$. However, if $\bS$ exists, the outcomes $Y^{(t)}$ are independent for different $n$ only when conditioned on $\bS$. Our goal is to identify the set $S$ from the data samples and the associated outcomes $(\x,\y)$, with an arbitrarily small average error probability. 

\textbf{Decoder and probability of error.} 
We let $\hat{S}(\X,\Y)$ denote an estimate of the set $S$, which is random due to the randomness in $S$, $\X$ and $\Y$. 
We further let $P(E)$ denote the average probability of error, averaged over all sets $S$ of size $K$, realizations of variables $\X$ and outcomes $\Y$, i.e., 
\begin{equation*}
  P(E) = \Pr[\hat{S}(\X,\Y)\ne S] = \sum_{\omega \in {\cal I}} P(\omega)\Pr[\hat{S}(\X,\Y)\ne \So | \So].
\end{equation*}

\textbf{Scaling variables and asymptotics.} 
We let $N \in \mathbb{N}$, $K \triangleq K(N) \in \mathbb{N}$ be a function of $N$ such that $1 \leq K < N/2$ and $T \triangleq T(K,N) \in \mathbb{N}$ be a function of both $K$ and $N$. Note that $K$ can be a constant function in which case it does not depend on $N$. 
For asymptotic statements, we consider $N \to \infty$ and $K$ and $T$ scale as defined functions of $N$. We formally define the notion of \emph{sufficient} and \emph{necessary} conditions for recovery below (Definition \ref{def:suff_nec}).

\textbf{Conditional entropic quantities.}
We occasionally use conditional entropy and mutual information expressions conditioned on a fixed value or on a fixed set. For two random variables $U \in {\cal U}$ and $V \in {\cal V}$, we use the notation $H(U | V=v) = -\sum_u p(u|v) \log p(u|v)$ to denote the conditional entropy of $U$ conditioned on fixed $V=v$. For a measurable subset ${\cal V}' \subseteq {\cal V}$ of the space of realizations of $V$, $H(U | V \in {\cal V}') = -\frac{1}{P({\cal V}')} \sum_{v \in {\cal V}'} p(v) \sum_u p(u|v) \log p(u|v)$ denotes the conditional entropy of $U$ conditioned on $V$ being restricted to set ${\cal V}'$. Note that this is equivalent to the (average) conditional entropy $H(U | V)$ when ${\cal V}' = {\cal V}$. The differential entropic definitions for continuous variables follow by replacing sums with integrals, and conditional mutual information terms follow from the entropy definitions above.

\begin{definition}
\label{def:suff_nec}
  For a function $g \triangleq g(T,K,N)$, we say an inequality $g \geq 1$ (or $g > 1$) is a \emph{sufficient condition} for recovery if there exists a sequence of decoders $\hat{S}_N(\X, \Y)$ such that $\lim_{N \to \infty} P(E) = \lim_{N \to \infty} \Pr[\hat{S}_N(\X, \Y) \neq S] = 0$ for $g \geq 1$ (or $g > 1$) for sufficiently large $N$, i.e., for any $\epsilon > 0$, there exists $N_\epsilon$ such that for all $N > N_\epsilon$, $g \geq 1$ (or $g > 1$) implies $P(E) < \epsilon$. 

  Conversely, we say an inequality $g \geq 1$ (or $g > 1$) is a \emph{necessary condition} for recovery if when the inequality is violated, $\lim_{N \to \infty} P(E) > 0$ for any sequence of decoders.
\end{definition}

To recap, we formally list the main assumptions that we require for the analysis in the work below.
\begin{enumerate}[({A}1)]
  \item \textbf{Equiprobable support:} Any set $\So \subset \{1,\ldots,N\}$ with $K$ elements is equally likely \emph{a priori} to be the salient set.
  We assume we have no prior knowledge of the salient set $S$ among the $\binom{N}{K}$ possible sets.
  \item \textbf{Conditional independence and observation symmetry:} The observation/outcome $Y$ is conditionally independent of other variables given $X_S$ (variables with indices in $S$), i.e., $P(Y | X, S) = P(Y | X_S, S)$. For any permutation mapping $\pi$, $P(Y | X_S = x_s, S) = P(Y | X_S = x_{\pi(S)}, S)$, i.e., the observations are independent of the ordering of variables. This is not a restrictive assumption since the asymmetry w.r.t.\ the indices can be usually incorporated into $\bS$. In other words, the symmetry is assumed for the observation model when averaged over $\bS$. 
  We further assume that the observation model does not depend on $S$ except through $X_S$, i.e., $P(Y | X_\So = x, \So) = P(Y | X_{S_{\hat{\omega}}} = x, S_{\hat{\omega}})$ for any $x \in {\cal X}^K$, $\omega, \hat{\omega} \in {\cal I}$.
\item \textbf{IID variables:} The variables $X_1, \ldots, X_N$ are independent and identically distributed. While the independence assumption is not valid for all sparse recovery problems, many problems of interest can be analyzed within the IID framework, as in Sections \ref{sec:apps_linear} and \ref{sec:apps_nonlinear}.
  \item \textbf{IID samples:} The variables $X^{(t)}$ are distributed IID across $t=1,\ldots,T$.
  \item \textbf{Memoryless observations:} Each observation $Y^{(t)}$ at sample $t$ is independent of $X^{(t')}$ conditioned on $X^{(t)}$.
\end{enumerate}

In the next three sections, we state and prove necessary and sufficient conditions for the recovery of the salient set $S$ with an arbitrarily small average error probability. We start with deriving a necessity bound in Section \ref{sec:necessary}, then we state a corresponding sufficiency bound in Section \ref{sec:sufficient}. We discuss the extension of the sufficiency bound to scaling models in Section \ref{subsec:scaling} and we conclude with remarks on the derived results in Section \ref{sec:discussion}.

\section{Necessary Conditions for Recovery}
\label{sec:necessary}

In this section, we derive lower bounds on the required number of samples using Fano's inequality. First we formally define the following mutual information-related quantities, which we will show to be crucial in quantifying the sample complexity in Sections \ref{sec:necessary} and \ref{sec:sufficient}.

For a proper subset $\St$ of $S$, the \emph{conditional mutual information} conditioned on fixed $\bS = b \in {\cal B}^K$ is\footnote{\label{fn:thm}We refer the reader to Section \ref{sec:setup} for the formal definition of conditional entropic quantities. 
  We also note that it is sufficient to compute $I_\St(b)$ for one value of $\om$ (e.g.\ $S_1$) instead of averaging over all possible $S$, since the conditional mutual information expressions are identical due to our symmetry assumptions on the variable distribution and the observation model. Similarly, the bound need only be computed for one proper subset $\St$ for each $|\St| \in \{0,\ldots,K-1\}$, since our assumptions ensure that the mutual information is identical for all partitions $(\SSt, \St)$.
}
\begin{equation}\label{def:cond_MI}
  I_\St(b) = I(X_\SSt ; Y | X_\St, \bS = b, S),
\end{equation}
the \emph{average conditional mutual information} conditioned on $\bS \in B$ for a measurable subset $B \subseteq {\cal B}^K$ is 
\begin{equation}\label{def:avg_MI}
  I_\St(B) = I(X_\SSt ; Y | X_\St, \bS \in B, S), 
\end{equation}
and the \emph{$\varepsilon$-worst-case conditional mutual information} w.r.t.\ $\bS$ is 
\begin{equation}\label{def:worst_MI_q}
  I_{\St,\varepsilon} = \sup \{\alpha \in \mathbb{R}^+: \Pr[ b \in {\cal B}^K: I_\St(b) < \alpha ] \leq \varepsilon \}.
\end{equation}
Note that for $\varepsilon = 0$, $I_{\St,\varepsilon}$ reduces to the essential infimum of $I_\St(\cdot)$, $\mathrm{ess } \inf_{b \in {\cal B}^K} I_\St(b) = \sup \{\alpha \in \mathbb{R}^+: \Pr[ b \in {\cal B}^K: I_\St(b) < \alpha ] = 0 \}$.

We now state the following theorem as a tight necessity bound for recovery with an arbitrarily small probability of error.

\begin{theorem}\label{thm:eps_lb}
  For any $\frac{2}{\log(N-K+1)} \leq \varepsilon \leq 1$, if 
  \begin{equation}\label{eq:eps_lb}
    T < (1 - \varepsilon) \max_{\St \subset S} \frac{\log\binom{N-|\St|}{K-|\St|}}{I_{\St,\varepsilon}},
  \end{equation}
  then $P(E) > \frac{\varepsilon^2}{2}$.
\end{theorem}

This necessary condition implies that a worst-case condition on $\bS$ has to be satisfied for recovery with small average error probability, which we will show to be consistent with the upper bounds we prove in the following section. 

\begin{remark}\label{remark:zero_lb}
  A necessary condition for recovery with \emph{zero}-error in the limit can be easily recovered from this theorem by considering an arbitrarily small constant $\varepsilon > 0$. Then the mutual information term $I_{\St,\varepsilon}$ represents the worst-case mutual information barring subsets of ${\cal B}^K$ with an arbitrarily small probability and thus can be considered $I_{\St,0}$ for most problems. Therefore we essentially have that if
  \[ T < \max_{\St \subset S} \frac{\log\binom{N-|\St|}{K-|\St|}}{I_{\St,0}}, \]
  then $\lim_{N \to \infty} P(E) > 0$.
\end{remark}

To prove Theorem \ref{thm:eps_lb}, we first state and prove the following lemma that lower bounds $P(E)$ for any subset $B \subseteq {\cal B}^K$ and any tuple $(T,K,N)$.

\begin{lemma}\label{lemma:beta_lb}
  \begin{equation}\label{eq:beta_lb}
    P(E) \geq \Pr[ \bS \in B ] \left( 1 - \frac{T I_\St(B) + 1}{\log\binom{N-|\St|}{K-|\St|}} \right).
  \end{equation}
\end{lemma}

\begin{proof}
  Let the true set $S = \So$ for some $\omega \in {\cal I}$ and suppose a proper subset of elements of $\So$ is revealed, denoted by $\St$. We define the estimate of $\omega$ to be $\hat{\omega}=g(\X,\Y)$ and the probability of error in the estimation $P_e = P(E) = \Pr[\hat{\omega}\ne\omega]$. We analyze the probability of error conditioned on the event $\bS \in B$, which we denote with $P(E | B) = \Pr[\hat{\omega}\ne\omega | \bS \in B]$. We note that for continuous variables and/or observations we can replace the (conditional) entropy expressions with differential (conditional) entropy, as we noted in the problem setup.

  Conditioning on $\bS \in B$ using the notation established in Section \ref{sec:setup}, we can write
  \[ H(\om | \X, \Y, \St, \bS \in B) + H(E | \om, \X, \Y, \St, \bS \in B) = H(E | \X, \Y, \St, \bS \in B) + H(\om | E, \X, \Y, \St, \bS \in B), \]
  where $H(E | \om, \X, \Y, \St, \bS \in B) = 0$ since $E$ is completely determined by $\om$ and $\omh$ which is a function of $\X, \Y$.
  Upper bounding $H(E | \X, \Y, \St, \bS \in B) \leq 1$ and expanding $H(\om | E, \X, \Y, \St, \bS \in B)$ for $E = 0$ (denoting $\om = \omh$) and $E = 1$ ($\om \ne \omh$), we have
  \[ H(\om | \X, \Y, \St, \bS \in B) \leq 1 + P(E | B) \log\binom{N-|\St|}{K-|\St|}, \]
  since $H(\om | E = 0, \X, \Y, \St, \bS \in B) = 0$ and $H(\om | E = 1, \X, \Y, \St, \bS \in B) \leq H(\om | \St) = \log\binom{N-|\St|}{K-|\St|}$.

  For $H(\om | \X, \Y, \St, \bS \in B)$, we also have a lower bound from the following chain of inequalities:
  \begin{align*}
    H(\omega | \Y,\X,\St, \bS \in B) & = H(\omega | \St, \bS \in B) - I(\omega ; \Y,\X | \St, \bS \in B) \\
    & = H(\omega | \St, \bS \in B) - I(\omega ; \X | \St, \bS \in B) - I(\omega ; \Y | \X, \St, \bS \in B) \\
    & \stackrel{(a)}{=} H(\omega | \St, \bS \in B) - I(\omega ; \Y | \X ,\St, \bS \in B) \\
    & \stackrel{(b)}{=} H(\omega | \St, \bS \in B) -  (H(\Y | \X, \St, \bS \in B) -  H(\Y | \X, \omega, \bS \in B)) \\
    & \stackrel{(c)}{\geq} H(\omega | \St, \bS \in B) - (H(\Y | \X_{\St}, \St, \bS \in B) -  H(\Y | \X_\So, \omega, \bS \in B)) \\
    & \stackrel{(d)}{=} H(\omega | \St, \bS \in B) - I(\X_{\SSt}; \Y | \X_{\St}, \So, \bS \in B) \\
    & = \log\binom{N-|\St|}{K-|\St|} - I(\X_\SSt ; \Y | \X_\St, \So, \bS \in B),
  \end{align*}
  where (a) follows from the fact that $\X$ is independent of $\St$ and $\omega$; (b) follows from the fact that conditioning on $\omega$ includes conditioning on $\St$; (c) follows from the fact that conditioning on less variables increases entropy and for the second term that $\Y$ depends on $\omega$ only through $\X_\So$; (d) follows by noting that $\SSt$ does not give any additional information about $\Y$ when $\X_\SSt$ is marginalized, because of our assumption that the observation model is independent of the indices themselves except through the variables and we have symmetrically distributed variables, and therefore $H(\Y | \X_\St, \St, \bS \in B) = H(\Y | \X_\St, \So, \bS \in B)$.

  From the upper and lower bounds derived on $H(\om | \X, \Y, \St, \bS \in B)$, we then have the inequality
  \begin{equation}\label{eq:Peb_lb}
    P(E | B) \geq 1 - \frac{I(\X_{\SSt}; \Y | \X_{\St}, \So, \bS \in B) + 1}{\log\binom{N-|\St|}{K-|\St|}}.
  \end{equation}

  Note that we can decompose $I(\X_\SSt ; \Y | \X_\St, \So)$ using the following chain of equalities:
  \begin{align*}
    I & (\X_\SSt ; \Y | \X_\St, \So) + I(\bS ; \X_\SSt | \X_\St, \Y, \So) = I(\X_\SSt ; \Y, \bS | \X_\St, \So)  \\
    & = I(\X_\SSt ; \bS | \X_\St, \So) + I(\X_\SSt;  \Y | \X_\St, \bS, \So)= T I(X_\SSt; Y | X_\St, \bS, \So),
  \end{align*}
  where the last equality follows from the independence of $X$ and $\bS$, and the independence of the $(\X, \Y)$ pairs over $t$ given $\bS$. Therefore we have
  \begin{equation} \label{eq:IT}
    I(\X_\SSt ; \Y | \X_\St, \So) = T I(X_\SSt;Y |  X_\St, \bS, \So) - I(\bS ; \X_\SSt | \X_\St, \Y, \So).
  \end{equation}

  From the equality above, we now have that $I(\X_{\SSt}; \Y | \X_{\St}, \So, \bS \in B) \leq T I(X_\SSt; Y |  X_\St, \So, \bS \in B) = T I_\St(B)$, and using this inequality, we obtain the lemma from \eqref{eq:Peb_lb} since $P(E) \geq P(B) P(E | B)$.
\end{proof}

Using Lemma \ref{lemma:beta_lb}, we can readily prove Theorem \ref{thm:eps_lb}.

\subsubsection*{Proof of Theorem \ref{thm:eps_lb}}
  Let \eqref{eq:eps_lb} hold, i.e., $T < (1 - \varepsilon) \frac{\log\binom{N-|\St|}{K-|\St|}}{I_{\St,\varepsilon}}$ for some $\St \subset \So$. Then, there exists a $\gamma > I_{\St,\varepsilon}$ and corresponding set $B_\gamma = \{b \in {\cal B}^K: I_\St(b) < \gamma\}$ such that $P(B_\gamma) > \varepsilon$ and $T < (1-\varepsilon) \frac{\log\binom{N-|\St|}{K-|\St|}}{\gamma}$. From Lemma \ref{lemma:beta_lb}, we have that $P(E) \geq P(B_\gamma) \left( 1 - \frac{T I_\St(B_\gamma) + 1}{\log\binom{N-|\St|}{K-|\St|}} \right)$. Since $I_\St(B_\gamma) \leq \gamma$ by definition, we have 
  \[ 1 - \frac{T I_\St(B_\gamma) + 1}{\log\binom{N-|\St|}{K-|\St|}} > 1 - (1 - \varepsilon) \frac{I_\St(B_\gamma)}{\gamma} - \frac{1}{\log\binom{N-|\St|}{K-|\St|}} \geq \varepsilon - \frac{1}{\log\binom{N-|\St|}{K-|\St|}}. \]
  Since we have $P(B_\gamma) > \varepsilon$, we conclude that $P(E) > \varepsilon \left(\varepsilon - \frac{1}{\log\binom{N-|\St|}{K-|\St|}} \right) \geq \varepsilon^2 - \frac{\varepsilon}{\log(N-K+1)} \geq \frac{\varepsilon^2}{2}$ if \eqref{eq:eps_lb} is true. \hfill \qed

\begin{remark}
It follows from choosing $B = {\cal B}^K$ in Lemma \ref{lemma:beta_lb} that
\begin{equation}\label{eq:avg_lb}
  T \geq \max_{\St \subset S} \frac{\log\binom{N-|\St|}{K-|\St|}}{I(X_\SSt; Y | X_\St, \bS, S)},
\end{equation}
is also a necessary condition for recovery, which involves the average mutual information over the whole space of $\bS$. While this bound is intuitive and may be easier to analyze than the previous lower bounds, it is much weaker than the necessity bound in Theorem \ref{thm:eps_lb} and does not match the worst-case upper bounds we derive in Section \ref{sec:sufficient}.
\end{remark}

\section{Sufficient Conditions for Recovery}
\label{sec:sufficient}

In this section, we derive upper bounds on the number of samples required to recover $S$. We consider models with non-scaling distributions, i.e., models where the observation model, the variable distributions/densities, and the number of relevant variables $|S| = K$ do not depend on scaling variables $N$ or $T$. Group testing as set up in Section \ref{subsec:group_testing} for fixed $K$ is an example of such model. We defer the discussion of models with scaling distributions and $K$ to Section \ref{subsec:scaling}.

To derive the sufficiency bound for the required number of samples, we analyze the error probability of a Maximum Likelihood (ML) decoder \cite{gallager}. For this analysis, we assume that $S_1$ is the true set $\So$ among $\omega \in {\cal I}$. We can assume this w.l.o.g.\ due to the equiprobable support, IID variables and the observation model symmetry assumptions (A1)--(A5). Thus, we can write
\[ P(E) = \frac{1}{\binom{N}{K}} \sum_{\omega \in {\cal I}} \Pr[\hat{S}(\X,\Y)\ne \So | \So] = P(E | S_1). \]
For this reason, we omit the conditioning on $S_1$ in the error probability expressions throughout this section.

The ML decoder goes through all $\binom{N}{K}$ possible sets $\omega \in {\cal I}$ and chooses the set $S_{\omega^*}$ such that
\begin{equation}\label{eq:ML}
  p_{\omega^*}(\Y|\X_{S_{\omega^*}}) > p_\omega(\Y|\X_\So), \quad \forall \omega\ne \omega^*,
\end{equation}
and consequently, an error occurs if any set other than the true set $S_1$ is more likely. This decoder is a minimum probability of error decoder for equiprobable sets, as we assumed in (A1). Note that the ML decoder requires the knowledge of the observation model $p(Y|X_S,\bS)$ and the distribution $p(\bS)$. 

Next, we state our main result. The following theorem provides a sufficient condition on the number of samples $T$ for recovery with average error probability less than $\varepsilon$. 

\begin{theorem} (Sufficiency).
\label{thm:latent_ub}
  For any $0 \leq \varepsilon \leq 1$ and an arbitrary constant $\epsilon > 0$, if
  \begin{equation} \label{eq:latent_ub}
    T > (1+\epsilon) \cdot \max_{\St \subset S} \frac{\log\binom{N-K}{K-|\St|}}{I_{\St,\varepsilon}},
  \end{equation}
  then $\lim_{K \to \infty} \lim_{N \to \infty} P(E) \leq \varepsilon$.
\end{theorem}

\begin{remark}\label{remark:zero_ub}
  Similar to Remark \ref{remark:zero_lb}, we can obtain a sufficient condition for \emph{zero}-error recovery in the limit by considering any sequence $\varepsilon_N \to 0$. In particular, letting $\varepsilon_N = 0$ we have that if 
  \[ T > (1+\epsilon) \cdot \max_{\St \subset S} \frac{\log\binom{N-K}{K-|\St|}}{I_{\St,0}}, \]
  then $\lim_{K \to \infty} \lim_{N \to \infty} P(E) = 0$, matching the necessary condition for recovery in Remark \ref{remark:zero_lb} up to an arbitrarily small constant factor.
\end{remark}

In order to prove Theorem \ref{thm:latent_ub}, we analyze the probability of error in recovery conditioned on $\bS$ taking values in a set $B \subseteq {\cal B}^K$. We state the following lemma, similar in nature to Lemma \ref{lemma:beta_lb} that we used to prove Theorem \ref{thm:eps_lb}.

\begin{lemma}\label{lemma:beta_ub}
  Define the worst-case mutual information constrained to $\bS \in B$ as $\underline{I_\St}(B) = \inf_{b \in B} I_\St(b)$. For any measurable subset $B \subseteq {\cal B}^K$ such that $\bS$ conditioned on $\bS \in B$ is still permutation invariant, a sufficient condition for the error probability conditioned on $\bS \in B$, denoted by $P(E | B)$, to approach zero asymptotically is given by
  \begin{equation}\label{eq:beta_ub}
    T > (1+\epsilon) \cdot \max_{\St \subset S} \frac{\log\binom{N-K}{K-|\St|}}{\underline{I_\St}(B)}.
  \end{equation}
\end{lemma}

Note that the definition of $\underline{I_\St}(B)$ above is different from the definition of $I_\St(B)$ used in Lemma \ref{lemma:beta_lb}, in that one is worst-case while the other is averaged over $\bS \in B$. However, it is noteworthy that the bounds we obtain in Theorems \ref{thm:eps_lb} and \ref{thm:latent_ub} are both characterized by $I_{\St,\varepsilon}$, i.e.\ both are worst-case w.r.t.\ $\bS$ for arbitrarily small $\varepsilon$.

To obtain the sufficient condition in Lemma \ref{lemma:beta_ub}, the analysis starts with a simple upper bound on the error probability $P(E)$ of the ML decoder averaged over all data realizations and observations. 
We define the error event $E_i$ as the event of mistaking the true set for a set which differs from the true set $S_1$ in exactly $i$ variables, thus we can write 
\begin{align}\label{eq:P(Ei)_def}
  P(E_i) = \Pr\left[\exists \omega \ne 1: p_\omega(\Y|\X_\So)\geq p_1(\Y|\X_{S_1}), |\So \setminus S_1|=|S_1 \setminus \So|=i,~|S_1|=|\So|=K\right].
\end{align} 
Using the union bound, the probability of error $P(E)$ can then be upper bounded by
\begin{equation}
  P(E) \leq \sum_{i=1}^K P(E_i) \leq K \max_{i=1,\ldots,K} P(E_i).
\end{equation}

For the proof of Lemma \ref{lemma:beta_ub}, we utilize an upper bound on the error probability $P(E_i|B)$ for each $i = 1,\ldots,K$, corresponding to subsets with $|\St| = K-i$ and conditioned on the event $\bS \in B$. For this subset $B$ of ${\cal B}^K$, the ML decoder constrained to the case $\bS \in B$ is considered and analyzed, such that in its definition in \eqref{eq:ML} the likelihood terms are averaged over $p(\bS | \bS \in B)$. 
This upper bound is characterized by the error exponent $E_o(\rho,b)$ for $\rho \in [0,1]$ and $b \in B$, which is described by
\begin{equation}\label{eq:Eo_b}
  E_o(\rho, b) = -\log \left( \sum_Y \sum_{X_\St} \left[ \sum_{X_\SSt} Q(X_\SSt) p(Y, X_\St | X_\SSt, b)^{\frac{1}{1+\rho}} \right]^{1+\rho} \right).
\end{equation}
For continuous models, sums are replaced with the appropriate integrals and $(\SSt, \St)$ is any partitioning of $S$ to $i$ and $K-i$ variables. 

We present the upper bound on $P(E_i|B)$ in the following lemma. 

\begin{lemma}\label{lemma:P(Ei)}
  The probability of the error event $E_i$ defined in \eqref{eq:P(Ei)_def} that a set selected by the ML decoder differs from the set $S_1$ in exactly $i$ variables conditioned on $\bS \in B$ is bounded from above by
  \begin{align}\label{eq:P(Ei)}
    P(E_i|B) \leq e^{-\left(T E_o(\rho) - \rho\log\binom{N-K}{i} - \log\binom{K}{i}\right)},
  \end{align}
  where we define
  \begin{equation}\label{eq:Eo_lb}
    E_o(\rho) = \inf_{b \in B} E_o(\rho, b) - \frac{\rho}{T} H_{\frac{1}{1+\rho}}(\bS | B),
  \end{equation}
  and $H_{\frac{1}{1+\rho}}(\bS | \bS \in B)$ is the R\'enyi entropy of order $\frac{1}{1+\rho}$ computed for the distribution $p(\bS | \bS \in B)$.\footnote{We refer the reader to Section \ref{sec:setup} for our notation for conditional entropic quantities.}
\end{lemma}
The proof for Lemma \ref{lemma:P(Ei)} is similar in nature to the proof of Lemma III.1 of \cite{group_testing} for discrete variables and observations. It considers the ML decoder defined in \eqref{eq:ML} where the likelihood terms are averaged over $p(\bS | B)$. We note certain differences in the proof and the result, and further extend it to continuous variables and observations in Section \ref{subsec:continuous}. 

We now prove Lemma \ref{lemma:beta_ub}. It follows from a Taylor series analysis of the error exponent $E_o(\rho)$ around $\rho = 0$, from which the worst-case mutual information condition over $\bS \in B$ is derived. This Taylor series analysis is similar to the analysis of the ML decoder in \cite{gallager}.

\subsubsection*{Proof of Lemma \ref{lemma:beta_ub}}

We use the pair of sets $(\Sone, \Stwo)$ for a partition of $S$ to $i$ and $K-i$ variables respectively, instead of $\SSt$ and $\St$. We note that 
\[ \max_{\St \subset S} \frac{\log\binom{N-K}{K-|\St|}}{\underline{I_\St}(B)} = \max_{\substack{i=1,\ldots,K \\ \St \subset S: |\St| = K-i}} \frac{\log\binom{N-K}{i}}{\inf_{b \in B} I(X_\SSt ; Y | X_\St, \bS=b, S)} = \max_{i=1,\ldots,K} \frac{\log\binom{N-K}{i}}{\inf_{b \in B} I(X_\Sone ; Y | X_\Stwo, \bS=b, S)}, \]
for any such partition $(\Sone, \Stwo)$, since the distributions $p(Y | X_S)$, $Q(X)$ and $p(\bS | \bS \in B)$ are all permutation invariant and independent of $S$ except through $X_S$ and $\bS$ as in assumption (A2).\footnote{The actual mutual information expressions we are computing are conditioned on $S = S_1$ (e.g.\ $I(X_\Sone ; Y | X_\Stwo, \bS=b, S = S_1)$) since we assumed the true set is $S_1$ w.l.o.g., which are equal to the averaged mutual information expressions over $S$ (e.g.\ $I(X_\Sone ; Y | X_\Stwo, \bS=b, S)$) which we use throughout this section (see footnote\footnotemark[\getrefnumber{fn:thm}]).}

We derive an upper bound on $P(E|B)$ by upper bounding the maximum probability of the $K$ error events $E_i$, $i=1,\ldots,K$. Using the union bound we have
\begin{equation}
  P(E|B) \leq \sum_{i=1}^K P(E_i|B) \leq K \max_i P(E_i|B) = \max_i \, K P(E_i|B).  \label{eq:Pe_K}
\end{equation}

For each error event $E_i$, we aim to derive a sufficient condition on $T$ such that $K P(E_i|B) \to 0$ as $N \to \infty$, with $P(E_i|B)$ given by \eqref{eq:P(Ei)_def} with additional conditioning on the event $\bS \in B$. Using Lemma \ref{lemma:P(Ei)}, it suffices to find a condition on $T$ such that 
\begin{align}
  T E_o(\rho) - \rho\log\binom{N-K}{i} - \log\binom{K}{i} - \log K \rightarrow\infty,
\label{eq:f_rho_condition}
\end{align}
where $E_o(\rho)$ is given by \eqref{eq:Eo_lb}. 
Note that, since $\log\binom{K}{i} + \log K = \Theta(1)$ for fixed $K$ and $T \to \infty$, the following is a sufficient condition on $T$ for \eqref{eq:f_rho_condition} to hold:
\[ T f(\rho) = T \left(E_o(\rho) - \rho\frac{\log\binom{N-K}{i}}{T} \right) \rightarrow\infty. \]

We note that $E_o(\rho, \bS)$ in \eqref{eq:Eo_lb} does not scale with $N$ or $T$ for non-scaling models. 
To show that the condition \eqref{eq:beta_ub} is sufficient to ensure \eqref{eq:f_rho_condition}, we define $f(\rho) = E_o(\rho) - \rho\frac{\log\binom{N-K}{i}}{T}$ and analyze $E_o(\rho)$ using its Taylor expansion around $\rho = 0$.
Using the mean value theorem, we can write $E_o(\rho,b)$ in the Lagrange form of the Taylor series expansion, i.e., in terms of its first derivative evaluated at zero and a remainder term, 
\begin{equation*}
  E_o(\rho,b) = E_o(0,b) + \rho E_o'(0,b) + \frac{\rho^2}{2} E_o''(\psi,b)
\label{eq:lagrange}
\end{equation*}
for some $\psi \in [0,\rho]$. Note that $E_o(0,b) = 0$ and the derivative of $E_o(\rho,b)$ for any $b \in {\cal B}^K$ evaluated at zero can be shown to be $I_\Stwo(b)$, which is proven in detail in Section \ref{subsec:MI_derivative} in the appendix.

Let $I_\Stwo(b) = I(X_\Sone ; Y | X_\Stwo, \bS=b, S)$ and $\underline{I_\Stwo}(B) = \inf_{b \in B} I_\Stwo(b)$ as defined before. Then, with the Taylor expansion of $E_o(\rho,b)$ above we have
\begin{equation}
  T f(\rho) \geq T \left( \inf_{b} \left[ \rho I_\Stwo(b) + \frac{\rho^2}{2} E_o''(\psi,b) \right] - \rho \frac{H_{\frac{1}{1+\rho}}(\bS | \bS \in B)}{T} - \rho \frac{\log \binom{N-K}{i}}{T} \right) \label{eq:Pe_K2}
\end{equation}
and our aim is to show that the above quantity approaches infinity for some $\rho \in [0, 1]$ as $N \to \infty$.

Now assume that $T$ satisfies
\begin{equation}\label{eq:T_i}
  T > (1+\epsilon) \cdot \frac{\log\binom{N-K}{i}}{\underline{I_\Stwo}(B)}
\end{equation}
for all $i$, which is implied by condition \eqref{eq:beta_ub}. Using the $T$ above and \eqref{eq:Pe_K2} we can then write 
\begin{align*}
  T f(\rho) & \geq T \left( \rho \underline{I_\Stwo}(B) + \frac{\rho^2}{2} \inf_{b} E_o''(\psi,b) - \rho \frac{H_{\frac{1}{1+\rho}}(\bS | \bS \in B)}{T} - \rho \frac{\log \binom{N-K}{i}}{T} \right) \\ 
  & \geq T \left( \rho \underline{I_\Stwo}(B) + \frac{\rho^2}{2} \inf_{b} E_o''(\psi,b) - \rho o(1) - \rho \frac{\underline{I_\Stwo}(B)}{1+\epsilon} \right) \\
  & = T \rho \left( \epsilon' \underline{I_\Stwo}(B) + \frac{\rho}{2} \inf_{b} E_o''(\psi,b) - o(1) \right),
\end{align*}
for $\epsilon' = \frac{\epsilon}{1+\epsilon}$, where in the first inequality we obtain a lower bound by separating the minimum of the sum to the sum of minimums and replacing $T$ in the second inequality, noting that $O(K)/T \to 0$ since $H_{\frac{1}{1+\rho}}(\bS | \bS \in B) = O(K)$. This is due to the inequality $H_{\frac{1}{1+\rho}}(\bS | \bS \in B) \leq H_{\frac{1}{2}}(\bS)$ and the assumption that $H_{\frac{1}{2}}(\bS) = O(K)$.

We note that $E_o''(\psi,b)$ is independent of $N$ or $T$ and thus has bounded magnitude for any $b \in {\cal B}^K$. Then, we pick $\rho$ small enough such that the second derivative term is dominated by the mutual information term; specifically we choose $\rho \leq \frac{\epsilon' I_\Stwo(B)}{|\inf E_o''(\psi,b)|}$ and note that it can be chosen such that $\rho \geq \delta > 0$ for a constant $\delta$, since $|E_o''(\psi,b)| = O(1)$. We then have
\begin{align*}
  T f(\rho) & \geq T \rho \left( \underline{I_\Stwo}(B) [\epsilon' - \epsilon'/2] - o(1) \right) = T \rho \underline{I_\Stwo}(B) \Theta(1) = \log\binom{N-K}{i} \Theta(1) = \Omega(\log N) \to \infty,
\end{align*}
showing that $K P(E_i|B)$ goes to zero for all $i$ given the conditions (A1)-(A5) are satisfied. It follows that $P(E|B) \leq \max_i K P(E_i|B)$ goes to zero for $N \to \infty$ for any $K$, therefore $\lim_{K \to \infty} \lim_{N \to \infty} P(E|B) = 0$. \hfill \qed

Using Lemma \ref{lemma:beta_ub}, we now prove Theorem \ref{thm:latent_ub}.

\subsubsection*{Proof of Theorem \ref{thm:latent_ub}}

First, note that with the definition of $I_{\St,\varepsilon}$, there exists an $\alpha \geq 0$ such that $I_{\St,\varepsilon} - \kappa \leq \alpha \leq I_{\St,\varepsilon}$ for any $\kappa > 0$ (arbitrarily small) and $P(B_\alpha) \leq \varepsilon$ where $B_\alpha = \{ b \in {\cal B}^K: I_\Stwo(b) < \alpha \}$. Note that this subset preserves the permutation invariance property of $\bS$ and $\underline{I_\St}(B_\alpha^c) \geq \alpha \geq I_{\St,\varepsilon} - \kappa$.

We then have using Lemma \ref{lemma:beta_ub} that if 
\begin{equation}\label{eq:kappa}
  T > (1+\epsilon) \max_{\St \subset S} \frac{\log\binom{N-K}{K-|\St|}}{I_{\St,\varepsilon} - \kappa},
\end{equation}
then there exists an ML decoder such that $P(E | B_\alpha^c) \to 0$. However since $\kappa$ is arbitrarily small, \eqref{eq:kappa} is satisfied for any $T$ satisfying condition \eqref{eq:beta_ub} in Lemma \ref{lemma:beta_ub}, i.e., that $T > (1+\epsilon) \max_{\St \subset S} \frac{\log\binom{N-K}{K-|\St|}}{I_{\St,\varepsilon}}$.

We can write $P(E) = P(B_\alpha) P(E | B_\alpha) + P(B_\alpha^c) P(E | B_\alpha^c) \leq P(B_\alpha) + P(E | B_\alpha^c)$ and we have $P(E | B_\alpha^c) \to 0$ and $P(B_\alpha) \leq \varepsilon$. Therefore, we have shown that $P(E) \leq \varepsilon$ is achievable (specifically with the ML decoder that considers $\bS \in B_\alpha^c$) and the theorem follows. \hfill \qed

\subsection{Sufficiency for Models with Scaling}
\label{subsec:scaling}

In this section, we consider models with scaling distributions, i.e., models where the observation model and the variable distributions/densities and number of relevant variables $|S| = K$ may depend on scaling variables $N$ or $T$. 
While Theorem \ref{thm:latent_ub} characterizes precisely the constants (including constants related to $K$) in the sample complexity, it is also important to analyze models where $K$ is scaling with $N$ or where the distributions depend on scaling variables. Group testing where $K$ scales with $N$ (e.g.\ $K = \Theta(\sqrt{N})$) is an example of such model, as well as the normalized sparse linear regression model when the SNR and the random matrix probabilities are functions of $N$ and $T$ in Section \ref{subsec:lcs}. Therefore, in this section we consider the most general case where $Q(X_n)$ or $p(Y | X_S)$ can be functions of $K$, $N$ or $T$ and $K = O(N)$. Note that the necessity result in Theorem \ref{thm:eps_lb} also holds for scaling models thus does not need to be generalized. As we noted in the problem setup, we consider $N$ as the independent scaling variable and $K = K(N)$, $T = T(N)$ scale as functions of $N$.

To extend the results in Section \ref{sec:sufficient} to general scaling models, we employ additional technical assumptions related to the smoothness of the error exponent \eqref{eq:Eo_b} and its dependence on the latent observation model parameter $\bS$. For a proper subset $\St \subset S$, we consider the error exponent $E_N(\rho, \bS) = E_o(\rho, \bS)$ as defined in \eqref{eq:Eo_b}, which we subscript with $N$ to emphasize its dependence on the scaling variable $N$. 
Throughout this section we also modify our notation w.r.t.\ $\bS$, assuming the existence of a ``sufficient statistic'' $s = T(\bS)$ that will be formalized shortly. Instead of writing quantities as functions of $b \in {\cal B}^K$, such as $I_\St(b)$ defined in \eqref{def:cond_MI} and $E_N(\rho,b)$ as defined in \eqref{eq:Eo_b}, we write them as functions of $s$, e.g.\ $I_\St(s)$ and $E_N(\rho,s)$. 
We also define the following quantity that will be utilized in our smoothness conditions.

\begin{definition}\label{def:F_N}
  For a proper subset $\St \subset S$, the normalized first derivative of the error exponent is $F_N(\rho,s) = \frac{\frac{\partial}{\partial \rho} E_N(\rho,s)}{I_\St(s)}$.
\end{definition}

With this definition, we formally enumerate below the regularity conditions we necessitate for scaling models for each $\St \subset S$.

\begin{enumerate}[({RC}1)]
  \item There exists a sufficient statistic $s = T(b)$ for $b \in {\cal B}^K$ such that the error exponent $E_N(\rho,b)$ only depends on $T(b)$, i.e.\ we have $E_N(\rho, b) = E_N(\rho, b')$ for all $b$ and $b'$ that satisfy $T(b) = T(b')$. In addition, $s$ belongs to a compact set ${\cal C} = \bigcup_K T({\cal B}^K) \subset \mathbb{R}^d$ for a constant $d$ independent of $N$.
  \item $F_\infty(\rho,s) = \lim_{N \to \infty} F_N(\rho,s)$ exists for each $\rho$ and $s$, it is continuous in $\rho$ for each $s$, and the convergence is uniform in $s$. 
\end{enumerate}

We note that the first condition is trivially satisfied when $K$ is fixed by letting ${\cal C} = {\cal B}^K$, or when $\bS$ is fixed or does not exist by letting ${\cal C}$ be a singleton set, e.g.\ in group testing. In other cases, a sufficient statistic is frequently the average power of (parts of) the vector $\bS$, e.g.\ in sparse linear regression, which we consider in Section \ref{subsec:lcs}.

The reason we consider the quantity $F_N(\rho,s)$ as defined in Definition \eqref{def:F_N} is that it is normalized such that $F_N(\rho,s) \in [0,1]$ for all $\rho$ and $s$, $F_N(0,s) = 1$ for all $s$ and non-increasing in $\rho$ since $E_N(\cdot,s)$ is a concave function. Indeed, it is not possible to directly consider the limit of the first derivative $E'_N(\rho,s)$ or the mutual information $I_\St(s)$, since in most applications these quantities either converge to the zero function or diverge to infinity as $N$ increases.

Given the regularity conditions (RC1) and (RC2) are satisfied, we have the following sufficient condition analogous to Theorem \ref{thm:latent_ub}.

\begin{theorem}\label{thm:scaling_ub}
  For any $0 \leq \varepsilon \leq 1$ and a constant $C$, if 
  \begin{equation}\label{eq:scaling_ub}
    T > C \cdot \max_{\St \subset S} \frac{\log\binom{N-K}{K-|\St|} + H_\frac{1}{2}(\bS)}{I_{\St,\varepsilon}},
  \end{equation}
  then $\lim_{N \to \infty} P(E) \leq \varepsilon$.
\end{theorem}

We also note the extra R\'enyi entropy term $H_\frac{1}{2}(\bS)$ in the numerator compared to Theorem \ref{thm:latent_ub}. This term results from the uncertainty present in the random $\bS$, whose value is unknown to the decoder. The bound reduces to the non-scaling bound for fixed $K$ or certain scaling regimes of $K$ since this term will be dominated by the other term in the numerator. It also disappears asymptotically when partial recovery is considered such that we maximize over $|\St| \leq \alpha K$ for a constant $\alpha$. The necessity for a constant factor $C$ stems from the $\log\binom{K}{i}$ term in the error exponent and the $\log K$ term due to union bounding over $i = 1,\ldots,K$ as in the proof of Lemma \ref{lemma:beta_ub}.

\subsubsection*{Proof of Theorem \ref{thm:scaling_ub}}

We prove the analogue of Lemma \ref{lemma:beta_ub}, omitting the dependencies on $B$ for brevity. The rest of the proof follows from the same arguments used for proving Theorem \ref{thm:latent_ub} given Lemma \ref{lemma:beta_ub}. 

Similar to the proof of Lemma \ref{lemma:beta_ub}, we want to show that $K P(E_i) \to 0$ as $N \to \infty$ for any $i = 1,\ldots,K$. For each $i$, we consider an arbitrary partition $(\Sone, \Stwo)$ of $S$ to $i$ and $K-i$ elements such that $\St = \Stwo$. To show that $K P(E_i) \to 0$, it suffices to show that there exists a $\rho > 0$ for which
\[ T f_N(\rho) = T E_N(\rho) - \rho \log\binom{N-K}{i} - \log\binom{K}{i} - \log K \to \infty. \]
Using the inequality $\log\binom{K}{i} + \log K \leq 2\log\binom{N-K}{i}$, we can lower bound $T f_N(\rho)$ as
\[ T f_N(\rho) \geq T \inf_s E_N(\rho,s) - (\rho+2) D_N, \]
where we define $D_N \triangleq \log\binom{N-K}{i} + H_{\frac{1}{2}}(\bS)$.

Define the ratio $R_N(\rho,s) = \frac{E_N(\rho,s)}{\rho I_\Stwo(s)}$ and note that the derivative of $\bar{R}_N(\rho,s) \triangleq \rho R_N(\rho,s) = \frac{E_N(\rho,s)}{I_\Stwo(s)}$ w.r.t.\ $\rho$ is $F_N(\rho,s)$. Since $E_N(0,s) = 0$ therefore $\bar{R}_N(0,s) = 0$, from the Lagrange form of the Taylor expansion of $\bar{R}_N(\rho,s)$ around $\rho = 0$ we have $R_N(\rho,s) = F_N(\psi_s,s)$ for some $\psi_s \in [0, \rho]$. Noting that $E_N(\rho,s)$ is concave \cite{gallager} therefore $F_N(\rho,s)$ is non-increasing in $\rho$, it follows that $R_N(\rho,s) \geq F_N(\rho,s)$ for any $\rho$ and $s$.

We present the following technical lemma, which is proved in the appendix.

\begin{lemma}\label{lemma:F_N_bd}
  For any $c > 0$, there exists a constant $\rho_c > 0$ and integer $N_0$ such that for all $N \geq N_0$, $F_N(\rho_c, s) \geq 1-c$, for all $s \in {\cal C}$.
\end{lemma}

Let $0 < c < 1$ be a constant. From Lemma \ref{lemma:F_N_bd}, there exists $\rho_c > 0$ and $N_0$ such that for all $N \geq N_0$, we have that $F_N(\rho_c,s) \geq 1-c$ for all $s$. Using the bound on $T$ in \eqref{eq:scaling_ub}, we then have the chain of inequalities
\[ \frac{T \inf_s E_N(\rho_c,s)}{\rho_c D_N} > \frac{C \inf_s E_N(\rho_c,S)}{\rho_c \inf_s I_\Stwo(s)} \geq C \inf_s \frac{E_N(\rho_c,s)}{\rho_c I_\Stwo(s)} = C \inf_s R_N(\rho_c,s) \geq C (1-c). \]

We then have $T \inf_s E_N(\rho_c,s) > C (1-c) \rho_c D_N$ for constants $c$ and $\rho_c$. Therefore,
\[ T f_N(\rho_c) \geq T \inf_s E_N(\rho_c,s) - (\rho_c+2) D_N > \left(C (1-c) \rho_c - \rho_c - 2\right) D_N \geq c' D_N \to \infty, \]
for any constant $C$ such that $C \geq \frac{c' + 2}{\rho_c (1-c)}$, proving that $K P(E_i) \to 0$ for each $i$ and therefore $P(E) \to 0$. \hfill \qed

While the resulting upper bound \eqref{eq:scaling_ub} in Theorem \ref{thm:scaling_ub} has the same mutual information expression in the denominator as Theorem \ref{thm:latent_ub}, it has an extra $H_{\frac{1}{2}}(\bS)$ term in the numerator in addition to the combinatorial term $\log\binom{N-K}{K-|\St|}$. While this term is negligible for high sparsity regimes, it might affect the sample complexity in regimes where $K$ does not scale too slowly and $\bS$ has uncorrelated (e.g.\ IID) elements such that the entropy of $\bS$ is high. In such cases, the $H_{\frac{1}{2}}(\bS)$ term related to the uncertainty in $\bS$ may dominate the combinatorial term $\log\binom{N-K}{K-|\St|}$ for large subsets $\St$. 

To this end, we state the following theorem that uses the results of Theorem \ref{thm:scaling_ub} and establishes guarantees that all but a vanishing fraction of the indices in the support can be recovered reliably. 

\begin{theorem}\label{thm:partial}
  For any $0 \leq \varepsilon \leq 1$ and a constant $C$, if $H_\frac{1}{2}(\bS) = O(K)$ (satisfied by IID $\bS$), $K = O(N/\log N)$ and 
  \begin{equation}\label{eq:partial_ub}
    T > C \cdot \max_{\St \subset S} \frac{\log\binom{N-K}{K-|\St|}}{I_{\St,0}},
  \end{equation}
  then with probability one $\limsup_{N \to \infty} \frac{\left|S \setminus \hat{S}_N(\X,\Y)\right|}{|S|} = 0$.
\end{theorem}

We prove the theorem in the appendix. 
We note that Theorem \ref{thm:partial} does not give guarantees on \emph{exact} recovery of all elements of $S$ as in Theorem \ref{thm:scaling_ub}, however it gives guarantees that a set that overlaps the true set $S$ in all but an arbitrarily small fraction of elements can be recovered. We also considered the special case of recovery with zero error probability, i.e., $\varepsilon = 0$ in this analysis, however it can be extended to the non-zero error probability analysis similar to Theorem \ref{thm:scaling_ub}.

\subsection{Discussion}
\label{sec:discussion}

\noindent
\textbf{Tight characterization of sample complexity: upper vs.\ lower bounds.}
We have shown and remarked in Sections \ref{sec:necessary} and \ref{sec:sufficient} that for arbitrarily small recovery error $\varepsilon \to 0$, the upper bound for non-scaling models in Theorem \ref{thm:latent_ub} is tight as it matches the lower bound given in Theorem \ref{thm:eps_lb}. The upper bound in Theorem \ref{thm:scaling_ub} for arbitrary models is also tight up to a constant factor $C$ provided that mild regularity conditions on the problem hold.

\noindent
\textbf{Partial recovery.} 
As we analyze the error probability separately for $i=1,\ldots,K$ support errors corresponding to $\St \subset S$ with $|S \setminus \St| = i$ in order to obtain the necessity and sufficiency results, it is straightforward to determine necessary and sufficient conditions for \emph{partial} support recovery instead of \emph{exact} support recovery. By changing the maximization from over all subsets $\St \subset S$ (i.e.\ $i = 1,\ldots,K$) to $\St \subset S$ such that $|\St| < k$ (i.e.\ $i = K-k+1,\ldots,K$) in the recovery bounds, the conditions to recover at least $k$ of the $K$ support indices can be determined.

\noindent
\textbf{Technical issues with typicality decoding.} 
It is worth mentioning that a typicality decoder can also be analyzed to obtain a sufficient condition, as used in the early versions of \cite{arxiv_v2}. However, typicality conditions must be defined carefully to obtain a tight bound w.r.t.\ $K$, as with standard typicality definitions the atypicality probability may dominate the decoding error probability in the typical set. For instance, for the group testing scenario considered in \cite{group_testing}, where $X_n \sim \text{Bernoulli}(1/K)$, we have $\Pr[X_S = (1,\ldots,1)] = (1/K)^K$, which would require the undesirable scaling of $T$ as $K^K$, to ensure typicality in the strong sense (as needed to apply results such as the packing lemma \cite{abbas}). Redefining the typical set as in \cite{arxiv_v2} is then necessary, but it is problem-specific and makes the analysis cumbersome compared to the ML decoder adopted herein and in \cite{group_testing}. Furthermore, the case where $K$ scales together with $N$ requires an even more subtle analysis, whereas the analysis of the ML decoder analysis is more straightforward in regards to that scaling. Typicality decoding has also been reported as infeasible for the analysis of similar problems, such as multiple access channels where the number of users scale with the coding block length \cite{guo}.

\section{Applications with Linear Observations}
\label{sec:apps_linear}

In this section and the next section, we establish results for several problems for which our necessity and sufficiency results are applicable. For this section we focus on problems with linear observation models and derive results for sparse linear regression, considering several different setups. Then, we consider a multivariate regression model, where we deal with vector-valued variables and outcomes.

\subsection{Sparse Linear Regression}
\label{subsec:lcs}

Using the bounds presented in this paper for general sparse models, we derive sufficient and necessary conditions for the sparse linear regression problem with measurement noise \cite{donoho} and a Gaussian variable matrix with IID entries. 

We consider the following model similar to \cite{shuchin},
\begin{equation}
  \Y = \X \beta + \bm{W},
\end{equation}
where $\X$ is the $T \times N$ variable matrix, $\beta$ is a $K$-sparse vector of length $N$ with support $S$, $\bm{W}$ is the measurement noise of length $T$ and $\Y$ is the observation vector of length $T$. In particular, we assume $X_n^{(t)}$ are Gaussian distributed random variables and the entries of the matrix are independent across rows $t$ and columns $n$. Each element $X_n^{(t)}$ is zero mean and has variance $\sigma_x^2$. $\bm{W}$ denotes the observation noise of length $T$. We assume each element is IID with $W \sim {\cal N}(0, \sigma_w^2)$. The coefficients of the support, $\bS$, are IID random variables with $\bmin \leq \beta_k^2 \leq \bmax$ and (continuous) R\`enyi entropy $H_\frac{1}{2}(\beta_k) = h$ for $k \in S$. W.l.o.g.\ we assume that $h, \bmin, \bmax$ are constants, since their scaling can be incorporated into $\sigma_x$ or $\sigma_w$ instead.

In order to analyze the sample complexity using Theorems \ref{thm:eps_lb}, \ref{thm:latent_ub} and \ref{thm:scaling_ub}, we need to compute the worst-case mutual information $I_{\St,\varepsilon}$. 
We first compute the mutual information $I_\St(b) = I(X_\SSt ; Y | X_\St, \bS = b, S)$ for $|\SSt| = i$.
\begin{align*}
  I_\St(b) = I(X_\SSt ; Y | X_\St, \bS = b, S) & = h(Y | X_\St, \bS = b, S) - h(Y | X_S, \bS = b, S) \\
  & = h\left( X_\SSt^\top b_\SSt + W | b_\SSt \right) - h(W) \\
  & = \frac{1}{2} \log\left( 2\pi e \left(\text{var}\left(X_\SSt^\top b_\SSt | b_\SSt \right) + \sigma_w^2\right)\right) - \frac{1}{2} \log \left( 2\pi e \; \sigma_w^2\right) \\
  & = \frac{1}{2} \log \left( 1 + \frac{\|b_\SSt\|^2 \sigma_x^2}{\sigma_w^2} \right), 
\end{align*}
where the second equality follows from the independence of $X_\SSt$ and $X_\St$ and the last equality follows from the fact that $\text{var}(X_\SSt^\top b_\SSt | b_\SSt) = b_\SSt^\top E[X_\SSt X_\SSt^\top]b_\SSt  =  b_\SSt^\top b_\SSt \sigma_x^2$.

Assuming there is a non-zero probability that $\beta_k^2$ is arbitrarily close to $\bmin$, it is easy to see that for $\varepsilon = 0$,
\begin{equation*}
  I_{\St,0} = \frac{1}{2} \log \left( 1 + \frac{i \bmin \sigma_x^2}{\sigma_w^2} \right).
\end{equation*}
For this problem it is also possible to compute the exact error exponent $E_N(\rho,b)$, which we do in the analysis in Section \ref{subsec:lcs_analysis} and prove that the regularity conditions (RC1-2) hold for Theorem \ref{thm:scaling_ub} in the appendix. The bounds in the theorem we present below then follow from Theorems \ref{thm:eps_lb}, \ref{thm:latent_ub} and \ref{thm:partial} respectively.

\begin{theorem}\label{thm:lcs}
  For sparse linear regression with the setup described above, a necessary condition on the number of measurements for exact recovery of the support is
  \begin{equation}\label{eq:cs_lb}
    T \geq 2 \max_{i=1,\ldots,K} \frac{\log\binom{N-K+i}{i}}{\log \left( 1 + \frac{i \bmin \sigma_x^2}{\sigma_w^2} \right)},
  \end{equation}
  a sufficient condition for exact recovery for constant $K$ and $\sigma_x, \sigma_w$ independent of $N$ is
  \begin{equation}\label{eq:cs_ub}
    T \geq (2+\epsilon) \max_{i=1,\ldots,K} \frac{\log\binom{N-K}{i}}{\log \left( 1 + \frac{i \bmin \sigma_x^2}{\sigma_w^2} \right)},
  \end{equation}
  for an arbitrary $\epsilon > 0$. A sufficient condition for recovery with a vanishing fraction of support errors for $K=O(N/\log N)$ is
  \begin{equation}\label{eq:cs_ub_scaling}
    T \geq C \max_{i=1,\ldots,K} \frac{\log\binom{N-K}{i}}{\log \left( 1 + \frac{i \bmin \sigma_x^2}{\sigma_w^2} \right)},
  \end{equation}
  for a constant $C$.
\end{theorem}

We now evaluate the above bounds for different setups and compare against standard bounds in the sparse linear regression and compressive sensing literature. We specifically compare against \cite{wang} which presents lower bounds, \cite{Rad-11} that provides upper bounds matching \cite{wang} and \cite{shuchin} which presents lower and upper bounds on measurements and a lower bound on SNR. Note that the setups of \cite{wang,Rad-11} and \cite{shuchin} are different but equivalent for certain cases, however the setup of \cite{shuchin} allows for a unique analysis of the SNR, which is the reason we include it in this section. 

\subsubsection*{Comparison to \cite{wang} and \cite{Rad-11}}
First, we compare against the lower and upper bounds in \cite{wang} and \cite{Rad-11} respectively, presented in Table 1 in \cite{Rad-11}. In this setup, we have $\sigma_x^2 = \sigma_w^2 = 1$ and we will compare for the lower SNR regime $\bmin = \Theta(1/K)$ and the higher SNR regime $\bmin = \Theta(\log K/K)$. 

The lower bounds we state are for the general regime $K = O(N)$, while the upper bounds are for $K = O(N/\log N)$, as we note in Theorem \ref{thm:lcs}. For $\bmin = \Theta(1/K)$, we have that $I_{\St,0} = \Theta(\log(1+i/K)) = \Theta(i/K)$, therefore for both the lower and the upper bounds we have $T = \Omega\left(\max_i \frac{i \log(N/i)}{i/K}\right) = \Omega(K \log N)$, matching \cite{wang,Rad-11} for both sublinear and linear sparsity. 

For $\bmin = \Theta(\log K/K)$, we have upper and lower bounds $T = \Omega\left(\max_i \frac{i \log(N/i)}{\log(1 + i \log K/K)} \right)$. For linear sparsity, let $i = \log K$, for which the numerator is $\Theta(\log K \log(N/\log K)) = \Theta(\log^2 N)$ and denominator $\Theta(\log(1 + \log^2 K/K)) = \Theta(\log^2 N / N)$, thus we can obtain $T = \Omega(N)$ for both lower and upper bounds. For sublinear sparsity, first consider $i = K$. For this $i$, we obtain $T = \Omega\left(\frac{K \log(N/K)}{\log (\log K)}\right)$ directly. Second, considering $i = K/\log K$, we get $T = \Omega\left( \frac{K}{\log K} \log(N \log K / K) \right) = \Omega\left(\frac{K \log N}{\log K}\right)$. Thus we match the upper and lower bounds as the maximum of these two cases. 
Matching bounds can also be shown for the case $\bmin = \Theta(1)$, however we omit the analysis for this case for brevity.

\subsubsection*{Comparison to \cite{shuchin}}

Next, we compare with the bounds for exact recovery derived in \cite{shuchin} where we have $\sigma_x^2 = \frac{1}{T}$, $\bmin = \Theta(1)$ and $\sigma_w^2 = \frac{1}{\SNR}$. For this setup, we prove that $\SNR = \Omega(\log N)$ is a necessary condition for recovery and for that $\SNR$, $T = \Omega(K \log (N/K))$ is necessary for sublinear $K = O(N^p)$, $p < 1$ and sufficient in the same regime for recovery with a vanishing fraction of support errors, which matches the conditions derived in \cite{shuchin} for the corresponding sparsity conditions. 

We remark that $\SNR = \Theta(\log N)$ regime in this model roughly corresponds to the $\bmin = \Theta(1/K)$ regime in \cite{wang,Rad-11}.
We also note that while having $\sigma_x^2$ depend on $T$ complicates the derivation of lower and upper bounds, this scaling ensures normalized columns and conveniently decouples the effects of SNR and the number of measurements. The decoupling leads to the aforementioned lower bound on SNR that is independent of the number of measurements. 

We now provide the analysis to obtain the above conditions given Theorem \ref{thm:lcs}. 
We first show that $\SNR = \Omega(\log N)$ is necessary for recovery. For any $N$, $K$ or SNR assume $T$ scales much faster, e.g.\ $T = \omega(K \bmin \SNR)$, such that
\[ I_{\St,0} = \frac{1}{2} \log \left( 1 + \frac{i \bmin \SNR}{T} \right) \asymp \frac{1}{2} \frac{i \bmin \SNR}{T}, \]
since $\log(1+x) = \Theta(x)$ for $x \to 0$. Then, the necessary condition given by \eqref{eq:cs_lb} is
\[ T > 2 \max_i \frac{\log \binom{N-K+i}{i}}{\frac{i \bmin \SNR}{T}} \]
which readily leads to the condition that
\begin{equation} \label{eq:SNR}
  \SNR > 2 \max_i \frac{\log \binom{N-K+i}{i}}{i \bmin} \asymp \max_i \log (N/i) = \log N 
\end{equation} 
for constant $\bmin$. Note that, if the condition above is necessary for any $T = \omega(K \sigma^2 \SNR)$, it is also necessary for smaller scalings of $T$.

For the lower bound, we consider sublinear sparsity $K=O(N^p)$ and $\SNR = O(\log N)$ and prove that $T = \Omega(K \log (N/K))$ is necessary by contradiction. Let $i=K$ and assume that $T = c_N K \log(N/K)$, where $c_N \to 0$. In the left-hand side of the inequality \eqref{eq:cs_lb}, we have $T = c_N K \log(N/K)$, while on the right-hand side we have
\[ \frac{2 \log\binom{N}{K}}{\log\left(1 + \frac{K \SNR}{c_N K \log(N/K)}\right)} = O\left(\frac{K \log(N/K)}{\log\left(1 + \frac{\alpha}{c_N}\right)}\right), \]
for some constant $\alpha$ noting that $K \log(N/K) = \Theta(K \log N)$. 
Canceling the $K \log(N/K)$ terms on each side we have $c_N \log(1 + \frac{\alpha}{c_N}) = o(1)$ and therefore the inequality is not satisfied for any $c_N \to 0$.

We now show that $T = \Omega(K \log N/K) = \Omega(K \log N)$ is a sufficient condition for $K = O(N^p)$ and $\SNR = \Theta(\log N)$. For $T = \Theta(K \log N)$, the right-hand side in the sufficient condition given in \eqref{eq:cs_ub_scaling} is
\[ \Theta\left(\max_i \frac{i \log (N/i)}{\log \left(1 + \frac{i}{K} \bmin \right)}\right), \]
where we note that $K = O(i \log(N/i))$ for all $i$ in this scaling regime. For $i = o(K)$ above is equivalent to $\Theta\left(\frac{i \log(N/i)}{\frac{i}{K}}\right) = \Theta(K \log N)$. For $i = \Theta(K)$ we have the denominator $\Theta(1)$ therefore the term above is again $\Theta(K \log N)$. Thus $T = \Theta(K \log N)$ satisfies \eqref{eq:cs_ub_scaling}.

In Figure \ref{fig:cs_bound}, we illustrate the lower bound on the number of observations for the setup of \cite{shuchin}, which shows that a necessary condition on $\SNR$ has to be satisfied for recovery, as we have stated above. 

\begin{figure}
  \centering
  \includegraphics[width=0.5\textwidth]{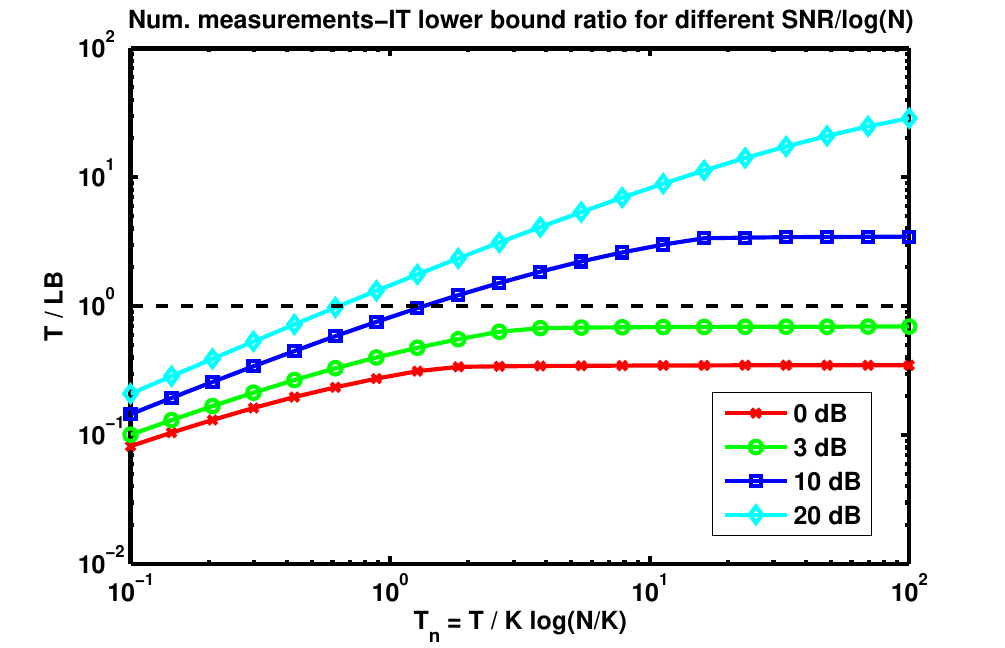}
  \caption{$\frac{T}{LB}$ vs.\ $T$ for different $\SNR$ values, where $LB$ is the necessity bound given by \eqref{eq:cs_lb} for $K = 16$, $D = 512$ and $\bmin = \bmax = 1$. For low levels of $\SNR$ the necessary condition ($T > LB$, above the dotted line) is not satisfied even for very large $T$, for fixed $K$ and $N$. This is due to $\log\left(1 + c \, \frac{\SNR}{T}\right)$ behaving linearly instead of logarithmically for low $\frac{\SNR}{T}$ ratios.}
  \label{fig:cs_bound}
\end{figure}

\begin{remark}
  We showed that our relatively simple mutual information analysis gives us upper and lower bounds that are asymptotically identical to the best-known bounds obtained through problem-specific analyses in \cite{wang,Rad-11,shuchin} in their respective setups for most scaling regimes of interest. 
  
  We also note that while the aforementioned work analyze bounds assuming a lower bound on the power of the support coefficients $\bmin$, our lower bound analysis proves that such a lower bound is \emph{required} for recovery. For instance, assuming there exists any one index $k \in S$ such that $\beta_k = 0$ with non-zero probability, for $\St = S \setminus \{k\}$ we would obtain $I_{\St,0} = 0$, showing that recovery is impossible due to Theorem \ref{thm:eps_lb}.
\end{remark}

Another interesting aspect of our analysis is that in addition to sample complexity bounds, an upper bound to the probability of error in recovery can be explicitly computed and obtained using Lemma \ref{lemma:P(Ei)} for any finite triplet $(T,K,N)$, using the error exponent $E_N(\rho,b)$ obtained in Section \ref{subsec:lcs_analysis}. Following this line of analysis, an upper bound is obtained for sparse linear regression and compared to the empirical performance of practical algorithms such as Lasso \cite{candesplan,wainwright-lasso} in \cite{aistats}. It is then seen that while certain practical recovery algorithms have provably optimal asymptotic sample complexity, there is still a gap between information-theoretically attainable recovery performance and the empirical performance of such algorithms. We refer the reader to \cite{aistats} for details.

\subsection{Regression with Correlated Support Elements}
\label{subsec:correlated}

In the following subsections we consider several variants of the sparse linear regression problem with different setups. First, we consider a variant where the support elements $\bS$ are correlated, in contrast to the IID assumption we had in the previous section. Note that we are still considering recovery in a Bayesian setting for $\bS$ rather than a worst-case analysis. Having correlated elements in the support usually complicates the analysis when using problem-specific approaches, however, in our framework the analysis is no different than the IID case. We even obtain slightly improved bounds (which we detail shortly) as a result of correlation decreasing the uncertainty in the observation model.

Formally, we consider the same problem setup as above, except that $\bS$ is not IID and we assume $H_\frac{1}{2}(\bS) = O(1)$. A special case of such a correlated setup is when the distribution $p(\bS)$ has finite volume on its support, for which we have $H_\frac{1}{2}(\bS) \leq H_0(\bS) = |\mathrm{supp}(p(\bS))| = O(1)$. We note that the correlation in $\bS$ does not affect the mutual information computation for $I_{\St,0}$ nor the analysis to show that the regularity conditions (RC1-2) hold. The only change in the analysis is that the R\'enyi entropy term $H_\frac{1}{2}(\bS)$ in the numerator of Theorem \ref{thm:scaling_ub} is now asymptotically dominated by the combinatorial term $\log\binom{N-K}{K-|\St|}$ for all $\St \subset S$. Thus, we can improve the upper bound for scaling $K$ in \eqref{eq:cs_ub_scaling} from recovery with a vanishing fraction of errors to exact recovery, and from $K = O(N/\log N)$ to $K = O(N)$ to obtain the following theorem as the analogue of Theorem \ref{thm:lcs}.

\begin{theorem}\label{thm:lcs_corr}
  For sparse linear regression with correlated support elements $\bS$, a necessary condition on the number of measurements for exact recovery of the support is
  \begin{equation}\label{eq:cs_lb_corr}
    T \geq 2 \max_{i=1,\ldots,K} \frac{\log\binom{N-K+i}{i}}{\log \left( 1 + \frac{i \bmin \sigma_x^2}{\sigma_w^2} \right)},
  \end{equation}
  and a sufficient condition for exact recovery is
  \begin{equation}\label{eq:cs_ub_corr}
    T \geq C \max_{i=1,\ldots,K} \frac{\log\binom{N-K}{i}}{\log \left( 1 + \frac{i \bmin \sigma_x^2}{\sigma_w^2} \right)},
  \end{equation}
  for a constant $C$.
\end{theorem}

Evaluating our bounds in the setup of \cite{wang,Rad-11}, we observe that the upper and lower bounds are unchanged, however the upper bounds we obtain are for exact recovery instead of recovery with a vanishing fraction of support errors and we are no longer restricted to $K = O(N/\log N)$.

\subsection{Sensing Matrices with Correlated Columns}

We consider another extension to the typical sparse linear regression setup in Section \ref{subsec:lcs}, where we have correlated columns in the sensing matrix $\X$. As with correlated support coefficients, this is yet another setup whose analysis in the classical sparse linear regression literature is inherently more cumbersome than the IID case. While our problem setup does not support non-IID variables $X$, we consider a simple extension that appeared in previous work in \cite{aistats} and show that correlation only affects the effective SNR, and up to a constant amount of correlation can be theoretically tolerated. This result is in contrast to the earlier results concerning the performance of algorithms such as Lasso, which considered decaying correlations \cite{candesplan}. An information-theoretic analysis of this setup has also been considered in \cite{wainwright}.

Formally, we consider the setup of Section \ref{subsec:lcs}, with the difference that for any two elements in $X_j$, $X_k$, $j \neq k$ on a row of $\X$, we have a correlation coefficient $\rho > 0$. For instance for the setup of \cite{shuchin}, this corresponds to $E[X_j X_k] = \frac{\rho}{T}$. We remark that this probabilistic model is equivalent to the following one: Let $X_k = \mu + U_k$, where $\mu \sim {\cal N}(0, \rho \sigma_x^2)$ and $U_k \sim {\cal N}(0, (1-\rho)\sigma_x^2)$ where $U_k$ is IID across $k = 1,\ldots,N$. As a result, we have that $X_k$ for $k=1,\ldots,N$ are conditionally IID given the latent factor $\mu$.

We note that the general framework we consider in Sections \ref{sec:necessary} and \ref{sec:sufficient} explicitly necessitates IID variables $X = (X_1,\ldots,X_N)$. However, this setup can be naturally extended to \emph{conditionally IID} variables $X$ conditioned on a latent factor $\theta$ (see \cite{aistats} for details). It turns out that most of the results from Sections \ref{sec:necessary} and \ref{sec:sufficient} readily generalize to this model, with the dependence having the effect that the mutual information expressions are additionally conditioned on the latent factor $\theta$, i.e., $I_\St(b) = I(X_\SSt ; Y | X_\St, \bS = b, S, \theta)$.

With the extension to conditionally IID variables considered in \cite{aistats} and the formulation of the correlated columns setup as conditionally IID columns given $\mu$, the mutual information expressions and regularity conditions conditioned on $\mu$ can be explicitly computed. With these results, we observe that the correlated columns problem is equivalent to the IID problem except that the ``effective SNR'' $\SNR_e = (1-\rho) \SNR$, where $\SNR$ denotes the signal to noise ratio for the IID problem. This leads to the conclusion that up to a constant correlation can be tolerated for recovery, in contrast to older results on the analysis of recovery algorithms such as Lasso which required decaying correlations, e.g.\ $\rho = O(1/\log N)$ \cite{candesplan}. We refer the reader to \cite{aistats} for extension to the conditionally IID framework, along with analysis and numerical experiments for the correlated sensing columns setup.

\subsection{Bouquet Model for Support Elements}

In this subsection, we consider another variation of the linear regression problem, where for each sample $t$,
\[ Y^{(t)} = \langle X^{(t)}, \beta^{(t)} \rangle + W^{(t)}, \]
with each $\beta^{(t)}$ having the same support $S$, but different coefficients $\bS^{(t)}$ obeying a ``bouquet model''
\[ \bS^{(t)} = \bS^{(0)} + V^{(t)}. \]
We assume $V^{(t)}$ is IID across samples $t$ and is described by a zero-mean Gaussian with variance $\sigma_v^2$. 
The aforementioned model is an example of linear regression models with ``time-varying'' support, which have been previously considered in the literature \cite{vaswani}.

In order to analyze this model in our general framework, we remark that $\bS^{(0)}$ can be considered a latent observation model parameter that is constant across $t=1,\ldots,T$, corresponding to $\bS$ in our setup in Section \ref{sec:setup}. The ``noise'' in $\bS^{(t)}$, $V^{(t)}$, can simply be incorporated into the observation model $P(Y | X_S, \bS, S)$. A straightforward analysis of the mutual information $I_\St(b)$ using Jensen's inequality arguments similar to the proof of Theorem \ref{thm:missing_cs} (omitted here for brevity) reveals that a lower bound on $I_\St(b)$ is $\frac{1}{2} \log \left( 1 + \frac{\|\beta_\SSt\|^2 \sigma_x^2}{\sigma_w^2 + K \sigma_x^2 \sigma_v^2}\right)$. Thus, we can obtain an upper bound on the number of measurements similar to \eqref{eq:cs_ub}, where we show the effect of noise in $\bS^{(t)}$ to be equivalent to measurement noise with variance $\sigma_w^2 + K \sigma_x^2 \sigma_v^2$ as opposed to $\sigma_w^2$.

\begin{remark}
We remark that the mutual information analysis can be easily performed for different distributions (other than Gaussian) on the sensing matrix elements $X$ and the measurement noise $W$. It is only necessary to compute the mutual information $I_\St(b) = I(X_\SSt ; Y | X_\St, \bS = b, S)$ for the different probability distributions and ensure that the smoothness conditions (RC1-2) hold, if an upper bound for scaling models is desired. This is another advantage to our unifying framework, since problem-specific approaches need significantly different analyses to extend to different distributions of sensing matrices and measurement noise.
\end{remark}

\subsection{Multivariate Regression}
\label{subsec:multi}

In this problem, we consider the following linear model \cite{negahban}, where we have a total of $R$ linear regression problems,
\begin{equation*}
  \Y_{\{r\}} = \X_{\{r\}} \beta_{\{r\}} + \bm{W}_{\{r\}},~~~r=1,\ldots,R.
\end{equation*}
For each $r$, $\beta_{\{r\}} \in \mathbb{R}^N$ is a $K$-sparse vector, $\X_{\{r\}} \in \mathbb{R}^{T \times N}$ and $\Y_{\{r\}} \in \mathbb{R}^T$. The relation between different tasks is that $\beta_{\{r\}}$ have joint support $S$. This setup is also called multiple linear regression or distributed compressive sensing \cite{dcs} and is useful in applications such as multi-task learning \cite{jalali}.

\begin{figure}
  \centering
  \includegraphics[width=0.9\textwidth]{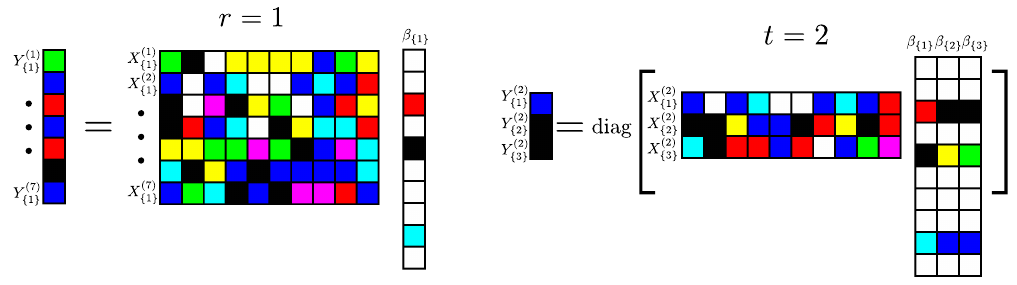}
  \caption{Mapping the multiple linear regression problem to a vector-valued outcome and variable model. On the left is the representation for a single problem $r=1$. On the right is the corresponding vector formulation, shown for sample index $t=2$.}
  \label{fig:multi}
\end{figure}

It is easy to see that this problem can be formulated in our sparse recovery framework, with vector-valued outcomes $Y$ and variables $X$. Namely, let $Y = (Y_{\{1\}}, \ldots, Y_{\{R\}}) \in \mathbb{R}^R$ be a vector-valued outcome, $X = (X_{\{1\}}^\top, \ldots, X_{\{R\}}^\top)^\top \in \mathbb{R}^{R \times N}$ be the collection of $N$ vector-valued variables and $\beta = (\beta_{\{1\}}, \ldots, \beta_{\{R\}}) \in \mathbb{R}^{N \times R}$ be the collection of $R$ sparse vectors sharing support $S$, making it block-sparse. This mapping is illustrated in Figure \ref{fig:multi}. Assuming independence between $X_{\{r\}}$ and support coefficients $\beta_{\{r\},S}$ across $r = 1,\ldots,R$, we have the following observation model:
\begin{align*}
P (Y | X,S) = p(Y | X_S) = \prod_{r=1}^R p(Y_{\{r\}} | X_{\{r\},S}) = \prod_{r=1}^R \int_{\mathbb{R}^K} p(Y_{\{r\}} | X_{\{r\},S}, \beta_{\{r\},S}) p(\beta_{\{r\},S}) \dx \beta_{\{r\},S}.
\end{align*}

We state the following theorem for the specific linear model in Section \ref{subsec:lcs}, as a direct result of Theorem \ref{thm:lcs} and the fact that the joint mutual information decomposes to $R$ identical mutual information terms in view of the equality above.

\begin{theorem}
  The lower and upper sample complexity bounds $T_\text{multi}$ per task for the linear multi-regression model above are $\frac{T_o}{R}$, where $T_o$ are the corresponding sample complexity bounds \eqref{eq:cs_lb} and \eqref{eq:cs_ub} in Theorem \ref{thm:lcs}. 
\end{theorem}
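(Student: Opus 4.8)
The plan is to deduce the result directly from Theorem~\ref{thm:lcs} by showing that passing from a single linear‐regression task to the $R$-task block‐sparse model multiplies every quantity entering the recovery bounds by $R$ and changes nothing else. The key structural fact is the factorization displayed just before the statement, which gives
\[
p\bigl(Y^T, X_\Stwo^T \mid X_\Sone^T\bigr) \;=\; \prod_{r=1}^{R} p\bigl(Y^T_{\{r\}}, X^T_{\{r\},\Stwo} \,\big|\, X^T_{\{r\},\Sone}\bigr),
\]
because the variable blocks $X_{\{1\}},\dots,X_{\{R\}}$ and the coefficient blocks $\beta_{\{1\},S},\dots,\beta_{\{R\},S}$ are mutually independent and each $Y_{\{r\}}$ depends only on $(X_{\{r\},S},\beta_{\{r\},S})$. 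Since $Q(X_\Sone^T)$ also factors over blocks, substituting this product into the definition \eqref{eq:E0} of the error exponent makes the inner sum over $X_\Sone^T$ and the outer sum over $(Y^T,X_\Stwo^T)$ both split into products over $r$, so $E_o(\rho) = \sum_{r=1}^R E_o^{(r)}(\rho) = R\,E_o^{(1)}(\rho)$, where $E_o^{(1)}$ is the single-task exponent (all tasks being identically distributed). The same computation applied to the single-letter conditional exponent of \eqref{eq:Eo_b} gives $E_o(\rho,\bS) = R\,E_o^{(1)}(\rho,\beta_{\{1\},S})$ after identifying $\bS$ with $(\beta_{\{1\},S},\dots,\beta_{\{R\},S})$, and since the minimization objective in \eqref{eq:MIworst} is then a sum of $R$ per-task terms, the $r$-th depending only on $\beta_{\{r\},S}$, the worst-case $\bmin$ for the block model may be taken as the concatenation of the per-task worst cases.

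Next I would differentiate at $\rho=0$, exactly as in the proof of Theorem~\ref{thm:main_result}: since the conditional mutual information equals $E_o'(0,\cdot)$, the relations above transfer to the mutual informations, yielding, for every partition $(\Sone,\Stwo)$ with $|\Sone|=i$,
\[
I(X_\Sone; Y \mid X_\Stwo, \bS, S) = R\, I(X_\Sone; Y_{\{1\}} \mid X_\Stwo, \beta_{\{1\},S}, S), \qquad
I(X_\Sone; Y \mid X_\Stwo, \bmin, S) = R\, I(X_\Sone; Y_{\{1\}} \mid X_\Stwo, \beta_{\{1\},\min}, S),
\]
whose right-hand sides are precisely the single-task linear-regression mutual informations $\tfrac12\log(1+i\sigma^2\SNR/T)$ and $\tfrac12\log(1+i b_{\min}^2 \SNR/T)$ used in Theorem~\ref{thm:lcs}. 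I would also record that the second-derivative condition needed for the scaling versions (Theorems~\ref{thm:main_result2}, \ref{thm:main_result3}) is preserved with the same $\tau_N$, since $E_o''$ and $I$ scale by the same factor $R$, and that the regularity hypotheses on $\beta$ (the per-coordinate lower bound $p_{\min}$ on $P(\beta_k)$, and the fixed-sparsity regime used for the sufficiency direction) carry over verbatim to the block-sparse $\beta$.

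Finally I would substitute these into the recovery conditions. In Theorem~\ref{thm:LB} (necessity) and Theorem~\ref{thm:main_result} (sufficiency) the combinatorial numerators $\log\binom{N-K+i}{i}$ and $\log\binom{N-K}{i}$ are unchanged by the block construction (same $N$, same $K$), while every denominator is multiplied by $R$; hence the threshold on the number of rows $T$ is exactly the single-task threshold divided by $R$. Since $T_o$ in Theorem~\ref{thm:lcs} is by definition obtained by evaluating these same two theorems on the single-task mutual informations, this gives $T_\text{multi} = T_o/R$ for both the necessary and the sufficient direction, which is the claim.

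The main obstacle I anticipate is the careful bookkeeping of the factorization of $E_o(\rho)$ (and of its single-letter conditional version) over the $R$ tasks — in particular getting the conditioning right so that the inner sum over $X_\Sone^T$ genuinely splits into a product, which rests on $X$ being IID across blocks and on each $Y_{\{r\}}$ factoring through $X_{\{r\},S}$ alone — together with the argument that the worst-case coefficient matrix decomposes across tasks. Once those two facts are in place, the remainder is just reading off the statements of Theorems~\ref{thm:main_result}, \ref{thm:LB} and~\ref{thm:lcs}.
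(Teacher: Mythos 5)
Your proposal is correct and follows essentially the same route as the paper: the paper's own (one-line) argument is precisely that the observation model factorizes over the $R$ tasks, so the joint mutual information decomposes into $R$ identical single-task terms, and the bounds of Theorems \ref{thm:main_result} and \ref{thm:LB} then yield $T_o/R$. Your additional bookkeeping at the level of the error exponent $E_o(\rho)$ and the worst-case $\bmin$ is a more detailed justification of that same decomposition, not a different approach.
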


\begin{remark}
We showed that having $R$ problems with independent measurements and sparse vector coefficients decreases the number of measurements per problem by a factor of $1/R$. While having $R$ such problems increases the number of measurements $R$-fold, the inherent uncertainty in the problem is the same since the support is shared. It is then reasonable to expect such a decrease in the number of measurements.
\end{remark}

\section{Applications with Nonlinear Observations}
\label{sec:apps_nonlinear}

In this section, we consider several problems where the relationship between the input variables and the observations are nonlinear. We first look at a general framework where some of the variables are not observed, i.e., each variable is missing with some probability. We then analyze probit regression and group testing problems as other examples of problems with nonlinear observations.

\subsection{Models with Missing Features}
\label{subsec:missing}

Consider the general sparse signal processing model as described in Section \ref{sec:setup}. However, assume that instead of fully observing outcomes $\Y$ and features $\X$, we observe a $T \times N$ matrix $\Z$ instead of $\X$, with the relation
\begin{equation*}
Z_i^{(t)} = 
\left\{
\begin{array}{ll}
        X_i^{(t)}, & \mbox{w.p.\  } 1 - \rho\\
        m, & \mbox{w.p.\  } \rho
\end{array}\right. \quad \forall i, t \\
\end{equation*}
i.e., we observe a version of the feature matrix which may have entries missing with probability $\rho$, independently for each entry. We show how the sample complexity changes relative to the case where the features are fully observed. The missing data setup for specific problems have previously been considered in the literature \cite{loh,caramanis}.

First we present a universal lower bound on the number of samples for the missing data framework, by relating $I(Z_\SSt ; Y | Z_\St, \bS, S)$ to $I(X_\SSt ; Y | X_\St, \bS, S)$.

\begin{theorem} \label{thm:missing_lb}
Consider the missing data setup described above. Then we have the lower bound on the sample complexity $T_\text{miss} \geq \frac{T_o}{1-\rho}$, where $T_o$ is the lower bound on the sample complexity for the fully observed variables case given in Theorem \ref{thm:eps_lb}.
\end{theorem}

\begin{proof}
  We compute $I(Z_\SSt;Y | Z_\St,\bS = b,S)$ in terms of $I(X_\SSt;Y | X_\St,\bS = b,S)$. To do that, we compute $H(Y | Z_S, \bS=b, S)$ for any set $S$. To simplify the expressions, we omit the conditioning on $b$ and $S$ in all entropy and mutual information expressions below.
  \begin{align}
    H( Y | Z_S) = & H(Y, Z_S) - H(Z_S) \label{eq:exMD1}\\
     = & H(Y, Z_S, X_S) - H(X_S | Y, Z_S) 
     - \left( H(Z_S, X_S) - H (X_S | Z_S) \right) \label{eq:exMD2}\\
     = & H(Y | Z_S, X_S) - H(X_S | Y, Z_S) + H(X_S | Z_S) \label{eq:exMD3}\\
     = & H(Y | X_S) - H(X_S | Y, Z_S) + \sum_{k \in S} H(X_k | Z_k) \label{eq:exMD4} \\
     = & H(Y | X_S) - H(X_S | Y, Z_S) 
     + \sum_{k \in S} \big(\rho H(X_k | Z_k=m) + (1\!-\!\rho)H(X_k | Z_k\!=\!X_k)\big) \label{eq:exMD5}\\
     = & H(Y | X_S) - H(X_S | Y, Z_S) + \sum_{k \in S} \rho H(X_k) \label{eq:exMD6}\\
     = & H(Y | X_S) - H(X_S | Y, Z_S) + \rho H(X_S) \label{eq:exMD7} 
  \end{align}
  \eqref{eq:exMD1}, \eqref{eq:exMD2} and \eqref{eq:exMD3} follow from the chain rule of entropy. \eqref{eq:exMD4} follows from the conditional independence of $Y$ and $Z_S$ given $X_S$ and the independence of $Z_S, X_S$ over $k \in S$. In \eqref{eq:exMD5}, we explicitly write the conditional entropies for two values of $Z_i$. These expressions simplify to \eqref{eq:exMD6} and we group the terms over $k \in S$ to obtain \eqref{eq:exMD7}.
  
  For any set $\hat{S}$ with elements $1,\ldots,|\hat{S}|$, we can write
  \begin{align}
    H(X_{\hat{S}} | Y, Z_{\hat{S}}) & = \sum_{k=1}^{|\hat{S}|} H(X_k | Y, Z_{k,\ldots,|\hat{S}|}, X_{1,\ldots,k-1}) \label{eq:exMD8} \\
    & = \rho \sum_{k=1}^{|\hat{S}|} H(X_k | Y, Z_{k+1,\ldots,|\hat{S}|}, X_{1,\ldots,k-1}) \label{eq:exMD9}\\
    & = \rho \sum_{k=1}^{|\hat{S}|} H(X_k | Y, X_{1,\ldots,k-1}) - I(X_k ; Z_{k+1,\ldots,|\hat{S}|} | Y, X_{1,\ldots,k-1}) \label{eq:exMD10} \\
    & = \rho H(X_{\hat{S}} | Y) - \rho \sum_{k=1}^{|\hat{S}|} I(X_k ; Z_{k+1,\ldots,|\hat{S}|} | Y, X_{1,\ldots,k-1}), \label{eq:exMD11}
  \end{align}
  where \eqref{eq:exMD8} follows from the chain rule and the independence of $X_j$ and $Z_k$ given $X_k$, \eqref{eq:exMD9} by expanding the conditioning on $Z_k$, \eqref{eq:exMD10} from the definition of mutual information and \eqref{eq:exMD11} from the chain rule.
  
  W.l.o.g., assume $S = \{1, \ldots, K\}$ and $\St = \{1, \ldots, K-i\}$. Finally, using the above expressions we have
  \begin{align*}
    I(Z_\SSt;Y | Z_\St) & = H(Y | Z_\St) - H(Y | Z_S) \\
  & = H(Y | X_\St) - H(Y | X_S) + \rho (H(X_\St)  - H(X_S)) - \rho(H(X_\St | Y) - H(X_S|Y)) \\
  & \quad - \rho\left(\sum_{k=1}^K I(X_k ; Z_{k+1,\ldots,K} | Y, X_{1,\ldots,k-1}) - \sum_{k=1}^{K-i} I(X_k ; Z_{k+1,\ldots,K-i} | Y, X_{1,\ldots,k-1}) \right) \\
  & = I(X_\SSt ; Y | X_\St) + \rho (I(X_\St ; Y) - I(X_S ; Y)) \\
  & \quad - \rho\left( \sum_{k=1}^{K-i} I(X_k ; Z_{K-i+1,\ldots,K} | Y, X_{1,\ldots,k-1}, Z_{k+1,\ldots,K-i}) + \sum_{k=K-i+1}^K I(X_k ; Z_{k+1,\ldots,K} | Y, X_{1,\ldots,k-1}) \right) \\
  & \leq I(X_\SSt ; Y | X_\St) + \rho (H(Y| X_S) - H(Y | X_\St)) = (1-\rho) I(X_\SSt ; Y | X_\St).
  \end{align*}
  The first two equalities follow from the expressions we found earlier and the third equality follows from the definition of the mutual information by rearranging the sums and using the chain rule of mutual information. The last inequality follows from the non-negativity of mutual information and expanding the mutual information expressions in the first set of parentheses. The lower bound then follows from Theorem \ref{thm:eps_lb}.
\end{proof}

As a special case, we analyze the sparse linear regression model with missing data \cite{loh,caramanis}, where we obtain a model-specific upper bound on the sample complexity, in addition to the universal lower bound given by Theorem \ref{thm:missing_lb}. We consider the setup of \cite{wang,Rad-11} in the lower SNR regime $\bmin = \Theta(1/K)$, however analogous results can be shown for higher SNR regimes or for the setup of \cite{shuchin}. The proof is given in the appendix.

\begin{theorem} \label{thm:missing_cs}
  For the sparse linear regression setting of \cite{wang,Rad-11} considered in Section \ref{subsec:lcs} with $\bmin = \Theta(1/K)$ and variable matrix entries missing w.p.\ $\rho$, $T = \Omega\left(\frac{K \log N}{\log\left(1 + \frac{1-\rho}{1+\rho}\right)}\right)$ samples are sufficient for exact recovery for correlated $\bS$ and $K=O(N)$, or sufficient for recovery with a vanishing fraction of support errors for IID $\bS$ and $K = O(N/\log N)$.
\end{theorem}

\begin{remark}\label{remark:missing_cs}
We observe that the number of sufficient samples increases by a factor of $\frac{1}{\log\left(1 + \frac{1-\rho}{1+\rho}\right)}$ for missing probability $\rho$. Compare this to the upper bound given by \cite{caramanis} with scaling $\frac{1}{(1-\rho)^4}$, where the authors propose and analyze an orthogonal matching pursuit algorithm to recover the support $S$ with noisy or missing data. In this example, we have shown an upper bound that improves upon the bounds in the literature, with an intuitive universal lower bound.
\end{remark}

This example highlights the flexibility of our results in view of the mutual information characterization. This flexibility enables us to easily compute new bounds and establish new results for a very wide range of general models and their variants.

\subsection{Binary Regression}
\label{subsec:1bit}

As an example of a nonlinear observation model, we look at the following binary regression problem, also called 1-bit compressive sensing \cite{1bit,jacques,gupta} or probit regression. Regression with 1-bit measurements is interesting as the extreme case of regression models with quantized measurements, which are of practical importance in many real world applications. The conditions on the number of measurements have been studied for both noiseless \cite{jacques} and noisy \cite{gupta} models and $T=\Omega(K \log N)$ has been established as a sufficient condition for Gaussian variable matrices.

Following the problem setup of \cite{gupta}, we have
\begin{equation}
  \Y = q(\X \beta + \bm{W}),
\end{equation}
where $\X$ is a $T \times N$ matrix with IID standard Gaussian elements, and $\beta$ is an $N \times 1$ vector that is $K$-sparse with support $S$. We assume $\beta_k^2 \geq \bmin$ for $k \in S$ for a constant $\bmin$. $\bm{W}$ is a $T \times 1$ noise vector with IID standard Gaussian elements. $q(\cdot)$ is a 1-bit quantizer that outputs $1$ if the input is non-negative and $0$ otherwise, for each element in the input vector. 
This setup corresponds to the constant $\SNR$ regime in \cite{gupta}. We consider the constant $K$ regime and use the results of Theorem \ref{thm:latent_ub} to write the following theorem.

\begin{theorem}
\label{thm:qcs}
For probit regression with IID Gaussian variable matrix and the above setup, $T = \Omega(K \log N)$ measurements are sufficient to recover $S$, the support of $\beta$, with an arbitrarily small average error probability.
\end{theorem}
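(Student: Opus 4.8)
The probit model fits the framework of Section~\ref{sec:problem} by taking $X$ to be IID standard Gaussian and the latent parameter to be $\bS=\beta_S$: conditioned on $X_S$ and $\bS$, the outcome is $Y=q(\langle X_S,\bS\rangle + W)$ with $W\sim{\cal N}(0,1)$ independent of everything. Assumptions (A1)--(A4) are then immediate (equiprobable support; $P(Y|X,S)=P(Y|X_S,S)$; IID standard Gaussian columns; permutation symmetry of the inner product). The plan is to invoke the fixed‑sparsity sufficiency bound of Theorem~\ref{thm:main_result}: it suffices to lower bound the worst‑case conditional mutual information $I(X_\Sone;Y\mid X_\Stwo,\bmin,S)$ for every $i=|\Sone|\in\{1,\ldots,K\}$ and then check that the induced requirement $(1+\epsilon)\max_i \log\binom{N-K}{i}/I(X_\Sone;Y\mid X_\Stwo,\bmin,S)$ is $O(K\log N)$. (If one wants $K$ to grow with $N$ one uses Theorem~\ref{thm:main_result3} instead, with the second‑derivative condition \eqref{eq:2nd} verified via Lemma~\ref{lemma:2nd-bound}; the mutual‑information computation is the same.)

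\emph{Reduction to a scalar channel.}
Fix a partition $(\Sone,\Stwo)$ of $S$ and a value $\bS=b$. Since $Y$ depends on $X_\Sone$ only through the scalar $Z\triangleq\langle X_\Sone,b_\Sone\rangle\sim{\cal N}(0,\sigma_1^2)$ with $\sigma_1^2=\sum_{k\in\Sone}b_k^2$, we have $I(X_\Sone;Y\mid X_\Stwo,\bS=b,S)=I(Z;Y\mid X_\Stwo,\bS=b,S)$, and conditioning on $X_\Stwo$ merely shifts the channel by $\mu\triangleq\langle X_\Stwo,b_\Stwo\rangle$, so $Y=q(Z+\mu+W)$. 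Writing $h(p)=-p\log p-(1-p)\log(1-p)$ for the binary entropy, $\Phi$ for the standard normal CDF, and $\sigma_2^2=\sum_{k\in\Stwo}b_k^2$, the two pieces $H(Y\mid X_\Stwo,\bS=b,S)$ and $H(Y\mid X_S,\bS=b,S)$ evaluate to
\begin{equation*}
I(X_\Sone;Y\mid X_\Stwo,\bS=b,S)=E_{\mu\sim{\cal N}(0,\sigma_2^2)}\!\left[h\!\left(\Phi\!\left(\frac{\mu}{\sqrt{\sigma_1^2+1}}\right)\right)\right]\;-\;E_{V\sim{\cal N}(0,\sigma_1^2+\sigma_2^2)}\!\left[h(\Phi(V))\right].
\end{equation*}
By the footnote to Theorem~\ref{thm:main_result} the worst case over $b$ may be taken of the form $\bmin=b\mathbf{1}$, so it is attained at the smallest admissible magnitude, fixing $\sigma_1^2=i\,b_{\min}^2$ and $\sigma_2^2=(K-i)\,b_{\min}^2$.

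\emph{Lower bounding the mutual information and concluding.}
Because $b_\Sone\neq0$, the outcome $Y$ is not conditionally independent of $X_\Sone$ given $(X_\Stwo,\bS,S)$, so conditioning strictly reduces entropy and the displayed difference is strictly positive for every $i$ and every admissible $b$. To make this usable one must bound the worst case over $b$ away from $0$; following the Remark after Theorem~\ref{thm:main_result}, fix $\delta>0$, choose $b_{\min}(\delta)>0$ with $\Pr[\min_{k\in S}|\beta_k|\geq b_{\min}(\delta)]\geq1-\delta$, work on this typical event (costing at most $\delta$ in error probability), and bound the worst‑case mutual information below there by some $c_K(\delta)>0$ — which for fixed $K$ is simply a positive constant in $i$, since $i$ runs over a finite range, and requires no typical‑set step at all if the prior on $\bS$ is already supported away from $0$. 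Using $\log\binom{N-K}{i}\leq i\log(eN)\leq K\log(eN)$ for $1\leq i\leq K$, the sufficient condition of Theorem~\ref{thm:main_result} is implied by $T>\frac{1+\epsilon}{c_K(\delta)}K\log(eN)$; hence $T=\Omega(K\log N)$ suffices (the sharper $\log\binom{N-K}{i}=\Theta(i\log(N/i))$ estimate in fact yields $T=\Omega(K\log(N/K))$).

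\emph{Main obstacle.}
The heart of the argument is the lower bound on the one‑bit‑quantized Gaussian channel mutual information $I(X_\Sone;Y\mid X_\Stwo,\bmin,S)$ that is uniform over $i$: its strict positivity is immediate from ``conditioning reduces entropy,'' but extracting an explicit constant (and tracking its dependence on $K$, or equivalently verifying \eqref{eq:2nd} through Lemma~\ref{lemma:2nd-bound}, if one wants $K$ to scale with $N$) is where the real work lies; the only remaining subtlety, near‑degenerate $\bS$, is disposed of by the typical‑set reduction above.
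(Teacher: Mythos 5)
Your setup is right: the reduction to a scalar one‑bit Gaussian channel, the identity $I(X_\Sone;Y\mid X_\Stwo,\bS=b,S)=E_{\mu}[h(\Phi(\mu/\sqrt{\sigma_1^2+1}))]-E_{V}[h(\Phi(V))]$, and the appeal to Theorem \ref{thm:main_result} all match the paper (which, incidentally, sidesteps the latent variable entirely by taking the degenerate case $\beta_S=\mathbf{1}$, so your typical‑set digression is not needed for the stated setup). But there is a genuine gap at exactly the point you flag as ``where the real work lies'': the paper's proof \emph{is} that work, and your substitute for it does not deliver the theorem. You argue that the mutual information is strictly positive and hence bounded below by some $c_K(\delta)>0$ uniformly over the finitely many values of $i$, then bound the numerator by $K\log(eN)$ to get $T>\frac{1+\epsilon}{c_K(\delta)}K\log(eN)$. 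This only yields $T=\Omega(C_K\log N)$ for an uncontrolled $K$‑dependent constant, which is vacuous as a statement about the $K\log N$ scaling; worse, no $K$‑independent lower bound $c$ exists, because the per‑coordinate information genuinely vanishes as $K$ grows. With standard Gaussian columns and $\beta_S=\mathbf{1}$ the residual signal $Z_2=\sum_{j\in\Stwo}X_j$ has variance $K-i$, which swamps the $i+1$ variance of $Z_1+W$, so $Y$ is nearly determined by $X_\Stwo$ alone and one finds $H(Y\mid X_\Stwo)=\Theta(\sqrt{(i+1)/K})$ and $H(Y\mid X_S)=O(1/\sqrt{K})$, whence $I(X_\Sone;Y\mid X_\Stwo)=\Theta(\sqrt{i/K})$, not $\Theta(1)$.

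The paper establishes these two‑sided entropy estimates by explicit computation with the bounds $\tfrac{1}{12}e^{-x^2}\leq\mathcal{Q}(x)\leq\tfrac12 e^{-x^2/2}$ and Gaussian integrals, and the theorem then follows from a cancellation your argument never sees: the numerator $\log\binom{N-K}{i}\binom{K}{i}=\Theta(i\log N)$ grows linearly in $i$ while the denominator decays only like $\sqrt{i/K}$, so the ratio is $O(\sqrt{iK}\log N)\leq O(K\log N)$, with the maximum attained at $i=K$. Without a quantitative lower bound on $I(X_\Sone;Y\mid X_\Stwo,\bmin,S)$ as a function of both $i$ and $K$ — which is the entire content of the paper's proof — the conclusion $T=\Omega(K\log N)$ is not established in any sense that tracks the dependence on $K$. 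To complete your proof you would need to carry out (some version of) these Gaussian tail estimates on both conditional entropies.
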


The proof is provided in the appendix.

\begin{remark}
Similar to linear regression, for probit regression with noise we provided a sufficiency bound that matches \cite{gupta} for an IID Gaussian matrix, for the corresponding SNR regime.
\end{remark}

We note that a lower bound on the number of measurements can also be obtained trivially, since the mutual information is upper bounded by the entropy of the measurement $Y$. Since we consider binary measurements, this leads to the lower bound $T = \Omega(K \log(N/K))$ for exact recovery through Theorem \ref{thm:eps_lb}.

\subsection{Group Testing - Boolean Model}
\label{subsec:group_testing}

\begin{figure}[tp]
  \centering
  \includegraphics[width=0.45\textwidth]{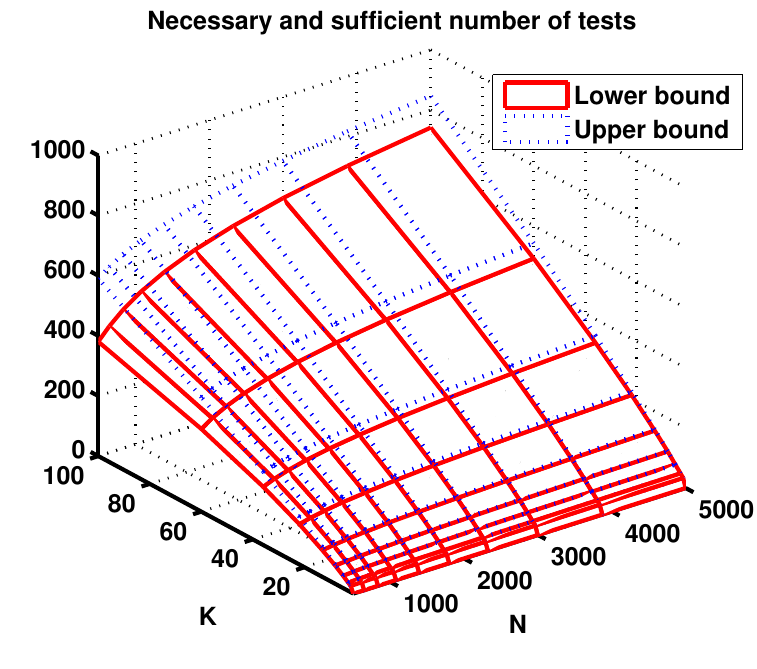}
  \caption{Upper and lower bounds on the number of tests $T$. The logarithmic dependence on $N$ and linear dependence on $K$ can be observed for large $N$. Also note that the bounds become tight as $N \to \infty$.}
  \label{fig:gt_bound}
\end{figure}

In this section, we consider another nonlinear model, namely, group testing. This problem has been covered comprehensively in \cite{group_testing} and the results derived therein can also be recovered using the generalized results we presented in this paper as we show below.

The problem of group testing can be summarized as follows. Among a population of $N$ items, $K$ unknown items are of interest. The collection of these $K$ items represents the defective set. The goal is to construct a pooling design, i.e., a collection of tests, to recover the defective set while reducing the number of required tests. In this case $\X$ is a binary measurement matrix defining the assignment of items to tests. For the noise-free case, the outcome of the tests $\Y$ is deterministic. It is the Boolean sum of the codewords corresponding to the defective set $S$, given by $\Y=\bigvee_{i\in{S}}\X_i$.

Note that for this problem there does not exist a latent observation parameter $\bS$. Therefore we have $I_{\St,0} = I(X_\SSt ; Y | X_\St, S)$ as the mutual information quantity characterizing the sample complexity for zero error recovery.

\begin{theorem}\label{thm:group_testing}
For $N$ items and $K$ defectives, the number of tests $T = \Omega(K \log (N/K))$ is necessary for $K = O(N)$ and $T = \Omega(K \log N)$ sufficient for $K=O(N)$, to identify the defective set $S$ exactly with an arbitrarily small average error probability.
\end{theorem}

We prove the theorem in the appendix. The upper and lower bounds on the number of tests (given by Theorem \ref{thm:scaling_ub} and \ref{thm:eps_lb} respectively) for the noiseless case are illustrated in Figure \ref{fig:gt_bound}. The results in \cite{group_testing} also establish upper and lower bounds on the number of tests needed for testing with additive noise (leading to false alarms) and dilution effects (leading to potential misses), as well as worst-case errors. 

We remark that we have been able to remove the extra polylog factor in $K$ in the upper bound of \cite{group_testing} and extended the regime of the upper bound from $K=o(N)$ to $K=O(N)$ with the result above.

\section{Conclusions}
\label{sec:conclusions}

We have presented a unifying framework based on noisy channel coding for analyzing sparse recovery problems. This approach unifies linear and nonlinear observation models and leads to explicit, intuitive and universal mutual information formulas for computing the sample complexity of sparse recovery problems. We explicitly focus on the inference of the combinatorial component corresponding to the support set, provably the main difficulty in sparse recovery. 
We unify sparse problems from an inference perspective based on a Markov conditional independence assumption. Our approach is not algorithmic and therefore must be used in conjunction with tractable algorithms. It is useful for identifying gaps between existing algorithms and the fundamental information limits of different sparse models. It also provides an understanding of the fundamental tradeoffs between different parameters of interest such as $K$, $N$, SNR and other model parameters. It also allows us to obtain new or improved bounds for various sparsity-based models.

\appendix

\renewcommand{\theequation}{A.\arabic{equation}}
\setcounter{equation}{0}

\subsection{First Derivative of $E_o(\rho,\bS)$ and Mutual Information}
\label{subsec:MI_derivative}

Below, we use notation for discrete variables and observations, i.e.\ sums, however the expressions are valid for continuous models if the sums are replaced by the appropriate integrals arising in the mutual information definition for continuous variables. Let $f(\rho) = \sum_{X_\Sone} Q(X_\Sone) p(Y,X_\Stwo|X_\Sone,b)^{\frac{1}{1+\rho}}$, where we omit the dependence of $f$ on $X_\Stwo$, $Y$ and $b$. Then note that $E_o(\rho,b) = -\log\left( \sum_{Y,X_\Stwo} f(\rho)^{1+\rho} \right)$. For the derivative w.r.t.\ $\rho$ we then have
\begin{equation*}
  \frac{\partial E_o(\rho,b)}{\partial \rho} = -\frac{\sum_{Y,X_\Stwo} \frac{d}{d \rho} f(\rho)^{1+\rho}}{\sum_{Y,X_\Stwo} f(\rho)^{1+\rho}} = -\frac{\sum_{Y,X_\Stwo} (1+\rho) f(\rho)^\rho \frac{d f(\rho)}{d\rho} + f(\rho)^{1+\rho} \log f(\rho)}{\sum_{Y,X_\Stwo} f(\rho)^{1+\rho}}.
\end{equation*}
Henceforth, we only consider the numerator since the denominator is obviously equal to 1 at $\rho = 0$. For the derivative of $f(\rho)$ we have
\[ \frac{d f(\rho)}{d\rho} = - \frac{\sum_{X_\Sone} Q(X_\Sone) p(Y,X_\Stwo|X_\Sone,b)^{\frac{1}{1+\rho}} \log p(Y,X_\Stwo|X_\Sone,b)}{(1+\rho)^2}, \]
therefore $\frac{d f(\rho)}{d\rho}\Big\vert_{\rho=0} = -\sum_{X_\Sone} Q(X_\Sone) p(Y,X_\Stwo|X_\Sone,b) \log p(Y,X_\Stwo|X_\Sone,b)$. For the second term, we have $f(0) \log f(0) = p(Y,X_\Stwo|b) \log p(Y,X_\Stwo|b) = \sum_{X_\Sone} Q(X_\Sone) p(Y,X_\Stwo|X_\Sone,b) \log p(Y,X_\Stwo|b)$ and using the independence of $X_\Sone$ and $X_\Stwo$ in the last equality, we can rewrite the numerator as
\begin{align}
  & \sum_{Y, X_\Stwo} \sum_{X_\Sone} Q(X_\Sone) p(Y,X_\Stwo|X_\Sone,b) \left( \log p(Y,X_\Stwo|X_\Sone,b) - \log p(Y,X_\Stwo|b) \right) \nonumber \\
  = & \sum_{Y, X_\Sone, X_\Stwo} p(Y,X_\Sone,X_\Stwo|b) \log \frac{p(Y|X_\Sone,X_\Stwo,b)}{p(Y|X_\Stwo,b)} = I(X_\Sone ; Y | X_\Stwo, \bS=b, S) = I_\Stwo(b). \label{eq:MI_b}
\end{align}

\subsection{Proof of Lemma \ref{lemma:P(Ei)} and Extension to Continuous Models}
\label{subsec:continuous}

\subsubsection*{Proof of Lemma \ref{lemma:P(Ei)}}

As we note in the main section, the proof of the lemma follows along the proof of Lemma III.1 in \cite{group_testing} which considers binary alphabets in \cite{group_testing}, yet readily generalizes to discrete alphabets for $X$ and $Y$. However, because of the latent variables $\bS$, the final step in the bottom of p.\ 1888 of \cite{group_testing} does not hold true and the proof ends with the previous equation. As a result, we have the upper bound 
\begin{equation}\label{eq:P(Ei)_ub}
  P(E_i | B) \leq e^{-\left( T \bar{E}_o(\rho) - \rho \log\binom{N-K}{i} - \log\binom{K}{i} \right)},
\end{equation}
for the multi-letter error exponent $\bar{E}_o(\rho)$ defined as
\begin{equation}\label{eq:Eo_bar}
  \bar{E}_o(\rho) = -\frac{1}{T} \log \left( \sum_\Y \sum_{\X_\St} \left[ \sum_{\X_\SSt} Q(\X_\SSt) p(\Y, \X_\St | \X_\SSt, \bS \in B)^{\frac{1}{1+\rho}} \right]^{1+\rho} \right).
\end{equation}
Note that we obtain the conditioning on $\bS \in B$ since we consider the ML decoder conditioned on that event and we write $p(\Y, \X_\St | \X_\SSt, \bS \in B) = \frac{1}{P(B)} \sum_{b \in B} p(b) p(\Y, \X_\St | \X_\SSt, b) = \frac{1}{P(B)} \sum_{b \in B} p(b) \prod_{t=1}^T p(Y^{(t)}, X_\St^{(t)} | X_\SSt^{(t)}, b)$.

Furthermore, note that Lemma \ref{lemma:P(Ei)} is missing a $\rho$ multiplying the $\frac{\log\binom{K}{i}}{T}$ term compared to Lemma III.1 in \cite{group_testing}; this is due to the fact that we do not utilize the stronger proof argument in Appendix A of \cite{group_testing}, but rather follow from the argument provided in the proof in the main body, as in \cite{gt_correction}. Using that argument, Lemma \ref{lemma:P(Ei)} can be obtained by modifying inequality (c) in p.\ 1887 that upper bounds $\Pr[E_i | \omega_0=1, \mathbf{X}_{S_1}, \Y]$ such that $\binom{N-K}{i}$ is replaced with $\binom{N-K}{i}^\rho$.

To obtain Lemma \ref{lemma:P(Ei)}, we relate this multi-letter error exponent to the single-letter error exponent $E_o(\rho)$ which we defined in \eqref{eq:Eo_lb}. The following lower bound removes the dependence between samples $t$ by considering worst-case $\bS$ and reduces it to a single-letter expression. 

\begin{lemma}\label{lemma:Eo_lb}
  \[ \bar{E}_o(\rho) \geq E_o(\rho) \triangleq \inf_{\bS \in B} E_o(\rho, \bS) - \frac{\rho}{T} H_{\frac{1}{1+\rho}}(\bS | B). \]
\end{lemma}
The lemma is proved later in the appendix. The lower bound above along with \eqref{eq:P(Ei)_ub} leads to Lemma \ref{lemma:P(Ei)}.

\subsubsection*{Continuous models}

Even though the results and proof ideas that were used in Sections \ref{sec:sufficient} and \ref{sec:necessary} are fairly general, the proof of Lemma \ref{lemma:P(Ei)} holds for discrete variables and outcomes using the proof above. In this section, we make the necessary generalizations to state an analogue of Lemma \ref{lemma:P(Ei)} for continuous variable and observation models, specifically for the case ${\cal X} = {\cal Y} = \mathbb{R}$. The extension to finite dimensional real coordinate spaces follows through the same analysis as well. We follow the methodology in \cite{gallager} and \cite{gallagerChapter}.

To simplify the exposition, we consider the extension to continuous variables in the special case of fixed and known $\bS$. In that case, $E_o(\rho)$ as defined in \eqref{eq:Eo_lb} reduces to
\begin{equation} \label{eq:Eo_simple}
  E_o(\rho) = -\log\sum_{Y,X_\St}\left(\sum_{X_\SSt}Q(X_\SSt) p(Y,X_\St|X_\SSt)^{\frac{1}{1+\rho}}\right)^{1+\rho}
\end{equation}
for $0 \leq \rho \leq 1$ with $\frac{\partial E_o(\rho)}{\partial\rho}\Big\vert_{\rho=0} = I(X_\SSt;X_\St,Y|S) = I(X_\SSt ; Y | X_\St,S)$.

We assume a continuous and bounded joint probability density function $Q(X)$ with joint cumulative distribution function $F$. The conditional probability density $p(Y=y|X_S=x)$ for the observation model is assumed to be a continuous and bounded function of both $x$ and $y$. 

Let $X' \in {\cal X}'^N$ be the random vector and $Y' \in {\cal Y}'$ be the random variable generated by the quantization of $X \in {\cal X}^N = \mathbb{R}^N$ and $Y \in {\cal Y} = \mathbb{R}$, respectively, where each variable in $X$ is quantized to $L$ values and $Y$ quantized to $J$ values. Let $F'$ be the joint cumulative distribution function of $X'$. As before, let $\hat{S}(\X,\Y)$ be the ML decoder with continuous inputs with probability of making $i$ errors in decoding denoted by $P(E_i)$. Let $\hat{S}(\X', \Y')$ be the ML decoder that quantizes inputs $\X$ and $\Y$ to $\X'$ and $\Y'$, and have the corresponding probability of error $P'(E_i)$. Define
\begin{equation*}
  E_o(\rho, X', Y') = -\log\sum_{y' \in {\cal Y}'}\sum_{x'_\St \in {\cal X}'^{K-i}}\left[\sum_{x'_\SSt \in {\cal X}'^i}Q(x'_\SSt) p(y',x'_\St|x'_\SSt)^{\frac{1}{1+\rho}}\right]^{1+\rho},
\end{equation*}
\begin{equation*}
  E_o(\rho, X, Y) = -\log \int_{\cal Y} \int_{{\cal X}^{K-i}} \left[ \int_{{\cal X}^i} Q(x_\SSt) p(y , x_\St | x_\SSt)^{\frac{1}{1+\rho}} \dx x_\SSt \right]^{1+\rho} \dx x_\St  \dx y.
\end{equation*}
where the indexing denotes the random variates that the error exponents are computed with respect to.

Utilizing Lemma \ref{lemma:P(Ei)} for discrete models, we will show that an analogue of the lemma holds for the continuous model, i.e., 
\begin{equation}
\label{eq:cont_bound}
  P(E_i)\leq e^{-\left(T E_o(\rho, X, Y) - \rho\log\binom{N-K}{i} - \log\binom{K}{i}\right)}.
\end{equation}
The rest of the proof of Theorem \ref{thm:latent_ub} will then follow as in the discrete case, by noting that $\frac{\partial E_o(\rho, X, Y)}{\partial\rho}\Big\vert_{\rho=0} = I(X_\SSt; Y | X_\St, S)$, with the mutual information definition for continuous variables \cite{coverbook}.

Our approach can be described as follows. We will increase the number of quantization levels for $Y'$ and $X'$, respectively. Then, since the discrete result in \eqref{eq:P(Ei)} holds for any number of quantization levels, by taking limits we will be able to show that
\begin{equation}
  P'(E_i) \leq e^{-\left(T E_o(\rho, X, Y) - \rho\log\binom{N-K}{i} - \log\binom{K}{i}\right)}.
\end{equation}

Since $\hat{S}(\X,\Y)$ is the minimum probability of error decoder, any upper bound for $P'(E_i)$ will also be an upper bound for $P(E_i)$, thereby proving \eqref{eq:cont_bound}.

Assume $Y$ is quantized with the quantization boundaries denoted by $a_1, \ldots, a_{J-1}$, with $Y' = a_j$ if $a_{j-1} < Y \leq a_j$. For convenience denote $a_0 = -\infty$ and $a_J = \infty$. Furthermore, assume the quantization boundaries are equally spaced, i.e. $a_j - a_{j-1} = \Delta_J$ for $2 \leq j \leq J-1$. Now, we have that
\begin{alignat}{2}
  E_o(\rho, X', Y') = & -\log & & \sum_{j=1}^J \sum_{x'_\St} \left[ \sum_{x'_\SSt}Q(x'_\SSt) \left( \int_{a_{j-1}}^{a_j} p(y,x'_\St|x'_\SSt) \dx y \right)^{\frac{1}{1+\rho}}\right]^{1+\rho} \\ 
  = & -\log & & \Bigg\{ \sum_{j=2}^{J-1} \Delta_J \sum_{x'_\St} \left[ \sum_{x'_\SSt}Q(x'_\SSt) \left( \frac{\int_{a_{j-1}}^{a_j} p(y,x'_\St|x'_\SSt) \dx y}{\Delta_J} \right)^{\frac{1}{1+\rho}}\right]^{1+\rho} \\
  & & & \; + \sum_{x'_\St} \left[ \sum_{x'_\SSt}Q(x'_\SSt) \left( \int_{-\infty}^{a_1} p(y,x'_\St|x'_\SSt) \dx y \right)^{\frac{1}{1+\rho}}\right]^{1+\rho} \\
  & & & \; + \sum_{x'_\St} \left[ \sum_{x'_\SSt}Q(x'_\SSt) \left( \int_{a_{J-1}}^{\infty} p(y,x'_\St|x'_\SSt) \dx y \right)^{\frac{1}{1+\rho}}\right]^{1+\rho} \Bigg\}.
\end{alignat}

Let $J \to \infty$ and for each $J$ choose the sequence of quantization boundaries such that $\lim \Delta_J = 0$, $\lim a_{J-1} = \infty$, $\lim a_1 = -\infty$. Then, the last two terms disappear and using the fundamental theorem of calculus, we obtain
\begin{equation}
  \lim_{J \to \infty} E_o(\rho, X', Y') = E_o(\rho, X', Y) = -\log \int_{\cal Y} \sum_{x'_\St} \left[ \sum_{x'_\SSt}Q(x'_\SSt) p(y,x'_\St|x'_\SSt)^{\frac{1}{1+\rho}}\right]^{1+\rho} \dx y.
\end{equation}

It can also be shown that $E_o(\rho, X', Y')$ increases for finer quantizations of 
$Y'$, therefore $E_o(\rho, X', Y)$ gives the smallest upper bound over $P'(E_i)$ over the quantizations of $Y$, similar to \cite{gallager}. However, this is not necessary for the proof.

We repeat the same procedure for $X$. Assume each variable $X_n$ in $X$ is quantized with the quantization boundaries denoted by $b_1, \ldots, b_{L-1}$, with $X_n' = b_l$ if $b_{l-1} < X_n \leq b_l$. For convenience denote $b_0 = -\infty$ and $b_L = \infty$. Furthermore, assume that the quantization boundaries are equally spaced, i.e. $b_l - b_{l-1} = \Delta_L$ for $2 \leq l \leq L-1$. Then, we can write
\begin{alignat}{2}
  E_o(\rho, X', Y) = & -\log \int_Y && \sum_{l=1}^L \left[ \sum_{x'_\SSt}Q(x'_\SSt) \left( \int_{b_{l-1}}^{b_l} p(y,x_\St|x'_\SSt) \dx x_\St \right)^{\frac{1}{1+\rho}} \right]^{1+\rho} \dx y \\
  = & -\log \int_{\cal Y} && \sum_{l=1}^L \left[ \int_{{\cal X}^i} \left( \int_{b_{l-1}}^{b_l} p(y,x_\St|x_\SSt) \dx x_\St \right)^{\frac{1}{1+\rho}} \dx F'(x_\SSt) \right]^{1+\rho} \dx y \label{eq:cont_xs1} \\
  = & -\log \int_{\cal Y} && \Bigg\{ \sum_{l=2}^{L-1} \Delta_L \left[ \int_{{\cal X}^i} \left( \frac{\int_{b_{l-1}}^{b_l} p(y,x_\St|x_\SSt) \dx x_\St}{\Delta_L} \right)^{\frac{1}{1+\rho}} \dx F'(x_\SSt) \right]^{1+\rho} \nonumber \\
  &&& \; + \int_{{\cal X}^i} \left( \int_{-\infty}^{b_1} p(y,x_\St|x_\SSt) \dx x_\St \right)^{\frac{1}{1+\rho}} \dx F'(x_\SSt) \nonumber \\
  &&& \; + \int_{{\cal X}^i} \left( \int_{b_{L-1}}^{\infty} p(y,x_\St|x_\SSt) \dx x_\St \right)^{\frac{1}{1+\rho}} \dx F'(x_\SSt)  \Bigg\} \dx y, \label{eq:cont_xs2}
\end{alignat}
where \eqref{eq:cont_xs1} follows with $F'(x_\SSt)$ being the step function that represents the cumulative distribution function of the quantized variables $X'_\SSt$.

Let $L \to \infty$, for each $L$ choose a set of quantization points such that $\lim \Delta_L = 0$, $\lim b_{L-1} = \infty$, $\lim b_1 = -\infty$. Again, the second and third terms disappear and the first sum converges to the integral over $X_\St$. Note that $p(y,x_\St|x_\SSt)$ is a bounded continuous function of all its variables since it was assumed that $Q(x)$ and $p(y|x)$ were bounded and continuous. Also note that $\lim_{L \to \infty} F' = F$, which implies the weak convergence of the probability measure of $X'$ to the probability measure of $X$. Given these facts, using the portmanteau theorem we obtain that $E_{F'} \left[ p(Y,X_\St|X_\SSt)\right] \to E_F\left[ p(Y,X_\St|X_\SSt) \right]$, which leads to
\begin{equation}
  \lim_{L \to \infty} E_o(\rho, X', Y) = -\log \int_{\cal Y} \int_{{\cal X}^{K-i}} \left[ \int_{{\cal X}^i} p(y , x_\St | x_\SSt)^{\frac{1}{1+\rho}} \dx F(x_\SSt) \right]^{1+\rho} \dx x_\St \dx y = E_o(\rho, X, Y).
\end{equation}

This leads to the following result, completing the proof.
\begin{equation}
  P(E_i) \leq P'(E_i) \leq \lim_{J, L \to \infty} e^{-\left(T E_o(\rho, X', Y') - \rho\log\binom{N-K}{i} - \log\binom{K}{i} \right)} = e^{-\left(T E_o(\rho, X, Y) - \rho\log\binom{N-K}{i} - \log\binom{K}{i} \right)}.
\end{equation}

\subsection{Proof of Lemma \ref{lemma:Eo_lb}}
\label{subsec:proof_Eo_lb}

As in the previous proofs, while we use notation for discrete variables and observations, the proof below is valid for continuous models. We omit the conditioning on $\bS \in B$ and let $p(\bS)$ denote the probability distribution and $H_\alpha(\bS)$ the R\'enyi entropy of $\bS$ conditioned on $\bS \in B$ w.l.o.g. 

For the error exponent $\bar{E}_o(\rho)$ as defined in \eqref{eq:Eo_bar}, let 
\[ g_\rho(\Y,\X_\St) = E_{\X_\SSt} \left[ E_{\bS} \left[ p(\Y, \X_\St | \X_\SSt, \bS) \right]^{\frac{1}{1+\rho}} \right]^{1+\rho}, \]
such that $\bar{E}_o(\rho) = -\frac{1}{T}\log \sum_{\Y, \X_\St} g_\rho(\Y, \X_\St)$. We then write the following chain of inequalities:
\begin{align*}
  \sum_{\Y, \X_\St} g_\rho(\Y, \X_\St) & = \sum_{\Y, \X_\St} E_{\X_\SSt} \left[ \left( \sum_{\bS} p(\bS) p(\Y, \X_\St | \X_\SSt, \bS) \right)^{\frac{1}{1+\rho}} \right]^{1+\rho} \\
  & \leq \sum_{\Y, \X_\St} E_{\X_\SSt} \left[ \sum_{\bS} p(\bS)^{\frac{1}{1+\rho}} p(\Y, \X_\St | \X_\SSt, \bS)^{\frac{1}{1+\rho}} \right]^{1+\rho} \\
  & = \sum_{\Y, \X_\St}  \left( \sum_{\bS} p(\bS)^{\frac{1}{1+\rho}} E_{\X_\SSt} \left[p(\Y, \X_\St | \X_\SSt, \bS)^{\frac{1}{1+\rho}}\right] \right)^{1+\rho} \\
  & \leq R_\rho^{1+\rho} \sum_{\Y, \X_\St} \left( \sum_{\bS} p'(\bS) E_{\X_\SSt} \left[p(\Y, \X_\St | \X_\SSt, \bS)^{\frac{1}{1+\rho}}\right] \right)^{1+\rho} \\
  & \leq R_\rho^{1+\rho} \sum_{\Y, \X_\St} \sum_{\bS} p'(\bS) E_{\X_\SSt} \left[p(\Y, \X_\St | \X_\SSt, \bS)^{\frac{1}{1+\rho}}\right]^{1+\rho} \\
  & = R_\rho^{1+\rho} \sum_{\bS} p'(\bS) \sum_{\Y, \X_\St} E_{\X_\SSt} \left[p(\Y, \X_\St | \X_\SSt, \bS)^{\frac{1}{1+\rho}}\right]^{1+\rho} \\
  & = R_\rho^{1+\rho} \sum_{\bS} p'(\bS) \left[ \sum_{Y, X_\St} \left( E_{X_\SSt} \left[p(Y, X_\St | X_\SSt, \bS)^{\frac{1}{1+\rho}}\right] \right)^{1+\rho} \right]^T,
\end{align*}
where the first inequality follows from the subadditivity of exponentiating with $\frac{1}{1+\rho}$ and the second follows by multiplying and dividing inside the sum by $R_\rho = \sum_{\bS} p(\bS)^\frac{1}{1+\rho}$ and defining $p'(\bS) = \frac{p(\bS)^{\frac{1}{1+\rho}}}{R_\rho}$. The third inequality follows using Jensen's inequality. We obtain the final expression by noting that the expression in the square brackets factorizes over $t = 1,\ldots,T$ and is IID over $t$ when conditioned on $\bS$.

Noting that $\log R_\rho = \frac{\rho}{1+\rho} H_{\frac{1}{1+\rho}}(\bS)$, where $H_\alpha(\cdot)$ is the R\'enyi entropy of order $\alpha$, we then have 
\begin{align*}
  \bar{E}_o(\rho) & \geq -\frac{1+\rho}{T}\log R_\rho - \frac{1}{T} \log \sum_{\bS} p'(\bS) \left[ \sum_{Y, X_\St} E_{X_\SSt} \left[p(Y, X_\St | X_\SSt, \bS)^{\frac{1}{1+\rho}}\right]^{1+\rho} \right]^T \\
  & = -\frac{\rho}{T} H_{\frac{1}{1+\rho}}(\bS) - \frac{1}{T} \log \sum_{\bS} p'(\bS) \left[ \sum_{Y, X_\St} E_{X_\SSt} \left[p(Y, X_\St | X_\SSt, \bS)^{\frac{1}{1+\rho}}\right]^{1+\rho} \right]^T \\
  & \geq -\frac{\rho}{T} H_{\frac{1}{1+\rho}}(\bS) - \frac{1}{T} \log \sup_{\bS} \left[ \sum_{Y, X_\St} E_{X_\SSt} \left[p(Y, X_\St | X_\SSt, \bS)^{\frac{1}{1+\rho}}\right]^{1+\rho} \right]^T \\
  & = -\frac{\rho}{T} H_{\frac{1}{1+\rho}}(\bS) + \inf_{\bS} \left(-\log \left[ \sum_{Y, X_\St} E_{X_\SSt} \left[p(Y, X_\St | X_\SSt, \bS)^{\frac{1}{1+\rho}}\right]^{1+\rho} \right] \right) \\
  & = -\frac{\rho}{T} H_{\frac{1}{1+\rho}}(\bS) + \inf_{\bS} E_o(\rho, \bS) = E_o(\rho).
\end{align*}

\subsection{Proof of Lemma \ref{lemma:F_N_bd}}

Let ${\cal D} \subset {\cal C}$ be a countable dense subset and w.l.o.g.\ write ${\cal D} = \{s_i: i \in \mathbb{N}\}$. 
We also note that since $F_N(\rho,s)$ is monotone in $\rho$ and its limit $F_\infty(\rho,s) \triangleq \lim_{N \to \infty} F_N(\rho,s)$ is continuous w.r.t.\ $\rho$, it follows that $F_N(\rho,s)$ is equicontinuous in $\rho$.

Let $\epsilon_1 > 0$. Then, for all $s_i$, there exists $\rho_i > 0$ such that $F_N(\rho_i,s_i) \geq F_N(0,s_i) - \epsilon_1 = 1 - \epsilon_1$ for all $N$ uniformly due to the aforementioned equicontinuity. $\rho_i$ depends on $s_i$ but not on $N$ due to equicontinuity.

Let $\epsilon_2 > 0$. Then, for all $s_i$, there exists $\delta_i > 0$ such that $|F_\infty(\rho_i,s) - F_\infty(\rho_i,s_i)| < \epsilon_2$ for all $s$ such that $|s - s_i| < \delta_i$ due to the continuity of $F_\infty(\rho,s)$ w.r.t.\ $s$. $\delta_i$ depends on $s_i$ and $\rho_i$, which in turn depends only on $s_i$.

Let $\epsilon_3 > 0$. Then, for all $s_i$, there exists $N_i$ such that for all $N \geq N_i$, $|F_N(\rho_i,s) - F_\infty(\rho_i,s)| < \frac{\epsilon_3}{2}$ for all $s$ uniformly due to the uniform convergence of $F_N(\rho_i,s)$ to $F_\infty(\rho_i,s)$ w.r.t.\ $s$. $N_i$ depends on $\rho_i$, which in turn depends only on $s_i$. This implies that for $N \geq N_i$,
\begin{align*}
  |F_N(\rho_i,s) - F_N(\rho_i,s_i)| & = |F_N(\rho_i,s) - F_\infty(\rho_i,s) + F_\infty(\rho_i,s) - F_\infty(\rho_i,s_i) + F_\infty(\rho_i,s_i) - F_N(\rho_i,s_i)| \\
  & \leq |F_N(\rho_i,s) - F_\infty(\rho_i,s)| + |F_\infty(\rho_i,s) - F_\infty(\rho_i,s_i)| + |F_\infty(\rho_i,s_i) - F_N(\rho_i,s_i)| \\
  & < \frac{\epsilon_3}{2} + \epsilon_2 + \frac{\epsilon_3}{2} = \epsilon_2 + \epsilon_3,
\end{align*}
which follows from the triangle inequality.

Define the collection of sets $C = \left\{ B_{\delta_i}(s_i): i \in \mathbb{N} \right\}$, where $B_{\delta_i}(s_i)$ is the open ball with radius $\delta_i$ and center $s_i \in {\cal D}$. Note that $C$ is an open cover of ${\cal C}$ since $\delta_i > 0$ for all $i$ and we chose ${\cal D}$ to be a dense subset of ${\cal C}$. Since ${\cal C}$ is compact, it then follows that there exists a finite subcover $\bar{C}$ of $C$. W.l.o.g.\ let $\bar{C} = \{s_i: i \in \{1,\ldots,p\}\}$ for a constant integer $p$. Also, define the constants $\rho_c = \min_{i = 1,\ldots,p} \rho_i$ and $N_0 = \max_{i = 1,\ldots,p} N_i$, for which we have $\rho_c > 0$ and $N_0 < \infty$. 

Let $s \in {\cal C}$. Since $\bar{C}$ is a finite cover of ${\cal C}$, there exists $i \in \{1,\ldots,p\}$ such that $s \in B_{\delta_i}(s_i)$. This implies that for all $N \geq N_0 \geq N_i$,
\[ F_N(\rho_c,s) \geq F_N(\rho_i,s) \geq F_N(\rho_i,s_i) - (\epsilon_2 + \epsilon_3) \geq 1 - (\epsilon_1 + \epsilon_2 + \epsilon_3), \]
where in the first inequality we used the fact that $F_N(\rho,s)$ is non-decreasing in $\rho$ and $\rho_c \leq \rho_i$. Choosing $\epsilon_1,\epsilon_2,\epsilon_3$ such that $\epsilon_1+\epsilon_2+\epsilon_3 = c$ proves the lemma.

The necessity of compactness and uniform convergence assumptions are apparent from the proof. Without compactness, we cannot find a finite subcover and if we selected $\rho_c = \inf_{i \in \mathbb{N}} s_i$ and $N_0 = \sup_{i \in \mathbb{N}} N_i$, we are not guaranteed that $\rho_c > 0$ and $N_0 < \infty$. Similarly, without uniform convergence the sequence index $N_i(s)$ for which $|F_N(\rho_i,s) - F_\infty(\rho_i,s)| < \frac{\epsilon_3}{2}$ for $N \geq N_i(s)$ would depend on $s$ and we cannot find a $N_i$ to upper bound $N_i(s)$ over all $s \in B_{\delta_i}(s_i)$.

\subsection{Proof of Theorem \ref{thm:partial}}
\label{subsec:partial_proof}

We first distingiush between two scaling regimes, namely $K = O(\log N)$ and $K = \omega(\log N)$. For the first regime, we remark that $H_\frac{1}{2}(\bS) = O(K)$ is always asymptotically dominated by $\log\binom{N-K}{K-|\St|}$, thus the condition \eqref{eq:scaling_ub} in Theorem \ref{thm:scaling_ub} reduces to \eqref{eq:partial_ub} in Theorem \ref{thm:partial}. This implies that we are able to obtain exact recovery, i.e.\ $\lim_{N \to \infty} P(E) = 0$ in the setup considered in Theorem \ref{thm:partial} if $K = O(\log N)$. Thus in the rest of the proof we deal with the case $K = \omega(\log N)$.

Consider a sequence of numbers $\alpha_N \in (0,1)$, representing the fraction of errors in the support that we would like to allow for each $N$. Define the corresponding sequence of error events $E_{\alpha_N}$, where each $E_{\alpha_N}$ is the event that the recovered set has more than $\alpha_N K$ errors, i.e., $|S \setminus \hat{S}_N(\X, \Y)| > (1-\alpha_N) K$. We assume for notational convenience that $\alpha_N K$ correspond to integers.

From the end of proof of Theorem \ref{thm:scaling_ub}, we have that for $T$ satisfying condition \eqref{eq:scaling_ub},
\[ K P(E_i) \leq \exp\left(-c' D_N\right) \leq c \frac{1}{\binom{N-K}{i}}, \]
for constants $c, c'$. Since we have $P(E_{\alpha_N}) \leq \sum_{i=\alpha_N K + 1}^K P(E_i)$, it then follows that
\[ P(E_{\alpha_N}) \leq (1-\alpha_N) K \max_{i=\alpha_N K + 1,\ldots,K} P(E_i) \leq c \frac{1}{\binom{N-K}{\alpha_N K}}. \]

Let $\alpha_N = \frac{1}{\log \log N}$. For this $\alpha_N$ and $i \geq \alpha_N K + 1$, we have
\[ \log\binom{N-K}{K-|\St|} = \log\binom{N-K}{i} = \log\binom{N-K}{K/\log \log N} = \Theta\left( \frac{K}{\log\log N} \log \left(\frac{N \log\log N}{K}\right)\right) = \Omega(K), \]
since $K = O(N/\log N)$. Thus for this choice of $\alpha_N$ we again have that $H_\frac{1}{2} = O(K)$ is asymptotically dominated by $\log\binom{N-K}{K-|\St|}$. This implies that the above bound on $P(E_{\alpha_N})$ can be achieved with the condition \eqref{eq:partial_ub} on $T$ rather than \eqref{eq:scaling_ub}.

Investigating the upper bound on the error probability for this choice of $\alpha_N = \frac{1}{\log\log N}$, we have
\[ P(E_{\alpha_N}) \leq c \frac{1}{\binom{N-K}{\alpha_N K}} = \Theta\left( \left(\frac{K}{N \log\log N}\right)^\frac{K}{\log\log N} \right) = O( N^{-(1+q)} ). \]
for some constant $q > 0$. To see that the last equivalence holds, take the $-\log$ of both sides, where for the left-hand side we have $\Theta\left( \frac{K}{\log\log N} \log \left( \frac{N \log\log N}{K} \right)\right)$. If we show that this term is lower bounded by $c' + (1+q)\log N$ for any constant $c'$ it implies that the equivalence holds. Using the fact that $K = \omega(\log N)$, we can first lower bound this term by $\frac{C \log N}{\log\log N} \log \left( \frac{N \log\log N}{K} \right)$ for any (arbitrarily large) constant $C$. Then, since $K = O(N/ \log N)$ we can again lower bound the term by $\frac{C \log N}{\log\log N} \log \left( \log N \log\log N \right) = \frac{C \log N}{\log\log N} (\log\log N + \log\log\log N) \geq (1+q) \log N$, proving the equivalence.

With this scaling on $P(E_{\alpha_N})$, we then have that $\sum_{N=1}^{\infty} P(E_{\alpha_N}) \leq \infty$ and it follows from the Borel-Cantelli lemma that $\Pr[\limsup_{N \to \infty} E_{\alpha_N}] = 0$. Writing the $\limsup$ explicitly, we have
\[ \limsup_{N \to \infty} E_{\alpha_N} = \bigcup_{N=1}^\infty \bigcap_{M \geq N}^\infty E_{\alpha_M} = \left\{\forall N, \exists M \geq N~\mathrm{s.t.}~\frac{|S \setminus \hat{S}_M(\X,\Y)|}{K} \geq \alpha_M \right\},\]
thus with probability one $\limsup_{N \to \infty} \frac{|S \setminus \hat{S}_N(\X,\Y)|}{K} = 0$ and the theorem follows.

\subsection{Sparse Linear Regression Analysis}
\label{subsec:lcs_analysis}

\subsubsection*{Error exponent $E_N(\rho,b)$ and regularity conditions}

We now compute the error exponent $E_N(\rho,b)$ similar to \cite{aistats}, where $b = (\bone, \btwo)$ and $(\bone, \btwo) = (b_\Sone, b_\Stwo)$ where $|\Sone| = i$. While we can obtain an upper bound directly using Lemma \ref{lemma:P(Ei)} and the computed error exponent, we use this computation to show that the regularity conditions (RC1-2) hold for Theorem \ref{thm:scaling_ub}.

We can write 
\[ E_N(\rho,b) = -\log \left( \int_Y E_{X_\Stwo} \left[ E_{X_\Sone} \left[ p(Y | X_S, b)^{\frac{1}{1+\rho}} \right]^{1+\rho} \right] \dx Y \right) \]
and below we compute the inner expectation over $X_\Sone$.

\begin{align*}
\int_{X_\Sone} & P(X_\Sone) p(Y|X_\Sone,X_\Stwo, \bS)^{\frac{1}{1+\rho}} \dx X_\Sone = \int_{\mathbb{R}^i} {\cal N}\left( x; 0, \sigma_x^2 I_i \right) {\cal N}\left(y - x^\top \bone - x_2^\top \btwo; 0, \sigma_w^2 \right)^{\frac{1}{1+\rho}} \dx x \\
& = \left( \frac{1}{\sqrt{2\pi A}}\right)^i \left( \frac{1}{\sigma_w \sqrt{2\pi}} \right)^{\frac{1}{1+\rho}} \int_{\mathbb{R}^i} \exp \left( - \frac{x^\top x}{2 A} \right) \exp \left( - \frac{(y - x^\top \bone - x_2^\top \btwo)^2}{2B} \right) \dx x \\
& = \left( \frac{1}{\sqrt{2\pi A}}\right)^i \left( \frac{1}{\sigma_w \sqrt{2\pi}} \right)^{\frac{1}{1+\rho}} \int_{\mathbb{R}^i} \exp \left( - \frac{x^\top x}{2 A}  - \frac{(x^\top \bone + C)^2}{2B} \right) \dx x\\
& = \left( \frac{1}{\sqrt{2\pi A}}\right)^i \left( \frac{1}{\sigma_w \sqrt{2\pi}} \right)^{\frac{1}{1+\rho}} \int_{\mathbb{R}^i} \exp \left( - \frac{1}{2}(x + (BD)^{-1} AC \bone)^\top \frac{D}{A} (x + (BD)^{-1} AC \bone) \right) \exp \left( - \frac{C^2}{2E} \right) \dx x
\end{align*}
where $A = \sigma_x^2$, $B = \sigma_w^2 (1+\rho)$, $C = x_2^\top\btwo - y$, $D = I_i + \frac{A}{B} \bone \bone^\top$ and $E = \frac{B}{1 - \frac{A}{B} \bone^\top D^{-1} \bone}$. Then taking the integral, some terms on the left cancel and we have
\begin{equation} \label{eq:cs_pyx}
  \int_{X_\Sone} P(X_\Sone) p(Y|X_\Sone,X_\Stwo, \bS)^{\frac{1}{1+\rho}} \dx X_\Sone = \left( \frac{1}{\sigma_w \sqrt{2\pi}} \right)^{\frac{1}{1+\rho}} \frac{1}{\sqrt{|D|}} \exp \left( - \frac{C^2}{2E} \right).
\end{equation}

Writing the second integral that is over $X_\Stwo$, we then have
\begin{align*}
\int_{X_\Stwo} & P(X_\Stwo) \left[ \int_{X_\Sone} P(X_\Sone) p(Y|X_\Sone,X_\Stwo, \bS)^{\frac{1}{1+\rho}} \dx X_\Sone \right]^{1+\rho} \dx X_\Stwo \\  
& = \sqrt{\frac{1}{\sigma_w^2 2\pi}} \frac{1}{\sqrt{|D|}^{(1+\rho)}} \int_{\mathbb{R}^{K-i}} {\cal N}(x ; 0, A I_{K-i}) \exp \left( -\frac{(x^\top \btwo - y)^2}{2E'} \right) \dx x \\
& = \sqrt{\frac{1}{\sigma_w^2 2\pi}} \frac{1}{\sqrt{|D|}^{(1+\rho)}} \left(\frac{1}{\sqrt{2\pi A}}\right)^{K-i} \int_{\mathbb{R}^{K-i}} \exp\left( -\frac{x^\top x}{2A} - \frac{(x^\top \btwo - y)^2}{2E'}\right) \dx x \\
& = \sqrt{\frac{1}{\sigma_w^2 2\pi}} \frac{1}{\sqrt{|D|}^{(1+\rho)}} \left(\frac{1}{\sqrt{2\pi A}}\right)^{K-i} \int_{\mathbb{R}^{K-i}} \exp\left(-\frac{1}{2}(x - y (E' G)^{-1}A \btwo)^\top \frac{G}{A} (x - y (E' G)^{-1}A \btwo)\right) \exp \left( -\frac{y^2}{2H} \right) \dx x
\end{align*}
where $E' = \frac{E}{1+\rho}$, $G = 1 + \frac{A}{E'} \btwo \btwo^\top$ and $H = \frac{E'}{1 - \frac{A}{E'} \btwo^\top G^{-1} \btwo}$. Again, evaluating the integral, we obtain
\[ \int_{X_\Stwo} P(X_\Stwo) \left[ \int_{X_\Sone} P(X_\Sone) p(Y|X_\Sone,X_\Stwo, \bS)^{\frac{1}{1+\rho}} \dx X_\Sone \right]^{1+\rho} \dx X_\Stwo = \frac{1}{\sigma_w \sqrt{2\pi}} \frac{1}{\sqrt{|D|}^{(1+\rho)}} \frac{1}{\sqrt{|G|}} \exp\left(-\frac{y^2}{2H}\right). \]

Integrating the above expression w.r.t.\ $Y = y$, we see that
\[ \int_{Y} \int_{X_\Stwo} P(X_\Stwo) \left[\int_{X_\Sone} P(X_\Sone) p(Y|X_\Sone,X_\Stwo, \bS)^{\frac{1}{1+\rho}} \dx X_\Sone \right]^{1+\rho} \dx X_\Stwo \dx Y = \frac{1}{\sigma_w \sqrt{|D|}^{(1+\rho)}} \sqrt{\frac{H}{|G|}}. \] 

By the matrix determinant lemma, we have $|D| = 1 + \frac{A}{B} \bone^\top \bone$ and by the Sherman-Morrison formula, $D^{-1} = I_i - \frac{\bone \bone^\top}{\frac{B}{A} + \bone^\top \bone}$. Similarly, $|G| = 1 + \frac{A}{E'} \btwo^\top \btwo$ and $G^{-1} = I_i - \frac{\btwo \btwo^\top}{\frac{E'}{A} + \btwo^\top \btwo}$. By plugging in these expressions, we can then see that $E' = \frac{B |D|}{1+\rho}$ and $H = E' |G|$. We simplify the above expression to obtain
\begin{equation} \label{eq:cs_beta}
  \frac{1}{\sigma_w \sqrt{|D|}^{(1+\rho)}} \sqrt{\frac{H}{|G|}} = \frac{1}{\sigma_w \sqrt{|D|}^{(1+\rho)}} \sqrt{\frac{B |D|}{1+\rho}} = \left( \frac{1}{\sqrt{|D|}} \right)^\rho = \left( 1 + \frac{\sigma_x^2 \bone^\top \bone}{(1+\rho) \sigma_w^2}\right)^{-\rho/2}.
\end{equation}
and therefore letting $C_N = \frac{i \sigma_x^2}{\sigma_w^2}$ and $s = \frac{\|\bone\|^2}{i}$, we have
\begin{equation}\label{eq:Eo_lcs}
  E_N(\rho,b) = E_N(\rho,s) = \frac{\rho}{2} \log \left( 1 + \frac{C_N s}{1+\rho}\right).
\end{equation}

From above, it is now obvious that (RC1) is satisfied since $s \in {\cal C} = [\bmin, \bmax]$, which is a finite interval independent of $N$. It also follows from straightforward algebra that
\[ F_N(\rho,s) = \frac{\log\left(1 + \frac{C_N s}{1+\rho}\right) - \frac{\rho}{(1+\rho)^2}\frac{C_N s}{1+\frac{C_N s}{1+\rho}}}{\log(1 + C_N s)}. \]

We assume $C_N$ is monotonic with $N$ and consider three cases for the limit $F_\infty(\rho,s)$: $C_N \to 0$, $C_N \to \infty$ or $C_N = c$. In the first case we have $F_\infty(\rho,s) = \frac{1}{(1+\rho)^2}$ and in the second case $F_\infty(\rho,s) = 1$. In all three cases it is obvious that $F_\infty(\rho,s)$ is continuous in $\rho$ for all $s$. To prove uniform convergence in $s$, we claim that $F_N(\rho,s)$ is a monotonically non-decreasing or non-increasing sequence for each $\rho$ and $s$. For the first case, for large enough $N$ we observe that $F_N(\rho,s) \approx \frac{\frac{C_N s}{1+\rho} - \frac{\rho C_N s}{(1+\rho)(1+\rho+C_N s)}}{C_N s} = \frac{1+C_N s}{(1+\rho)(1+\rho+C_N s)}$, which is monotone in $N$. For the second case, for large enough $N$ $F_N(\rho,s) \approx \frac{\log(C_N s) - \log(1+\rho) - \frac{\rho}{1+\rho}}{\log(C_N s)} = 1 - \frac{1}{(1+\rho)C_N s}$, which is also monotone in $N$. For the third case $F_N(\rho,s)$ is constant w.r.t.\ $N$ and thus also monotone. Then by Dini's Theorem we have that $F_N(\rho,s)$ converges uniformly in $s$, proving that (RC2) holds.

\subsection{Proof of Theorem \ref{thm:missing_cs}}

Let $(\Sone, \Stwo) \triangleq (\SSt, \St)$ and for $|\Sone| = i$, define $Z_1 = Z_\Sone$ and $Z_2 = Z_\Stwo$. For simplicity of exposition, we will assume the worst-case support with any $\bS = b$ such that $b_k^2 = b_{\min}$ for $k \in S$, however the results can be generalized to random $\bS$ similar to the proof of Theorem \ref{thm:lcs}. We also omit the explicit conditioning on $S$ and $b$ in the expressions below.

To prove the theorem, we will obtain a lower bound on $I(Z_1 ; Y | Z_2) = h(Y | Z_2) - h(Y | Z_1, Z_2)$. Let $M_1 = \{k \in \Sone: X_k = m\}$ and $M_2 = \{k \in \Stwo: X_k = m\}$ denote the set of missing features in each set, then it simply follows that $Z_1 = (\{X_k\}_{k \in \Sone \cap M_1^c}, M_1)$ and $Z_2 = (\{X_k\}_{k \in \Stwo \cap M_2^c}, M_2)$.

We will start by proving an upper bound on $h(Y | Z_1, Z_2)$. We have,
\begin{align*}
  h(Y | Z_1, Z_2) & = E_{Z_1,Z_2}\left[ h( X_S^\top b + W | Z_1, Z_2) \right] \\
  & = E_{X_{\Sone,M_1^c}, X_{\Stwo,M_2^c}, M_1, M_2} \left[ h(X_S^\top b + W | X_{\Sone,M_1^c}, X_{\Stwo,M_2^c}, M_1, M_2) \right] \\
  & = E_{M_1, M_2} \left[ h(X_{M_1}^\top b_{M_1} + X_{M_2}^\top b_{M_2} + W | M_1, M_2) \right] \\
  & = E_{M_1, M_2} \left[ \frac{1}{2}\log\left(2 \pi e b_{\min} \left[ \sigma_x^2|M_1| + \sigma_x^2|M_2| + \frac{\sigma_w^2}{b_{\min}} \right] \right) \right] \\
  & \leq E_{M_2} \left[ \frac{1}{2}\log\left(2 \pi e b_{\min} \left[ \sigma_x^2 E_{M_1}[|M_1|] + \sigma_x^2 |M_2| + \frac{\sigma_w^2}{b_{\min}} \right] \right) \right] \\
  & = E_{M_2} \left[ \frac{1}{2}\log\left(2 \pi e b_{\min} \left[ \sigma_x^2 i \rho + \sigma_x^2 |M_2| + \frac{\sigma_w^2}{b_{\min}} \right] \right) \right].
\end{align*}
The first two equalities follow by expanding $Y$ and $Z_1$, $Z_2$. The third equality follows by subtracting the known quantities related to $X_{\Sone,M_1^c}, X_{\Stwo,M_2^c}$ from the entropy expression. The fourth equality follows by noting that the variable inside the entropy conditioned on $M_1$ and $M_2$ is Gaussian and then computing its variance. We use Jensen's inequality over $M_1$ by noting that $\log$ is a concave function to obtain the inequality. We then note that $|M_1|$ is a binomially distributed random variable with expectation $i \rho$.

Similar to what we did for $h(Y | Z_1, Z_2)$, we can also write
\[ h(Y | Z_2) = E_{M_2} \left[ h( X_\Sone^\top b_\Sone + X_{M_2}^\top b_{M_2} + W | M_2) \right] = E_{M_2} \left[ \frac{1}{2} \log\left(2 \pi e b_{\min} \left[ \sigma_x^2 i + \sigma_x^2 |M_2| + \frac{\sigma_w^2}{b_{\min}} \right] \right) \right]. \]

Combining the two entropy expressions, we then have
\begin{align*}
  I(Z_1 ; Y | Z_2) & \geq E_{M_2} \left[ \frac{1}{2} \log\left(2 \pi e b_{\min} \left[ \sigma_x^2 i + \sigma_x^2 |M_2| + \frac{\sigma_w^2}{b_{\min}} \right] \right) - \frac{1}{2}\log\left(2 \pi e b_{\min} \left[ \sigma_x^2 i \rho + \sigma_x^2 |M_2| + \frac{\sigma_w^2}{b_{\min}} \right] \right) \right] \\
  & = E_{M_2} \left[ \frac{1}{2} \log\left( \frac{i + |M_2| + \frac{\sigma_w^2}{\sigma_x^2 b_{\min}}}{i\rho + |M_2| + \frac{\sigma_w^2}{\sigma_x^2 b_{\min}}} \right) \right] = E_{M_2} \left[ \frac{1}{2} \log\left(1 + \frac{(1-\rho) i}{i \rho + |M_2| + \frac{\sigma_w^2}{\sigma_x^2 b_{\min}}} \right) \right] \\
  & \geq \frac{1}{2} \log\left(1 + \frac{(1-\rho) i}{i \rho + E_{M_2}[|M_2|] + \frac{\sigma_w^2}{\sigma_x^2 b_{\min}}} \right) = \frac{1}{2} \log\left(1 + \frac{(1-\rho) i}{i \rho + (K-i)\rho + \frac{\sigma_w^2}{\sigma_x^2 b_{\min}}} \right) \\
  & = \frac{1}{2} \log\left(1 + \frac{(1-\rho) i}{K\rho + \frac{\sigma_w^2}{\sigma_x^2 b_{\min}}} \right) = \frac{1}{2} \log\left(1 + (1-\rho) \frac{i b_{\min} \sigma_x^2}{\sigma_w^2 + \rho K b_{\min} \sigma_x^2} \right).
\end{align*}
Note that the expression above reduces to the expression for the fully observed case for $\rho = 0$.

Now consider the low SNR setup of \cite{wang,Rad-11}, where $\sigma_x = \sigma_w = 1$, $\bmin = 1/K$. Then, for the mutual information we have,
\[ I(Z_1 ; Y | Z_2) \geq \frac{1}{2} \log\left(1 + \frac{1-\rho}{1+\rho} \frac{i}{K} \right) = \Omega\left(\log\left(1 + \frac{1-\rho}{1+\rho} \right) \frac{i}{K} \right), \]
where the last equivalence can be shown by considering the two regimes $i/K = o(1)$ and $i/K = \Theta(1)$ separately. It then follows that $T = \Omega\left(\max_i \frac{i \log(N/i)}{\frac{i}{K} \log\left(1 + \frac{1-\rho}{1+\rho} \right)}\right) = \Omega\left(\frac{K \log N}{\log\left(1 + \frac{1-\rho}{1+\rho}\right)}\right)$ is sufficient for exact recovery for correlated $\bS$ and $K=O(N)$ or sufficient for recovery with a vanishing fraction of support errors for IID $\bS$ and $K = O(N/\log N)$, similar to the results in Section \ref{subsec:lcs} and \ref{subsec:correlated}.

\subsection{Proof of Theorem \ref{thm:qcs}}

In order to obtain the model-specific bounds, for a subset $|\St| = K-i$ we analyze the mutual information term $I_{\St,0}$ which is lower bounded by $I_\St(b)$ for any realization of $b \in \{-\sqrt{\bmin},\sqrt{\bmin}\}^K$.
Therefore, w.l.o.g.\ we consider $I_\St(b)$ for $b = \sqrt{\bmin} 1_K$ and omit the conditioning on $\bS = b$ and $S$ for brevity.

We write the mutual information term as
\begin{align*}
  I(X_\SSt ; Y | X_\St) = H(Y | X_\St) - H(Y | X_S)
\end{align*}
where we will analyze $H(Y| X_\St)$ and $H(Y | X_S)$ to obtain a lower bound for the mutual information expression.

Defining $Z_1 = \sqrt{\bmin} \sum_{j \in \SSt} X_j$, $Z_2 = \sqrt{\bmin} \sum_{j \in \St} X_j$ and $Z = Z_1 + Z_2$, we have $H(Y|X_\St) = H(Y|Z_2)$ since the quantizer input $X\beta + W$ depends only on the sum of the elements of $X_S$. Note that $Z_1 \sim{\cal N} (0,C_1^2)$ with $C_1^2 = \bmin i$, $Z_2 \sim{\cal N} (0,C_2^2)$ with $C_2^2 = \bmin (K-i)$. Now we explicitly write the conditional entropy
\begin{align}
  H(Y|Z_2) = \int_{-\infty}^{\infty} P_{Z_2}(z) H(Y|Z_2=z) \dx z = \int_{-\infty}^{\infty} P_{Z_2}(z) \left( p_1 \log \frac{1}{p_1} + p_0 \log \frac{1}{p_0} \right) \dx z
\end{align}
with $p_1 \triangleq \Pr[Y=1 | Z_2=z]$ and $p_0 \triangleq 1-p_1 = \Pr[Y=0 | Z_2=z]$, which can be written as
\begin{align*}
  p_1 = \Pr\left[ Z_1 + Z_2 + W \geq 0 \Big\vert Z_2=z \right] = \Pr\left[ Z_1 + W \geq -z \right] = \Pr\left[{\cal N} (0,E^2) \geq -z \right] = \mathcal{Q}\left( \frac{-z}{E} \right) \\
  p_0 = \Pr\left[  Z_1 + Z_2 + W < 0 \Big\vert Z_2=z \right] = \Pr\left[ Z_1 + W < -z \right] = \Pr\left[{\cal N} (0,E^2) < -z \right] = \mathcal{Q}\left( \frac{z}{E} \right)
\end{align*}
where $E^2 = \bmin i + 1$ and the $\mathcal{Q}$ function defined as $\mathcal{Q}(x) = \int_x^\infty \frac{1}{\sqrt{2\pi}} e^{-\frac{\tau^2}{2}} \dx \tau$. 

To lower bound $H(Y|Z)$, we make use of the following inequalities for $x > 0$ \cite{q1,q2}:
\begin{align}
  \frac{1}{12} \, e^{-x^2} \leq & \; \mathcal{Q}(x) \leq \frac{1}{2} e^{-\frac{x^2}{2}} \\
  \log 2 +\frac{x^2}{2} \leq \log(2e^{\frac{x^2}{2}}) \leq & \log \frac{1}{\mathcal{Q}(x)} \leq \log 12 + x^2.
\end{align}

Then, we write the following chain of inequalities:
\begin{align}
  H(Y|Z_2) & = 2 \int_{0}^{\infty} P_{Z_2}(z) \left( p_1 \log \frac{1}{p_1} + p_0 \log \frac{1}{p_0} \right) \dx z \label{eq:exQCS1}\\
  & \geq 2 \int_{0}^{\infty} P_{Z_2}(z) \cdot p_0 \log \frac{1}{p_0} \dx z \label{eq:exQCS2} \\
  & \geq 2 \int_{0}^{\infty} \frac{1}{\sqrt{2\pi C_2^2}} \cdot e^{-\frac{z^2}{2C_2^2}} \cdot \frac{1}{12} \cdot e^{-\frac{z^2}{E^2}} \cdot \left( \log 2 + \frac{z^2}{2E^2}\right) \dx z \label{eq:exQCS3} \\
  & = \frac{1}{12 \sqrt{2\pi} C_2} \int_{-\infty}^{\infty} e^{-A \frac{z^2}{2}} \cdot \left( \log 2 + \frac{z^2}{2E^2}\right) \dx z \label{eq:exQCS4} \\
  & = \frac{1}{12 \sqrt{2\pi} C_2} \left( \log 2 \frac{\sqrt{2\pi}}{\sqrt{A}} + \frac{\sqrt{\pi/2}}{A^{3/2} E^2} \right) = \frac{\log 2}{12 \sqrt{A} C_2} + \frac{1}{24 A^{3/2} C_2 E^2}\label{eq:exQCS5}
\end{align}
Equality \eqref{eq:exQCS1} follows from the evenness of the function inside the integral and we write \eqref{eq:exQCS2} by noting that $p_1 \log \frac{1}{p_1}$ and $P_{Z_2}(z)$ are non-negative. $P_{Z_2}(z)$ is expanded and the bounds above for the $\mathcal{Q}$ function are used to obtain \eqref{eq:exQCS3} and \eqref{eq:exQCS4} is a regrouping of terms by defining $A = \frac{1}{C_2^2} + \frac{2}{E^2}$ and rewriting the limits of the integral by noting that the integrand is an even function. We obtain \eqref{eq:exQCS5} by evaluating the integral. For $A$, we have
\begin{equation*}
  A = \frac{1}{\bmin (K-i)} + \frac{2}{\bmin i + 1} = s \frac{2K-i+s}{(i+s)(K-i)}
\end{equation*}
where we define $s \triangleq \frac{1}{\bmin}$ and replacing $A$, $C_2$ and $E$, we can then write
\begin{align}
  H(Y|X_\St) = H(Y|Z_2) & \geq \frac{\log 2}{12} \frac{\sqrt{i+s}}{\sqrt{2K-i+s}} + \frac{1}{24} \frac{\sqrt{i+s} (K-i)}{(2K-i+s)^{3/2}} \nonumber \\
  & = \frac{\sqrt{i+s}}{(2K-i+s)^{3/2}} \left( \frac{\log 2}{12} (2K-i+s) + \frac{1}{24} (K-i) \right) \\
  & \geq \frac{\log 2}{12} \frac{\sqrt{i+s}}{\sqrt{2K-i+s}}. \label{eq:exQCS_h1}  
\end{align}

We now analyze the second term $H(Y|X_S)$ to obtain an upper bound. Again, note that $H(Y|X_S) = H(Y|Z)$, then
\begin{equation*}
  H(Y|Z) = \int_\infty^\infty P_Z(z) H(Y|Z=z) \dx z = \int_\infty^\infty P_Z(z) \left(p_1 \log\frac{1}{p_1} + p_0 \log\frac{1}{p_0} \right)
\end{equation*}
where this time we define $p_1 \triangleq \Pr[Y=1|Z=z]$ and $p_0 \triangleq \Pr[Y=0|Z=z]$, which can be written as
\begin{equation*}
  p_1 = \Pr[Z+W \geq 0 | Z=z] = \Pr[W \geq -z] = \Pr[{\cal N}(0,1) \geq -z] = Q(-z)
\end{equation*}
\begin{equation*}
  p_0 = \Pr[Z+W < 0 | Z=z] = \Pr[W < -z] = \Pr[{\cal N}(0,1) < -z] = Q(z).
\end{equation*}

Then, write the following chain of inequalities:
\begin{align}
  H(Y|Z) & = 2 \int_0^\infty P_Z(z) \left( p_1 \log\frac{1}{p_1} + (1-p_1)\log\frac{1}{1-p_1}\right) \dx z \label{eq:exQCS6} \\
  & \leq 4 \int_0^\infty P_Z(z) \left( p_1 \log\frac{1}{p_1} \right) \dx z \label{eq:exQCS7} \\
  & \leq 4 \int_0^\infty \frac{1}{\sqrt{2\pi K}} e^{-\frac{z^2}{2K}} \frac{1}{2} e^{-\frac{z^2}{2}} \left( \log 12 + \frac{z^2}{2} \right) \dx z \label{eq:exQCS8} \\
  & = \frac{1}{\sqrt{2\pi K}} \int_{-\infty}^\infty e^{-B \frac{z^2}{2}} \left( \log 12 + \frac{z^2}{2} \right) \dx z \label{eq:exQCS9} \\
  & = \frac{\sqrt{s} \log 12}{\sqrt{2\pi K}} \frac{\sqrt{2\pi}}{\sqrt{B}} + \frac{\sqrt{s}}{2 \sqrt{2\pi K}} \frac{\sqrt{2\pi}}{B^{3/2}} = \frac{\sqrt{s} \log 12}{\sqrt{B K}} + \frac{\sqrt{s}}{2 \sqrt{K} B^{3/2}} \label{eq:exQCS10}
\end{align}
Equality \eqref{eq:exQCS6} follows from the evenness of the function inside the integral and we write \eqref{eq:exQCS7} by noting that $p \log \frac{1}{p} \geq (1-p) \log \frac{1}{1-p}$ for $0 \leq p \leq \frac{1}{2}$. $P_Z(z)$ is expanded and the above bounds for the $\mathcal{Q}$ function are used to obtain \eqref{eq:exQCS8} and \eqref{eq:exQCS9} is a regrouping of terms by defining $B = \frac{s}{K} + 1$ and rewriting the limits of the integral by noting that the integrand is an even function. We obtain \eqref{eq:exQCS5} by evaluating the integral. Replacing $B$, we then have
\begin{equation}
  H(Y|X_S) = H(Y|Z) \leq \log 12 \frac{\sqrt{s}}{\sqrt{K+s}} + \frac{1}{2} \frac{K \sqrt{s}}{(K+s)^{3/2}} \leq 2 \log 12 \frac{\sqrt{s}}{\sqrt{K+s}}. \label{eq:exQCS_h2}
\end{equation}

Looking at \eqref{eq:exQCS_h1} and \eqref{eq:exQCS_h2}, we have the following:
\begin{equation}
  I(X_\SSt ; Y | X_\St) \geq \frac{ \frac{\log 2}{12} \sqrt{i+s} - 2 \log 12 \sqrt{s} }{\sqrt{2K-i+s}},
\end{equation}
which is positive for all $i$ for a sufficiently large constant $\sqrt{\bmin} \approx 86$ with minimum occurring at $i = 1$. For large enough constant $\bmin$, we can then write $I(X_\SSt ; Y | X_\St) = \Omega(\sqrt{i/K})$.

Finally, since $\log\binom{N-K}{i}\binom{K}{i} = \Theta(i \log N)$, we have
\begin{equation*}
  \frac{\log\binom{N-K}{i}\binom{K}{i}}{I(X_\SSt ; Y | X_\St)} = O \left( \frac{i \log N}{\sqrt{i/K}} \right) = O(K \log N),
\end{equation*}
which is satisfied by $T = \Omega(K \log N)$, proving Theorem \ref{thm:qcs}.

\subsection{Proof of Theorem \ref{thm:group_testing}}

We start by considering the error exponent $E_N(\rho)$ for $(\Sone, \Stwo) \triangleq (\SSt, \St)$ for $|\SSt| = i \in \{1,\ldots,K\}$ and define $\alpha = i/K$ (not necessarily constant). We remark that $\bS$ does not exist for this problem, thus we can ignore the R\'enyi entropy term and the minimization over $b$.
\[ E_N(\rho) = -\log \left( \sum_Y \sum_{X_\Stwo} Q(X_\Stwo) \left[ \sum_{X_\Sone} Q(X_\Sone) p(Y | X_\Sone, X_\Stwo)^{\frac{1}{1+\rho}} \right]^{1+\rho} \right). \]
Notice that the observation model depends only on $Z_1 \triangleq \bigvee_{k \in \Sone} X_k$ and $Z_2 \triangleq \bigvee_{k \in \Stwo} X_k$ and $Z_1$ and $Z_2$ are Bernoulli random variables with parameters $p_1 = 1 - (1-1/K)^{\alpha K}$ and $p_2 = 1 - (1-1/K)^{(1-\alpha)K}$, respectively. Thus, we can rewrite the expression above as
\[ E_N(\rho) = -\log \left( \sum_Y \sum_{Z_2} P(Z_2) \left[ \sum_{Z_1} P(Z_1) p(Y | Z_1, Z_2)^{\frac{1}{1+\rho}} \right]^{1+\rho} \right), \]
where $p(Y | Z_1, Z_2) = 1$ if $Y = Z_1 \vee Z_2$ and 0 otherwise. Since this is a binary function, it is not affected by the exponentiation with $\frac{1}{1+\rho}$ and we can further simplify the expression above as
\[ E_N(\rho) = -\log \left( \sum_Y \sum_{Z_2} P(Z_2) \left[ \sum_{Z_1} P(Z_1) 1\{Y = Z_1 \vee Z_2\} \right]^{1+\rho} \right). \]

For the realization $y = 0$, the inner sum turns out to be $(1-p_2)(1-p_1)^{1+\rho}$, while for the realization $y=1$ we obtain $p_2 + (1-p_2) p_1^{1+\rho}$. We can then write the error exponent to be exactly $E_N(\rho) = -\log \left( p_2 + (1-p_2)\left[p_1^{1+\rho} + (1-p_1)^{1+\rho}\right] \right)$.
Notice that for large enough $K$, $p_1$ behaves as $1-e^{-\alpha}$, while $p_2$ behaves as $1-e^{-(1-\alpha)}$.

With this asymptotic consideration, and by letting $\rho = 1$, we obtain $e^{-(1-\alpha)}(1-e^{-\alpha})^2 + e^{-(1+\alpha)} + 1 - e^{-(1-\alpha)}$. With some algebra, we see that this is equal to $1 - \frac{2}{e} + \frac{2}{e}e^{-\alpha}$, thus we have $E_N(1) = \Theta(\alpha)$, for both $\alpha = \Theta(1)$ and $\alpha = o(1)$.

Now we can simply show that $T f_N(\rho) = T E_N(\rho) - \rho \log\binom{N-K}{i} - \log\binom{K}{i} - \log K \to \infty$ for $T = c K \log N$. For $\rho = 1$, it follows that $T f_N(1) = c~ i \log N - \Theta(i \log(N/i)) - \Theta(i \log N) - \log K \to \infty$ for a large enough constant $c$, thus proving the upper bound for $K = O(N)$. The lower bound can be obtained by noting that $I_\St \leq H(Y) \leq 1$ as for the binary regression case.

\bibliographystyle{IEEEtran}
\bibliography{references}

\begin{IEEEbiographynophoto}{Cem Aksoylar}
  received the B.Sc.\ degree in electronics engineering from Sabanci University, Istanbul, Turkey in 2010. He is currently working towards the Ph.D.\ degree in electrical and computer engineering in Boston University, MA. His research interests include statistical signal processing, high dimensional learning problems and inference and optimization over networks.
\end{IEEEbiographynophoto}

\begin{IEEEbiographynophoto}{George K.~Atia}
(S'01--M'09) received the B.Sc.\ and M.Sc.\ degrees from Alexandria University, Egypt, in 2000 and 2003, respectively, and the Ph.D.\ degree from Boston University, MA, in 2009, all in electrical and computer engineering.

He joined the University of Central Florida in Fall 2012, where he is currently an assistant professor and a Charles N.\ Millican Faculty Fellow in the Department of Electrical and Computer Engineering. From Fall 2009 to 2012, he was a postdoctoral research associate at the Coordinated Science Laboratory (CSL) at the University of Illinois at Urbana-Champaign (UIUC). His research interests include statistical signal processing, machine learning, stochastic control, wireless communications, detection and estimation theory, and information theory. 

Dr.\ Atia is the recipient of many awards, including the NSF CAREER Award in 2016, the Outstanding Graduate Teaching Fellow of the Year Award in 2003--2004 from the Electrical and Computer Engineering Department at Boston University, the 2006 College of Engineering Dean's Award at the BU Science and Engineering Research Symposium, and the best paper award at the International Conference on Distributed Computing in Sensor Systems (DCOSS) in 2008.
\end{IEEEbiographynophoto}

\begin{IEEEbiographynophoto}{Venkatesh Saligrama}
  is a faculty member in the Department of Electrical and Computer Engineering and Department of Computer Science (by courtesy) at Boston University. He holds a Ph.D.\ from MIT. His research interests are in statistical signal processing, machine learning and computer vision, information and decision theory. He has edited a book on Networked Sensing, Information and Control. He has served as an Associate Editor for IEEE Transactions on Information Theory, IEEE Transactions on Signal Processing and has been on Technical Program Committees of several IEEE conferences. He is the recipient of numerous awards including the Presidential Early Career Award (PECASE), ONR Young Investigator Award, and the NSF Career Award. More information about his work is available at http://sites.bu.edu/data.
\end{IEEEbiographynophoto}

\end{document}